\newtheorem{example}{Example}[section]
\newcommand{\random}{\xleftarrow{\$}}
\renewcommand{\class}[1]{{\ensuremath{\mathsf{#1}}}}
\newcommand{\query}{\ensuremath{\class{Query}}\xspace}
\newcommand{\enc}{\ensuremath{\class{Enc}}\xspace}
\newcommand{\Enc}{\ensuremath{\class{Enc}}\xspace}
\newcommand{\dec}{\ensuremath{\class{Dec}}\xspace}
\newcommand{\vect}[1]{\ensuremath{\textbf{#1}}}
\newcommand{\JC}{\cc{JC}}
\newcommand{\Score}{\cc{Score}}
\newcommand{\vx}{\vect{x}}
\newcommand{\vc}{\vect{c}}
\newcommand{\vq}{\vect{q}}
\newcommand{\valpha}{{\bm \alpha}}
\newcommand{\ignore}[1]{}
\newcommand{\vB}{\vect{B}}
\newcommand{\vE}{\vect{E}}
\newcommand{\F}[1]{\mathcal{F}_{\textbf{#1}}}
\renewcommand{\A}{\mathcal{A}}
\renewcommand{\L}{\mathcal{L}}
\renewcommand{\P}{\mathcal{P}}
\newcommand{\Z}{\mathbb{Z}}
\renewcommand{\cc}[1]{\class{#1}}
\newcommand{\setup}{\class{Setup}}
\newcommand{\token}{\class{Token}}
\newtheorem{theorem}{Theorem}[section]
\newtheorem{lemma}[theorem]{Lemma}
\newtheorem{definition}[theorem]{Definition}
\newtheorem{protocol}{Protocol}[section]
\newcommand{\bool}{\ensuremath{\{0,1\}}}
\newcommand{\larr}{\ensuremath{\leftarrow}}
\newcommand{\assign}{:=}
\newcommand{\angles}[1]{\langle #1 \rangle}
\renewcommand{\C}{\ensuremath{\mathcal{C}}}
\renewcommand{\E}{\ensuremath{\mathcal{E}}}
\renewcommand{\D}{\ensuremath{\textbf{D}}}
\newcommand{\out}{\ensuremath{\mathsf{Output}}}
\newcommand{\sk}{\ensuremath{sk}}
\newcommand{\pk}{\ensuremath{pk}}
\newcommand{\cqa}{\ensuremath{{\sf CQA}}}
\newcommand{\tk}{\ensuremath{\mathsf{tk}}}
\renewcommand{\pk}{\ensuremath{{\sf pk}}}
\renewcommand{\sk}{\ensuremath{{\sf sk}}}
\newcommand{\ER}{\ensuremath{\mathsf{ER}}}
\newcommand{\EP}{\cc{EP}}
\newcommand{\Real}{\ensuremath{\mathbf{Real}}}
\newcommand{\Ideal}{\ensuremath{\mathbf{Ideal}}}
\renewenvironment{proof}{\textbf{Proof:}}{\hfill$\blacksquare$}
\newcommand{\EHL}{\cc{EHL}}
\newcommand{\eEHL}{\ensuremath{\cc{EHL}^+}}
\newcommand{\EncSort}{\cc{EncSort}}
\newcommand{\EncCompare}{\cc{EncCompare}}
\newcommand{\SecDedup}{\cc{SecDedup}}
\newcommand{\SecWorst}{\cc{SecWorst}}
\newcommand{\SecBest}{\cc{SecBest}}
\newcommand{\SecJoin}{\cc{SecJoin}}
\newcommand{\StripEnc}{\cc{StripEnc}}
\newcommand{\RecoverEnc}{\cc{RecoverEnc}}
\newcommand{\bitminus}{\ensuremath{\ominus}}
\newcommand{\En}[2][]{\enc_{#1}(#2)}
\newcommand{\Etwo}[1]{\ensuremath{\E^2\big(#1\big)}}
\newcommand{\hmac}{\ensuremath{\texttt{HMAC}}}
\renewenvironment{proof}{\emph{Proof:}}{\hfill$\blacksquare$}
\newlength{\protowidth} \protowidth \linewidth
\newcommand{\bottom}[1]{\underline{#1}}
\newcommand{\orand}{\ensuremath{\tilde{o}}}
\newcommand{\Wrand}{\ensuremath{\widetilde{W}}}
\newcommand{\Brand}{\ensuremath{\widetilde{B}}}
\newcommand{\Irand}{\ensuremath{\widetilde{I}}}
\newcommand{\Ihat}{\ensuremath{\hat{I}}}
\newcommand{\ohat}{\ensuremath{\hat{o}}}
\newcommand{\Bhat}{\ensuremath{\widehat{B}}}
\newcommand{\What}{\ensuremath{\widehat{W}}}
\newcommand{\secjoin}{\ensuremath{\bowtie_{\cc{sec}}}}
\newcommand{\sql}[1]{{\small \texttt{#1}}}
\newcommand{\SecFilter}{\cc{SecFilter}}
\newcommand{\Rand}{\cc{Rand}}
\newcommand{\srand}{\ensuremath{\tilde{s}}}
\newcommand{\Xrand}{\ensuremath{\widetilde{X}}}
\newcommand{\Trand}{\ensuremath{\widetilde{T}}}
\newcommand{\rrand}{\ensuremath{\widetilde{r}}}
\newcommand{\Rrand}{\ensuremath{\widetilde{R}}}
\newcommand{\sund}{\ensuremath{\overline{s}}}
\newcommand{\Xund}{\ensuremath{\overline{X}}}
\newcommand{\SIM}{\cc{Sim}}
\newcommand{\SecDupElim}{\cc{SecDupElim}}
\newcommand{\patients}{\texttt{patients}} 
\newcommand{\topk}{\cc{topk}}
\newcommand{\Fideal}{\ensuremath{\F{\topk}}}
\def\shownotes{1}
\mathchardef\mhyphen="2D
\newcommand{\authnote}[2]{{\textcolor{red}{\textsf{#1 Notes: }\textcolor{blue}{ #2}}}}
\newcommand{\authnote}[2]{}
\newcommand{\SecTopK}{\cc{SecTopK}}
\newcommand{\Token}{\cc{Token}}
\newcommand{\SecQuery}{\cc{SecQuery}}
\begin{document}

\title{\bf Top-k Query Processing on Encrypted Databases with Strong Security Guarantees}
\author[1]{Xianrui Meng\thanks{Amazon AWS \texttt{xmeng@cs.bu.edu}} \,\,\,\, Haohan Zhu\thanks{ Facebook \texttt{zhu@cs.bu.edu}} \,\,\,\,  George Kollios\thanks{Boston University \texttt{gkollios@cs.bu.edu}}}
%\author[2]{Haohan Zhu\thanks{Email:\texttt{zhu@cs.bu.edu}}}
%\author[3]{George Kollios\thanks{Email:\texttt{gkollios@cs.bu.edu}}}

\date{}
\maketitle

\begin{abstract}
Privacy concerns in outsourced cloud databases have become more and 
more important recently and many efficient and scalable query processing
methods over encrypted data have been proposed.
However, there is very limited work on how to securely process top-$k$ 
ranking queries over encrypted databases in the cloud.
In this paper, we focus exactly on this problem: secure and efficient 
processing of top-$k$ queries over outsourced databases.
In particular, we propose the first efficient and provable secure 
top-$k$ query processing construction that achieves adaptively $\cqa$ security. 
We develop an encrypted data structure called {\EHL} and describe 
several secure sub-protocols under our security model to answer top-$k$ queries.
Furthermore, we optimize our query algorithms for both space and time efficiency.  
Finally, in the experiments, we empirically analyze our protocol using real world
datasets and demonstrate that our construction is efficient and practical.
\end{abstract}

\sloppy
%========================== Abstract =========================%
\section{Introduction}
As  remote storage and cloud computing services emerge, such 
as Amazon's EC2, Google AppEngine, and Microsoft's Azure, many
enterprises, organizations, and end users may outsource their 
data to those cloud service providers for reliable maintenance, 
lower cost, and better performance. In fact, a number of 
database systems on the cloud have been developed recently 
that offer high availability and flexibility at relatively 
low costs. However, despite these benefits, there are still 
a number of reasons that make many users to  refrain from 
using these services, especially users with sensitive and 
valuable data. Undoubtedly, the main issue for this is related
to security and privacy concerns~\cite{cloudsecurity11}.
Indeed, data owner and clients may not fully trust a public 
cloud since some of hackers, or the cloud's administrators 
with root privilege can fully access all data for any purpose. 
Sometimes the cloud provider may sell its business to an 
untrusted company, which will have full access to the data.
One approach to address these issues is to encrypt the data
before outsourcing them to the cloud. For example, 
electronic health records (EHRs) should be encrypted before
outsourcing in compliance with regulations like 
HIPAA\footnote{\small HIPAA is the federal Health Insurance
Portability and Accountability Act of 1996.}. 
Encrypted data can bring an enhanced security into the 
Database-As-Service environment~\cite{HacigumusILM02}. 
However, it also introduces significant difficulties 
in querying and computing over these data.

Although top-$k$ queries are important query types in many database 
applications~\cite{IlyasBS08}, to the best of our knowledge, none of 
the existing works handle the top-$k$ queries securely and efficiently. 
Vaiyda et. al.~\cite{DBLP:conf/icde/VaidyaC05} studied privacy-preserving 
top-$k$ queries in which the data are vertically partitioned instead of 
encrypting the data. 
Wong et. al.~\cite{Wong2009} proposed an encryption scheme for knn queries 
and mentioned a method of transforming their scheme to solve top-$k$ 
queries, however, as shown in~\cite{DBLP:conf/icde/0002LX13}, their 
encryption scheme is not secure and is vulnerable to chosen plaintext attacks. 
Vaiyda et. al.~\cite{DBLP:conf/icde/VaidyaC05} also studied 
privacy-preserving top-$k$ queries in which the data are vertically 
partitioned instead of encrypting the data. 
%Here, we propose the first \emph{CQA secure} query processing technique over {\em encrypted databases} that can answer
%top-$k$ queries efficiently.

We assume that the data owner and the clients are trusted, but not the cloud server. Therefore, the data owner encrypts each database relation $R$ using some probabilistic encryption scheme before outsourcing it to the cloud. An authorized user specifies a query $q$ and generates a \emph{token} to query the server. 
Our objective is to allow the cloud to securely compute the top-$k$ results based on a user-defined 
ranking function over $R$, and, more importantly,
the cloud should not learn anything about $R$ or  $q$.
Consider a real world example for a health medical database below:
\begin{example}
An authorized doctor, Alice, wants to get the top-$k$ results based on some
ranking criteria from the \emph{encrypted} electronic health record database $\patients$ (see
Table~\ref{tab:health_data}).
The encrypted $\patients$ database may contain several attributes;  
here we only list a few in Table~\ref{tab:health_data}: 
\emph{patient name, age, id number, 
trestbps~\footnote{trestbps: resting blood pressure (in mm Hg)}, 
chol\footnote{chol: serum cholestoral in mg/dl}, 
thalach\footnote{maximum heart rate achieved}}.

\begin{table}[tbph]
\centering
\begin{tabular}{|c|c|c|c|c|c|}\hline
patient name & age & id & trestbps & chol & thalach\\\hline\hline
$E(Bob)$ & $E(38)$ & $E(121)$ & $E(110)$ & $E(196)$ & $E(166)$ \\\hline
$E(Celvin)$ & $E(43)$ & $E(222)$ & $E(120)$ & $E(201)$ & $E(160)$\\\hline
$E(David)$ & $E(60)$ & $E(285)$ & $E(100)$ & $E(248)$ & $E(142)$\\\hline
$E(Emma)$ & $E(36)$ & $E(956)$ & $E(120)$ & $E(267)$ & $E(112)$\\\hline
$E(Flora)$ & $E(43)$ & $E(756)$ & $E(100)$ & $E(223)$ & $E(127)$\\\hline
\end{tabular}%}
\caption{Encrypted $\patients$ Heart-Disease Data}\label{tab:health_data}
\end{table}

One example of a top-$k$  query (in the form of a SQL query)  can be: 
\texttt{SELECT * FROM patients ORDERED BY chol+thalach STOP AFTER k}.
That is, the doctor wants to get the top-2 results based the score $chol+thalach$ from all the patient records.
However, since this table contains very sensitive information about the patients, 
the data owner first encrypts the table and then delegates it to the cloud.  
So, Alice requests a key from the data owner and generates a query \emph{token} based on the query.  
Then the cloud searches and computes on the {\it encrypted} table to find out the top-$k$ results. 
In this case, the top-2 results are the records of patients $David$ and $Emma$.
\end{example} 

Our protocol extends the No-Random-Access (NRA)~\cite{pods/FaginLN01} algorithm for computing top-$k$ queries over a probabilistically encrypted relational database. 
Moreover, our query processing model assumes that two non-colluding semi-honest clouds, 
which is the model that has been showed working well
(see~\cite{DBLP:conf/icde/ElmehdwiSJ14, DBLP:conf/cms/BugielNSS11, DBLP:conf/icde/LiuZLLZZ15,fc15/FO, DBLP:conf/ndss/BostPTG15}).
We encrypt the database in such a way that the server can obliviously execute NRA over the encrypted database without learning the underlying data. 
This is accomplished with the help of a secondary independent cloud server (or Crypto Cloud).
However, the encrypted database resides only in the primary cloud.
We adopt two efficient state-of-art secure protocols, $\EncSort$~\cite{fc15/FO} and 
$\EncCompare$~\cite{DBLP:conf/ndss/BostPTG15}, which are the two building block we need in our top-$k$ secure construction. We choose these two building blocks mainly because of their efficiency.

During the query processing, we propose several novel sub-routines that can securely
compute the best/worst score and de-duplicate replicated data items over the encrypted database.
Notice that our proposed sub-protocols can also be used as stand-alone building blocks for other applications as well.
We also would like to point out that during the querying phase the computation performed by the client is very small. The client only needs to compute a simple token for the server and 
all of the relatively heavier computations are performed by the cloud side.
Moreover, we also explore the problem of top-$k$ join queries over multiple encrypted relations.

We also design a \emph{secure top-k join operator}, denote as $\secjoin$,
to securely join the tables based on \emph{equi-join} condition.
The cloud homomorphically computes the top-$k$ join on the top of joined results and reports the encrypted top-$k$ results. 
Below we summarize our main contributions:
\begin{itemize}[noitemsep,nolistsep]
\item We propose a new practical protocol designed to answer top-$k$ 
      queries over encrypted relational databases. 
\item We propose an encrypted data structures called $\EHL$ which 
      allows the servers to homomorphically evaluate the equality relations between two objects.
\item We propose several independent sub-protocols such that the clouds can securely compute
      the best/worst scores and de-duplicate replicated encrypted objects with the use of another non-colluding server.
\item We also extend our techniques to answer top-$k$ join queries over multiple encrypted relations.
%\item We prove that our scheme is secure under the CQA security definition 
%	 and both of the \emph{data confidentiality} and \emph{query privacy} are protected.
\item The scheme is experimentally evaluated using real-world datasets and result shows that our scheme is efficient and practical.
\end{itemize}

\section{Related Works and Background}
The problem of processing queries over the outsourced encrypted databases is not new. 
The work~\cite{HacigumusILM02} proposed executing SQL queries over encrypted data in the database-service-provider model using bucketization.
Since then, a number of works have appeared on executing various queries over encrypted data.
%One of the relevant problem related to top-$k$ queries is the $k$NN ($k$ Nearest Neighbor) queries. 
%Note that top-$k$ queries should not be confused with similarity search, such as $k$NN queries.  
%For the $k$NN queries, one is interested in retrieving the $k$ most similar objects over the database to a query object, where the similarity between two objects is measured over some metric space, for example the $L_2$ metric. 
%Many works have been proposed to specifically handle $k$NN queries on encrypted data, such as~\cite{Wong2009,DBLP:conf/icde/ElmehdwiSJ14, DBLP:conf/icde/0002LX13, DBLP:journals/tkde/ChoiGLB14}. 

A significant amount of works have been done for privacy preserving keyword
search queries or boolean queries, 
such as~\cite{SWP00, DBLP:journals/jcs/CurtmolaGKO11, crypto/CashJJKRS13}. 
Recent work~\cite{DBLP:conf/esorics/SamanthulaJB14} proposed a general 
framework for boolean queries of disjunctive normal form queries on encrypted data.
In addition, many works have been proposed for range 
queries~\cite{DBLP:conf/sp/ShiBCSP07, DBLP:journals/vldb/HoreMCK12, LiLWB14}. 
Other relevant works include privacy-preserving data 
mining~\cite{DBLP:series/ads/2008-34, DBLP:journals/vldb/VaidyaKC08,  DBLP:journals/tkdd/VaidyaCKP08, LP00, DBLP:journals/tkde/KantarciogluC04}. 

Recent works in the cryptography community have shown that it is possible to perform arbitrary computations over encrypted data,
using fully homomorphic encryption (FHE)~\cite{G09b}, or Oblivious RAM~\cite{GO96}. 
However, the performance overheads of such constructions are very high in practice, thus they're not suitable for practical database queries.  
Some recent advancements in ORAM schemes~\cite{190906} show 
promise and can be potentially used in certain environments.
As mentioned, \cite{DBLP:conf/icde/VaidyaC05} is the only work that studied privacy preserving execution of top-$k$ queries. 
However, their approach is mainly based on the $k$-anonymity privacy policies, 
therefore, it cannot extended to encrypted databases.
Recently, differential privacy~\cite{DworkN04} has emerged as a powerful model
to protect against unknown adversaries with guaranteed probabilistic accuracy. 
However, here we consider {\em encrypted} data in the outsourced model;
moreover, we do not want our query answer to be perturbed by noise, 
but we want our query result to be exact.
Kuzu et. al.~\cite{Kuzu2014} proposed a scheme that leverages DP and leaks
obfuscated access statistics to enable efficient searching.
Another approach has been extensively studied is order-preserving 
encryption (OPE)~\cite{DBLP:series/ads/2008-34, AKSX04, DBLP:conf/crypto/BoldyrevaCO11,  DBLP:conf/sosp/PopaRZB11, LP00}, which preserves the order of the message. We note
that, by definition, OPE directly reveals the order of the objects' ranks, 
thus does not satisfy our data privacy guarantee. 
Furthermore, \cite{DBLP:conf/sigmod/HangKD15} proposed a prototype for access control using deterministic proxy encryption, 
and other secure database systems have been proposed by using embedded secure hardware, 
such as TrustedDB~\cite{DBLP:conf/sigmod/BajajS11} and Cipherbase~\cite{DBLP:conf/icde/ArasuEJKKR15}.

\paragraph{Secure $k$NN queries.}
One of the most relevant problems is answering $k$NN ($k$ Nearest Neighbor) queries. 
Note that top-$k$ queries should not be confused with similarity search, such as $k$NN queries.  
For $k$NN queries, one is interested in retrieving the $k$ most similar objects from the database to a query object, where the similarity between two objects is measured over some metric space, for instance using the $L_2$ metric.
Many works have been proposed to specifically handle $k$NN queries on encrypted data, such as~\cite{Wong2009,DBLP:conf/icde/0002LX13,DBLP:journals/tkde/ChoiGLB14}. 

A recent work~\cite{DBLP:conf/icde/ElmehdwiSJ14} proposed secure $k$NN query under the same architecture setting as ours. 
We would like to point out that their solution does not directly solve the problem of top-$k$ queries.
In particular, \cite{DBLP:conf/icde/ElmehdwiSJ14} designed a protocol for ranking distances between the query point and 
the records using the $L_2$ metric, while we consider the top-$k$ selection query based on a class of scoring functions using linear combinations of attribute values.
Nevertheless, if we follow the similar setup from \cite{DBLP:conf/icde/ElmehdwiSJ14}, we can define the scoring function to be the sum of the squares, i.e. $\sum x^2_i(o)$, where $x_i(o)$ is the $i$-th attribute value for object $o$ and is a positive value. 
Then one can adapt the secure $k$NN scheme by querying a large enough query point 
(say, the upper bound of the attribute value) to get the $k$-\emph{nearest-neighbors} and
therefore it can return top-$k$ results. We show in Section~\ref{experiment}, that even under this particular setting, our protocol is much more efficient than~\cite{DBLP:conf/icde/ElmehdwiSJ14}. The computational complexity, for each query, for~\cite{DBLP:conf/icde/ElmehdwiSJ14} is at least $O(nm)$, where $n$ is the number of records and $m$ the number of attributes. Furthermore, the communication overhead between the two clouds is also $O(nm)$. Thus, this protocol is not very efficient for even small sized  databases.

\section{Preliminaries}\label{section:topk-pre}
\subsection{Problem Definition}
Consider a data owner that has a database relation  $R$ of $n$ objects,
denoted by $o_1, \dots, o_n$, and each object $o_i$ has $M$ attributes. For simplicity,
we assume that all $M$ attributes take numerical values. 
Thus, the relation  $R$ is an $n \times M$ matrix. 
The data owner would like to outsource $R$ to a third-party cloud $S_1$ that is completely untrusted. Therefore,
data owner encrypts $R$ and sends the encrypted relation $\ER$ to the cloud. 
After that, any authorized client should be able to get the results of the top-$k$ query over this encrypted relation directly from $S_1$, by specifying $k$ and a score function over the $M$ (encrypted) attributes. 
We consider the monotone scoring (ranking) functions that are weighted linear combinations over all attributes, that is $F_W(o)$ $=$ $\sum w_i \times x_i(o)$, where each $w_i \geq 0$ is a user-specified weight for the $i$-th attribute and $x_i(o)$ is the local score (value) of the $i$-th attribute for object $o$. 
Note that we consider the monotone linear function mainly because it is the most important and widely used score function on top-$k$ queries~\cite{IlyasBS08}.
The results of a top-$k$ query are the objects with the highest $k$ scores of $F_W$ values.
For example, consider an authorized client, Alice, who wants to
run a top-$k$ query over the encrypted relation $\ER$. Consider the following
query:
\sql{q = SELECT * FROM ER ORDER BY } $F_W(\cdot)$ \sql{ STOP AFTER k;}
\noindent That is, Alice wants to get the top-$k$ results based on her scoring function $F_W$, 
for a specified set of weights.
Alice first has to request the keys from the data owner, then generates a \emph{query token} $\tk$. 
Alice sends the $\tk$ to the cloud server. The cloud server storing the 
encrypted database $\ER$ processes the top-$k$ query and sends the encrypted results back to Alice. In the real world scenarios, the authorized clients can locally store the keys for generating the token.

\subsection{The Architecture}
We consider the secure computation on the cloud under the \emph{semi-honest} 
(or \emph{honest-but-curious}) adversarial model. 
Furthermore, our model assumes the existence of two \emph{different} 
non-colluding semi-honest cloud providers, $S_1$ and $S_2$, where $S_1$ 
stores the encrypted database $\ER$ and $S_2$ holds the secret keys and 
provides the crypto services. We refer to the server $S_2$ as the {\it Crypto Cloud} 
and assume $S_2$ resides in the cloud environment and is isolated from $S_1$.
The two parties $S_1$ and $S_2$ do not  trust each other, and therefore, they have to execute secure computations on encrypted data.
The two parties $S_1$ and $S_2$ belong to two different cloud providers 
and do not trust each other; therefore, they have to execute secure 
computations on encrypted data. In fact, crypto clouds have been built 
and used in some industrial applications today
(e.g., the pCloud Crypto\footnote{\scriptsize\url{https://www.pcloud.com/encrypted-cloud-storage.html}} or boxcryptor\footnote{\scriptsize\url{https://www.boxcryptor.com/en/provider}}).
This model is not new and has already been widely used in related work, 
such as~\cite{DBLP:conf/icde/ElmehdwiSJ14, DBLP:conf/cms/BugielNSS11, DBLP:conf/icde/LiuZLLZZ15,fc15/FO, DBLP:conf/ndss/BostPTG15}. 
As pointed out by these works, we emphasize that these cloud services 
are typically provided by some large companies, such as Amazon, 
Microsoft Azure, and Google, who have also commercial interests not to collude.
The Crypto Cloud $S_2$ is equipped with a cryptographic processor, 
which stores the decryption key. 
%The cryptographic processor has been built and used in real life (e.g., the IBM PCIe\footnote{\scriptsize\url{http://www-03.ibm.com/security/cryptocards/pciecc/overview.shtml}} or the Freescale C29x\footnote{\scriptsize\url{http://www.freescale.com/webapp/sps/site/prod_summary.jsp?code=C29x}}).
The intuition behind such an assumption is as follows. Most of the cloud service providers in the market are well-established IT companies, such as Amazon AWS, Microsoft Azure and Google Cloud. Therefore, a collusion between them is highly unlikely as it will damage their reputation which effects their revenues. 
When the server $S_1$ receives the query token, $S_1$ initiates the secure computation protocol
with the Crypto Cloud $S_2$. Figure~\ref{fig:model} shows an overview of the architecture.

\begin{figure}[th!p]
\centering
    \includegraphics[width=0.5\textwidth]{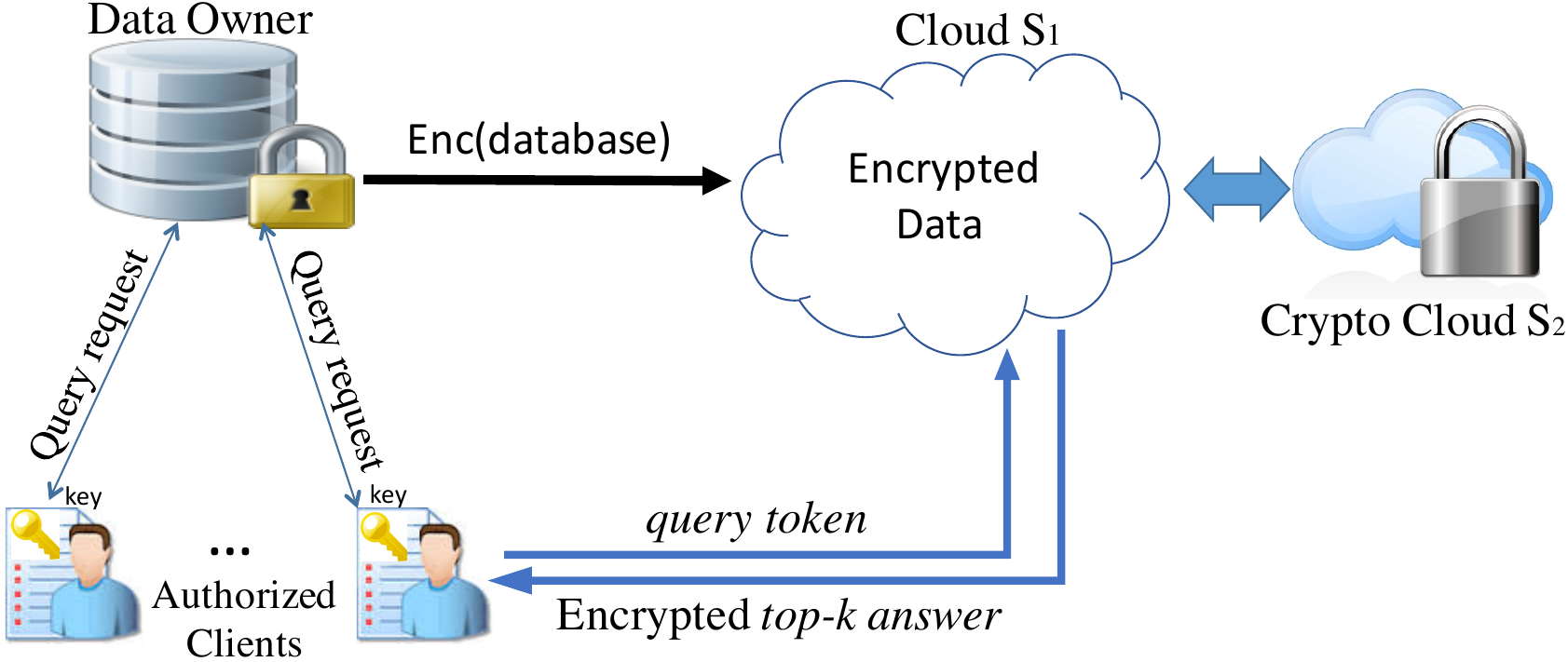}
    \caption{An overview of our model}
    \label{fig:model}
\end{figure}

\subsection{Cryptographic Tools}
In Table~\ref{table:notation} we summarize the notation. In the following, we present
the cryptographic primitives used in our construction.

\ignore{
% \paragraph{Homomorphic Cryptosystem}
% Homomorphic encryption allows one to perform computations directly on ciphertexts without first decrypting
% them. For example, it allows to add two ciphertexts $c_1$, $c_2$ to a new ciphertext $c_0$ such that when $c_0$
% is decrypted, the result corresponds to the addition of plain values of $c_1$ and $c_2$, $m_1 + m_2$. 
% A variety of homomorphic cryptosystems have been proposed in the literature. 
% Some schemes are ``partially homomorphic'' and support
% only a specific type of computation (addition or a single multiplication)~\cite{Paillier99}. Fully homomorphic encryption
% schemes, on the other hand, support arbitrary number of additions and multiplications which means that any circuit can be homomorphically evaluated. Although being so powerful, those schemes are computationally
% expensive to be used in practice. However, ``partially homomorphic'' schemes are efficient and as we show are sufficient for our purposes.
}

\paragraph{Paillier Cryptosystem}
The Paillier cryptosystem~\cite{Paillier99} is a semantically secure 
public key encryption scheme.
The message space $\mathcal M$ for the encryption is $\Z_N$, where $N$ is a product of two large prime numbers $p$
and $q$. For a message $m\in \Z_N$, we denote $\En[\pk]{m} \in \Z_{N^2}$ to be the encryption of $m$ with the public key $\pk$. When the key is clear in the text, we simply use 
$\En{m}$ to denote the encryption of $m$ and $\dec_{\sk}(c)$ to denote the decryption of a
ciphertext $c$.
The details of encryption and decryption algorithm can be found in~\cite{Paillier99}.
It has the following homomorphic properties:
\begin{itemize}
\item{\emph{Addition:}} $\forall x, y \in \Z_N$, $\En{x}\cdot\En{y} = \En{x+y}$
\item{\emph{Scalar Multiplication:}} $\forall x, a \in \Z_N$, $\En{x}^a=\En{a\cdot x}$
\end{itemize}

\paragraph{Generalized Paillier} 
Our construction also relies on Damg{\aa}rd-Jurik(DJ) cryptosystem 
introduced by Damg{\aa}rd and Jurik \cite{DBLP:conf/pkc/DamgardJ01}, which is a generalization of Paillier encryption. 
The message space $\mathcal M$ expands to $\Z_{N^{s}}$ for $s\ge 1$, and the ciphertext space is under the group $\Z_{N^{s+1}}$. 
As mentioned in~\cite{DBLP:conf/tcc/AdidaW07}, this generalization allows one to doubly encrypt messages and use the additive homomorphism of the inner encryption layer under the same secret key.
In particular, let $\Etwo{x}$ denote an encryption of the DJ scheme for a message $x\in \Z_{N^2}$ (when $s = 2$) and $\En{x}$ be a normal Paillier encryption. 
This extension allows a ciphertext of the first layer to be treated as a plaintext in the second layer. 
Moreover, this nested encryption preserves the structure over inner ciphertexts and allows one to manipulate it as follows:
$$\Etwo{\En{m_1}}^{\En{m_2}} = \Etwo{\En{m_1}\cdot\En{m_2}} = \Etwo{\En{m_1+m_2}}$$
We note that this is the only homomorphic property that our construction relies on.

Throughout this paper, we use $\sim$ to denote that the underlying plaintext under encryption $\E$ are the same, i.e., $\Enc(x) \sim \Enc(y) \Rightarrow x = y$.
We summarize the notation throughout this paper in Table~\ref{table:notation}.
Note that in our application, we need one layered encryption; that is, given $\Etwo{\En{x}}$, we want a normal Paillier encryption $\En{x}$. As introduced in~\cite{fc15/FO}, this could simply be done 
with the help of $S_2$. However, we need a protocol $\RecoverEnc$ to securely remove one layer of encryption.

\begin{table}[tpb]
\centering
\begin{tabular}{| c | c |} \hline
\textbf{Notation} & \textbf{Definition} \\ \hline
$n$ & Size of the relation $R$, i.e. $|R|= n$ \\ \hline
$M$ & Total number of attributes in $R$ \\ \hline
$m$ & Total number of attributes for the query $q$	 \\ \hline
$\En{m}$ & Paillier encryption of $m$ \\ \hline
$\dec(c)$  & Paillier decryption of $c$ \\ \hline
$\Etwo{m}$ &  Damg{\aa}rd-Jurik (DJ) encryption of $m$ \\ \hline
$\enc(x) \sim \enc(y)$ & Denotes $x = y$, i.e. $\dec(\Enc(x)) = \dec(\Enc(y))$ \\ \hline
$\EHL(o)$ & Encrypted Hash List of the object $o$\\ \hline
%$\EHL(o)[i]$ & The $i$th encryption in $\EHL(o)$\\ \hline
$\eEHL(o)$ & Efficient Encrypted Hash List of the object $o$\\ \hline
$\bitminus$, $\odot$ & $\EHL$ and $\eEHL$ operations, see Section~\ref{sec:ehl}.\\ \hline
$I_i^d$ & The data item in the $i$th sorted list $L_i$ at depth $d$ \\ \hline
$E(I_i^d)$ & Encrypted data item $I_i^d$ \\ \hline
$F_W(o)$ & Cost function in the query token\\ \hline
$B^d(o)$ & The best score (upper bound) of $o$ at depth $d$\\ \hline
$W^d(o)$ & The worst score (lower bound) of $o$ at depth $d$\\ \hline
\end{tabular}
\caption{Notation Summarization}
\label{table:notation}
\end{table}

\subsection{No-Random-Access (NRA) Algorithm}

\begin{algorithm}[th!]
\Def{NRA \big($L_1, ..., L_M$\big)}{
Do sorted access in parallel to each of the $M$ sorted lists $L_i$. 
At each depth $d$:
\Repeat{}{
 Maintain the bottom values $\bottom{x}_1^d, \bottom{x}_2^d, ..., \bottom{x}_M^d$ encountered in the lists\;
 For every object $o_i$ compute a lower bound $W^d(o_i)$ and upper bound $B^d(o_i)$\;
 Let $T^d_k$, the current top $k$ list, contain the $k$ objects with the largest $W^d(\cdot)$ values seen so far (and their grades), and let $M^d_k$ be the $k$th largest lower bound value, $W^d(\cdot)$ in $T_k^d$\;
 Halt and return $T^d_k$ when at least $k$ distinct objects have been seen (so that in particular $T^d_k$ contains $k$ objects) and when $B^d(o_k)\le M^d_k$ for all $o_k\notin T_k^d$, 
 i.e. the upper bound for every object who's not in $T_k^d$ is no greater than $M^d_k$. Otherwise, go to next depth\;
}
}
\caption{NRA Algorithm~\cite{pods/FaginLN01}} \label{alg:nra}
\end{algorithm}

The NRA algorithm~\cite{pods/FaginLN01} finds the top-$k$ answers by exploiting only sorted accesses to the relation $R$. 
The input to the NRA algorithm is a set of sorted lists $S$, each ranks
the ``same'' set of objects based on different attributes. The output is a ranked list of these
objects ordered on the aggregate input scores.
We opted to use this algorithm because it provides a scheme that 
leaks minimal information to the cloud server (since during query processing there is no need to access intermediate objects).
We assume that each column (attribute) is sorted independently to create a set of sorted lists $S$. 
The set of sorted lists is equivalent to the original relation, but the objects in each list $L$
are sorted in ascending order according to their local score (attribute value).
After sorting, $R$ contains $M$ sorted lists, denoted as $S = \{ L_1, L_2, \dots, L_M \}$.  
Each sorted list consists of $n$ data items, denoted as $L_i = \{ I_i^1, I_i^2, \dots, I_i^n \}$.
Each data item is a object/value pair $I_i^d = (o_i^d, x_i^d)$, where $o_i^d$ 
and $x_i^d$ are the object id and local score at the depth $d$ (when $d$ objects have been accessed under sorted access in each list) 
in the $i$th sorted list respectively. 
Since it produces the top-$k$ answers using bounds computed over their exact scores, NRA may not report the exact object scores. 
The score lower bound of some object $o$, $W(o)$, is obtained by applying the ranking function on $o$'s known scores 
and the minimum possible values of $o$'s unknown scores. The score upper bound of $o$, $B(o)$, 
is obtained by applying the ranking function on $o$'s known scores
and the maximum possible values of $o$'s unknown scores, which are the same as the last seen scores in the
corresponding ranked lists.
The algorithm reports a top-$k$ object even if its score is not precisely known. 
Specifically, if the score lower bound of an object $o$ is not below the score upper bounds of all other objects (including unseen objects), then $o$ can be safely reported as the next top-$k$ object. We give the details of the NRA in Algorithm~\ref{alg:nra}.

\section{Scheme Overview}
In this section, we give an overview of our scheme. 
The two non-colluding semi-honest cloud servers are denoted by $S_1$ and $S_2$. 

\begin{definition}
Let $\SecTopK = (\Enc, \Token, \SecQuery)$ be the secure top-$k$ query scheme containing three algorithms $\Enc$, $\Token$ and $\SecQuery$.
\begin{itemize}
	\item $\Enc(\lambda, R)$:
	 is the probabilistic encryption algorithm that takes relation $R$ and security parameter 
	 $\lambda$ as its inputs and outputs the encrypted relation $\ER$ and secret key $K$. 
	\item $\Token(K, q)$: takes a query $q$ and secret key $K$. It outputs a token $\tk$ for the query $q$.
	\item $\SecQuery(\tk, \ER)$ is the query processing algorithm that takes the token $\tk$ and $\ER$
	 and securely computes top-$k$ results based on the $\tk$.
\end{itemize}
\end{definition}
As mentioned earlier, our encryption scheme takes advantage of the NRA top-$k$ algorithm. 
The idea of $\Enc$ is to encrypt and permute the set of sorted lists 
for $R$, so that the server can execute a variation of the NRA algorithm 
using only sequential accesses to the encrypted data. 
To do this encryption, we design a new encrypted data structure for the objects, called $\EHL$.
The $\Token$ computes a token that serves as a trapdoor so that the cloud knows which list to access. 
In $\SecQuery$, $S_1$ scans the encrypted data depth by depth for each targeted list, 
maintaining a list of encrypted top-$k$ object ids per depth until there are $k$ 
encrypted object ids that satisfy the NRA halting condition.
During this process, $S_1$ and $S_2$ learn nothing about the underlying scores and objects.
At the end of the protocol, the object ids can be reported to the client. 
As we discuss next, there are two options after that.
Either the encrypted records are retrieved and returned to the client, or the client retrieves the records using oblivious RAM~\cite{GO96} that does not even reveal the location of the actual encrypted records. In the first case, the server can get some additional information by observing the access patterns, i.e., the encrypted results of different 
queries. However, there are schemes that address this access leakage~\cite{DBLP:conf/ndss/IslamKK12, Kuzu2014} and is beyond the scope of this paper. The second approach may be more expensive but is completely secure.

In the following sections, we first discuss the new encrypted data structures $\EHL$ and $\eEHL$. 
Then, we present the three algorithms $\Enc$, $\Token$ and $\SecQuery$ in more details.

\section{Encrypted Hash List (EHL)} \label{sec:ehl}
In this paper, we propose a new data structure called 
encrypted hash list ($\EHL$) to encrypt each object.  
The main purpose of this structure is to allow the cloud to 
homomorphically compute equality between the objects, 
whereas it is computationally hard for the server to figure out what the objects are.
Intuitively, the idea is that given an object $o$ we use $s$ 
Pseudo-Random Function (PRF) to hash the object into a binary list 
of length $H$ and then encrypt all the bits in the list to generate $\EHL$.  
In partilar, we use the secure key-hash functions $\hmac$ as the PRFs.
Let $\EHL(o)$ be the encrypted list of an object $o$ and let $\EHL(o)[i]$ 
denote the $i$th encryption in the list.
In particular, we initialize an empty list $\EHL$ of length $H$ and fill all the entries with $0$. 
First, we generate $s$ secure keys $\kappa_1, ..., \kappa_s$. The object $o$ is hashed to a list as follows:
1) Set $\EHL[\hmac(\kappa_i, o)\mod H] = 1$ for $1\le i \le s$. 
2) Encrypt each bit using Paillier encryption: for $0 \le j \le H-1$, $\Enc(\EHL(o)[j])$.
Fig.~\ref{fig:ehl} shows how we obtain $\EHL(o)$ for the object $o$.

\begin{figure}[th!bp]
   \centering
      \includegraphics[width=0.4\textwidth]{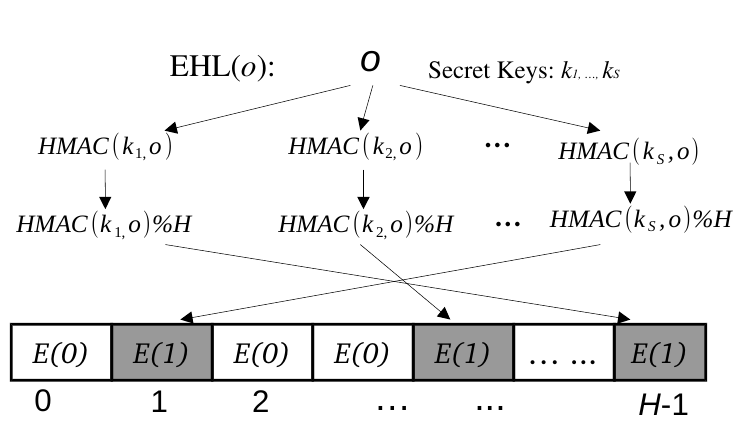}
   \caption{Encrypted Hash List for the object $o$.}\label{fig:ehl}
\end{figure}

\begin{lemma}\label{lemma:ehl-indis}
Given two objects $o_1$ and $o_2$, their $\EHL(o_1)$ and $\EHL(o_2)$ are computationally 
indistinguishable.
\end{lemma}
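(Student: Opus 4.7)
The plan is to reduce the statement to the IND-CPA (semantic) security of the Paillier encryption scheme via a standard hybrid argument. Observe that $\EHL(o)$ is just a tuple of $H$ independent Paillier ciphertexts over $\{0,1\}$, where the $s$ positions encrypting $1$ are determined by $\hmac(\kappa_i, o)\bmod H$ for $i=1,\ldots,s$. Since the Paillier secret key is held only by $S_2$, any distinguisher that sees $\EHL$ values (e.g., $S_1$ or an external adversary) operates without knowledge of $\sk$, so the underlying bit pattern is computationally hidden regardless of which object was hashed.

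Concretely, I would define a sequence of hybrid distributions $H_0, H_1, \ldots, H_H$ with $H_0 = \EHL(o_1)$, where $H_i$ is obtained from $H_{i-1}$ by replacing the $i$-th ciphertext by a fresh Paillier encryption of $0$. Consecutive hybrids differ in the plaintext of exactly one ciphertext, so any PPT distinguisher separating $H_{i-1}$ from $H_i$ with non-negligible advantage can be turned into a PPT IND-CPA adversary against Paillier, by embedding the challenge ciphertext at position $i$ and honestly generating the remaining $H-1$ entries using the public key (which the reduction knows). Symmetrically, define $H'_0=\EHL(o_2)$ and hybrids $H'_j$ reducing it to the all-zero tuple in the same fashion. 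Note that $H_H = H'_H$ is precisely the distribution of $H$ fresh encryptions of $0$.

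By the triangle inequality, the distinguishing advantage between $\EHL(o_1)$ and $\EHL(o_2)$ is bounded by $2H\cdot \Adv^{\mathrm{ind\mbox{-}cpa}}_{\text{Paillier}}(\lambda)$, which is negligible in $\lambda$ because $H$ is polynomial and Paillier is IND-CPA secure under the DCR assumption. Strictly speaking, this is a multi-message CPA statement, but the standard single-to-multi hybrid reduction for public-key encryption yields exactly the bound above.

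The main obstacle is not technical but rather conceptual: nailing down what the distinguisher is allowed to see. Notably, the argument does not invoke pseudorandomness of $\hmac$ at all, and goes through even if the adversary knows $o_1, o_2$ and all PRF keys $\kappa_1,\ldots,\kappa_s$, provided it does not hold the Paillier secret key. If one wanted the stronger guarantee that $\EHL$ values remain hiding after $\sk$ is exposed (which is not claimed here), one would additionally need PRF security of $\hmac$ together with a balls-and-bins argument to ensure the two $1$-position sets are statistically close; but for the lemma as stated, CPA security of Paillier alone suffices.
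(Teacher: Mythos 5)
Your proof is correct and follows essentially the same route as the paper, which simply asserts the lemma as an immediate consequence of the semantic security of Paillier encryption; your hybrid argument is the standard way of making that one-line justification precise. Your closing observation that PRF security of $\hmac$ is not needed for the lemma as stated (only for hiding after key exposure) is accurate and consistent with the paper's reasoning.
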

It is obvious to see that Lemma~\ref{lemma:ehl-indis} holds since the bits in the $\EHL$ are encrypted by the semantically secure Paillier encryption scheme.  
Given $\EHL(x)$ and $\EHL(y)$, we define the \emph{randomized operation} $\bitminus$ between 
$\EHL(x)$ and $\EHL(y)$ as follows: 
\begin{align}\label{eq:ehl-op} 
\EHL(x) \bitminus \EHL(y) \overset{def}{=} \prod^{H-1}_{i = 0} \big(\EHL(x)[i] \cdot \EHL(y)[i]^{-1}\big)^{r_{i}}
\end{align}
where each $r_{i}$ is some random value in $\Z_N$.

\begin{lemma}\label{lemma:eqi-bit}
Let $\En{b}$ $=$ $\EHL(x) \bitminus \EHL(y)$. Then the plaintext $b = 0$ if $x = y$ (two objects are the same),
otherwise $b$ is uniformly distributed in the group $\Z_N$ with high probability.
\end{lemma}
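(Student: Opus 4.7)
The plan is to exploit the additive-homomorphic structure of Paillier to first rewrite the product as a single encryption, and then argue about the distribution of the underlying plaintext by a case analysis on whether $x = y$.

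First, I would apply the homomorphic identities recalled earlier ($\En{a}\cdot\En{b} = \En{a+b}$ and $\En{a}^c = \En{ca}$). Setting $\delta_i \assign \EHL(x)[i] - \EHL(y)[i] \in \{-1,0,1\} \subseteq \Z_N$, one obtains
\begin{equation*}
\EHL(x) \bitminus \EHL(y) \;=\; \prod_{i=0}^{H-1} \En{\delta_i}^{r_i} \;=\; \Enc\!\left(\sum_{i=0}^{H-1} r_i\, \delta_i \bmod N\right),
\end{equation*}
so the underlying plaintext is $b = \sum_{i} r_i \delta_i \bmod N$.

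For the first case, if $x = y$ then since each $\hmac(\kappa_j,\cdot)$ is a deterministic function, the underlying bit lists are identical: $\EHL(x)[i] = \EHL(y)[i]$ for every $i$, so every $\delta_i = 0$ and therefore $b = 0$, as required. For the second case, suppose $x \neq y$. The goal is to show that with high probability over the PRF keys $\kappa_1,\dots,\kappa_s$ used to build the lists, there exists at least one index $i^\ast$ such that $\delta_{i^\ast} \in \{-1,+1\}$. Conditioned on this event, I fix all $r_i$ for $i \neq i^\ast$ and observe that since $\delta_{i^\ast} = \pm 1$ is a unit in $\Z_N$, the map $r_{i^\ast} \mapsto r_{i^\ast}\delta_{i^\ast} + \sum_{i \neq i^\ast} r_i\delta_i \pmod N$ is a bijection on $\Z_N$. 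Because $r_{i^\ast}$ is drawn uniformly from $\Z_N$ and independently of the other randomness, $b$ is uniformly distributed on $\Z_N$.

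The main technical point, and the step I would treat most carefully, is quantifying ``with high probability'' in the second case. The bad event is that the \emph{set} of indices $\{\hmac(\kappa_j,x)\bmod H : 1\le j\le s\}$ coincides exactly with $\{\hmac(\kappa_j,y)\bmod H : 1\le j\le s\}$, so that the two bit lists are identical even though $x \neq y$. By the PRF security of $\hmac$, these $2s$ values are computationally indistinguishable from uniform and independent elements of $[H]$, so a standard birthday-style calculation bounds the collision probability by roughly $\binom{s}{1}^2/H$ up to lower order terms, which is negligible for suitably chosen $H$ and $s$ (and is exactly the false-positive parameter governing the $\EHL$'s Bloom-filter-like behavior). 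Chaining this negligible term with the (perfect) uniformity argument above yields the ``with high probability'' statement in the lemma.
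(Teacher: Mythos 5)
Your proof is correct and follows essentially the same route as the paper's: collapse the product homomorphically to $\Enc\big(\sum_i r_i(x_i - y_i)\big)$, note the sum vanishes when $x=y$, and argue uniformity from a differing position when $x\neq y$. The only difference is that you make explicit two points the paper leaves implicit --- the bijection argument showing that a single $\delta_{i^\ast}=\pm 1$ term with uniform $r_{i^\ast}$ forces the whole sum to be uniform on $\Z_N$, and the reduction of the ``high probability'' qualifier to the $\EHL$ false-positive event --- both of which are accurate refinements rather than departures.
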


\begin{proof}
Let $\En{x_i} = \EHL(x)[i]$ and $\En{y_i} = \EHL(y)[i]$. 
If $x = y$, i.e. they are the same objects, then for all $i\in[0, H-1]$, 
$x_i = y_i$. Therefore, 
\[
\prod_{i = 0}^{H-1} (\EHL(x)[i] \cdot \EHL(y)[i]^{-1})^{r_i} 
= \E\big(\sum_{i=0}^{H-1}(r_i(x_i - y_i))\big) = \En{0}
\]
In the case of $x \ne y$, it must be true, with high probability, that there exists some 
$i \in [0, H-1]$ such that $\En{x_i}\nsim\En{y_i}$, i.e. the underlying 
bit at location $i$ in $\EHL(x)$ is different from the bit in $\EHL(y)$.
Suppose $\EHL(x)[i] = \En{1}$ and $\EHL(y)[i] = \En{0}$.
Therefore, the following holds: 
$$\left(\EHL(x)[i]\cdot\EHL(y)[i]^{-1}\right)^{r_i} = \En{r_i(1-0)} = \En{r_i}$$
Hence, based on the definition $\bitminus$, it follows that $b$ becomes random value uniformly 
distributed in the group $\Z_N$.
\end{proof}

It is worth noting that one can also use BGN cryptosystem for the similar operations above, as the BGN scheme can homomorphically evaluate quadratic functions.

\paragraph{False Positive Rate.}
Note that the construction is indeed a probabilistically encrypted Bloom Filter except that we
use one list for each object and encrypt each bit in the list.
The construction of $\EHL$ may report some false positive results for its $\bitminus$ operation, i.e. $\En{0}$ $\larr$ $\EHL(x)\bitminus\EHL(y)$ when $x \ne y$.  
This is due to the fact that $x$ and $y$ may be hashed to exactly the same
locations using $s$ many $\hmac$s. 
Therefore, it is easy to see that the false positive rate (FPR) is the same as the FPR of the Bloom Filter, where we can choose the number of hash functions $\hmac$ $s$ to be $\frac{H}{n}\ln2$ to minimize the false positive rate to be $(1 - (1 - {\frac{1}{H}}^{sn}))^s
\approx (1-e^{-sn/H})^s \approx 0.62^{H/n}$.
To reduce the false positive rate, we can increase the length of the list $H$.  
However, this will increase the cost of the structure both in terms of space overhead and number of operations for the randomization operation which is $O(H)$.
In the next subsection, we introduce a more compact and space-efficient encrypted data structure $\eEHL$.

\paragraph{EHL$^+$.}
We now present a computation- and space-efficient encrypted hash list $\eEHL$.
The idea of the efficient $\eEHL$ is to first `securely hash' the object $o$ to a larger space
$s$ times and only encrypt those hash values. %Thus, we do not need to encrypt any of the $0$ values as in the $\EHL$.
Therefore, for the operation $\bitminus$, we only homomorphically subtract those hashed values. 
The complexity now reduces to $O(s)$ as opposed to $O(H)$, where $s$ is the number of the
secure hash functions used. We show that one can get negligible false positive rate even using a very small $s$.
To create an $\eEHL(o)$ for an object $o$, we first generate $s$ secure keys $k_1, ..., k_s$, then initialize a list $\eEHL$ of size $s$. 
We first compute $o_i \larr \hmac(k_i, o) \mod N$ for $1\le i \le s$. 
This step maps $o$ to an element in the group $\Z_N$, i.e. the message space for Paillier encryption.
Then set $\eEHL[i] \larr \En{o_i}$ for $1\le i \le s$.
The operation $\bitminus$ between $\eEHL(x)$ and $\eEHL(y)$ are similar defined as
in Equation(\ref{eq:ehl-op}), i.e. $\eEHL(x) \bitminus \eEHL(y)$ $\overset{def}{=}$ 
$\prod^{s-1}_{i = 0} \big(\EHL(x)[i] \cdot \EHL(y)[i]^{-1}\big)^{r_{i}}$, where each $r_i$ 
is some randomly generated value in $\Z_N$.
Similarly, $\eEHL$ has the same properties as $\EHL$. Let $\En{b}\larr\eEHL(x)\bitminus\eEHL(y)$,
$b = 0$ if $x = y$ and otherwise $b$ is random in $\Z_N$ with high probability. 

We now analyze the false positive rate (FPR) for $\eEHL$. 
The false positive answer occurs when $x \ne y$ and $\En{0}\larr \eEHL(x)\bitminus\eEHL(y)$. 
That is $\hmac(k_i, x)\% N = \hmac(k_i,y)\%N$ for each $i\in[1, s]$.
Assuming $\hmac$ is a Pseudo-Random Function, the probability of this happens is at most 
$\frac{1}{N^s}$. 
Taking the union bound gives that the FPR is at most ${n \choose 2}\frac{1}{N^s} \le \frac{n^2}{N^s}$. 
Notice that $N \approx 2^\lambda$ is large number as $N$ is the product of two large primes 
$p$ and $q$ in the Paillier encryption and $\lambda$ is the security parameter. 
For instance, if we set $N$ to be a $256$ bit number ($128$-bit primes in Paillier) 
and set $s = 4$ or $5$, then the FPR is negligible even for millions of records.
In addition, the size of the $\eEHL$ is much smaller than $\EHL$ as it stores only $s$ encryptions.
In the following section, we simply say $\EHL$ to denote the encrypted hash list using the $\eEHL$ structure.

\paragraph{Notation.}
We introduce some notation that we use in our construction. 
Let $\vx = (x_1, \dots, x_s)\in\Z_N^s$ and let the encryption $\En{\vx}$ denotes
the concatenation of the encryptions $\En{x_1}...\En{x_s}$.  
 Also, we denote by $\odot$ the block-wise multiplication between $\En{\vx}$ and $\EHL(y)$; 
that is, $\vc\larr\En{\vx}\odot\EHL(y)$, where $\vc_i \larr \En{x_i}\cdot\EHL(y)[i]$ for $i\in[1, s]$.

\section{Database Encryption}\label{sec:enc-setup}
We describe the database encryption procedure $\Enc$ in this section.
Given a relation $R$ with $M$ attributes, the data owner first encrypts the
relation using Algorithm~\ref{alg:encryption}.

\begin{algorithm}[th!]
 Given the relation $R$, sort each $L_i$ based on the attribute's value for $1\le i \le M$\;
 Generate a public/secret key $\pk_\p, \sk_\p$ for the Paillier encryption scheme and             random secret keys $\kappa_1, \dots, \kappa_s$ for $\EHL$\;
 Do sorted access in parallel to each of the $M$ sorted lists $L_i$\;
 \ForEach{data item $I_i = \angles{o^{d}_i, x^{d}_i} \in L_i$}{
    \ForEach{depth $d$}{
     Compute $\EHL(o^{d}_i)$ using the keys $\kappa_1, \dots, \kappa_s$\;
     Compute $\enc_{\pk_\p}({x}^d_{i})$ using $\pk_\p$\;
     Store the item $E(I_i^d) = \angles{\EHL(o^d_i), \enc_{\pk_\p}(x^d_i})$ at depth $d$\;
    }
  }
 Generate a secret key $K$ for a pseudorandom permutation $P$ and permute all the list based on $g$. For $1 \le i \le M$, permute $L_i$ as $L_{P_K(i)}$\;
 The data owner securely uploads the keys $\pk_\p, \sk_\p$ to the $S_2$, and only $\pk_\p$ to $S_1$\;
 Finally, each permuted list contains a list of encrypted item of the form $E(I^d) = \angles{\EHL(o^d), \En[\pk_\p]{x^d}}$. 
 Output all lists of encrypted items as the encrypted relation as $\ER$\;
 \caption{$\enc(R)$: Relation encryption} \label{alg:encryption}
\end{algorithm}

In $\ER$ each data item $I^{d}_i = (o^{d}_{i}, x^{d}_{i})$ at depth $d$ in the
sorted list $L_i$ is encrypted as $E(I^{d}_i) = \angles{\EHL(o^{d}_{i}),\enc_{\pk_\p}(x^{d}_{i})}$. 
As all the score has been encrypted under the public key $\pk_\p$, for the rest of the paper, we simply use $\enc(x)$ to denote the encryption $\enc_{\pk_\p}(x)$ under the public key $\pk_\p$.
Besides the size of the database and $M$, the encrypted $\ER$ doesn't reveal anything. In Theorem~\ref{thm:indis},
we demonstrate this by showing that two encrypted databases are indistinguishable if they have the same size and number of attributes. We denote $|R|$ by the size of a relation $R$.
\begin{theorem}\label{thm:indis}
Given two relations $R_1$ and $R_2$ with $|R_1| = |R_2|$ and same number of attributes. The encrypted $\ER_1$ and $\ER_2$ output by the algorithm $\Enc$ are indistinguishable.
\end{theorem}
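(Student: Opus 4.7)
The plan is to prove Theorem~\ref{thm:indis} by a standard hybrid argument that reduces distinguishability of $\ER_1$ and $\ER_2$ to the semantic security (IND-CPA) of the underlying Paillier cryptosystem, invoking Lemma~\ref{lemma:ehl-indis} as a packaged statement about the $\EHL$ structure. The high-level idea is that since $|R_1|=|R_2|=n$ and both have $M$ attributes, the output of $\Enc$ on either input consists of $M$ permuted sorted lists, each containing $n$ pairs of the form $\angles{\EHL(o), \enc_{\pk_\p}(x)}$, so the \emph{shape} of $\ER_1$ and $\ER_2$ is already identical; only the underlying plaintexts differ, and those are hidden by the encryption.

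First I would define a sequence of hybrid distributions $H_0, H_1, \dots, H_T$ where $H_0 = \ER_1$ and $H_T = \ER_2$. A convenient intermediate stop is a ``dummy'' distribution $H^\star$ in which every $\EHL$ is an encryption of a fixed canonical object (say the all-zero vector in $\Z_N^s$) and every score ciphertext is $\enc_{\pk_\p}(0)$, with the lists permuted by a random permutation over $[M]$. I would then show $\ER_1 \approx_c H^\star$ and $\ER_2 \approx_c H^\star$ separately, and conclude $\ER_1 \approx_c \ER_2$ by the triangle inequality for computational indistinguishability.

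For the reduction $\ER_1 \approx_c H^\star$, I would walk through the $nM$ data items in some canonical order and, at each step, replace the real ciphertexts in $\angles{\EHL(o^d_i), \enc_{\pk_\p}(x^d_i)}$ with dummy ciphertexts. Each single-item replacement can be broken further into $s+1$ micro-hops: $s$ hops replacing the Paillier encryptions inside the $\EHL$ (which encrypts $\hmac(\kappa_j, o^d_i) \bmod N$ for $j=1,\dots,s$) with encryptions of $0$, and one hop replacing $\enc_{\pk_\p}(x^d_i)$ with $\enc_{\pk_\p}(0)$. Any efficient distinguisher between two adjacent hybrids yields, via the standard reduction, an IND-CPA adversary against Paillier: the reduction embeds its challenge ciphertext in the relevant slot and generates all other ciphertexts itself using $\pk_\p$. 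Since the secret keys $\kappa_1,\dots,\kappa_s$ for the $\hmac$'s and the PRP key $K$ are never needed to simulate the remaining ciphertexts (only $\pk_\p$ is), the reduction is tight and succeeds with the same advantage.

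The main technical point to be careful about (and the only mild obstacle) is bookkeeping: one must argue that the sorted-then-permuted structure does not leak anything beyond $(n,M)$. Because within each list the items are sorted by local score before encryption, after replacing all $\EHL$ and score ciphertexts with encryptions of $0$ the ordering inside a list carries no information, and because the $M$ lists are then permuted by a pseudorandom permutation $P_K$ with $K$ never revealed to the distinguisher, the identities of the columns are hidden as well. Summing over the at most $nM(s+1)$ single-ciphertext hops, any PPT distinguisher's advantage between $\ER_1$ and $H^\star$ is bounded by $nM(s+1)\cdot\Adv^{\text{IND-CPA}}_{\text{Paillier}}(\lambda)$, which is negligible in $\lambda$. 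The identical argument gives $\ER_2 \approx_c H^\star$, and combining the two bounds completes the proof.
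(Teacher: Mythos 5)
Your proposal is correct and follows essentially the same route as the paper, which simply asserts that the theorem "holds based on Lemma~\ref{lemma:ehl-indis} and Paillier encryption scheme"; your hybrid argument over the $nM(s+1)$ ciphertext slots is just the standard, fully written-out version of that one-line reduction to IND-CPA security. (One small wording quibble: the reduction does need to sample the $\hmac$ keys $\kappa_1,\dots,\kappa_s$ itself in order to compute the plaintexts it encrypts in the non-challenge slots --- what matters is only that it never needs the Paillier secret key --- but this does not affect the validity of the argument.)
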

The proof is straight forward as it's easy to see that the theorem holds based on Lemma~\ref{lemma:ehl-indis} and Paillier encryption scheme.

%\begin{proofsk}
%It's easy to see the Theorem~\ref{thm:indis} holds. Based on the algorithm $Enc$, 
%$\ER_1$ and $\ER_2$ are of the same size with each encrypted data item is encrypted 
%using $\EHL$ and Pailliar encryption $\E$. Therefore, based on lemma~\ref{lemma:ehl-indis} and the semantic
%security of the Paillair encryption, no PPT adversary can distinguish two encrypted
%databases with the same size.
%\end{proofsk}

\section{Query Token}\label{sec:topk-token}
Consider the SQL-like query 
{\small\sql{q = SELECT * FROM ER ORDERED BY} $F_W(\cdot)$ \sql{STOP BY k}}, 
where $F_W(\cdot)$ is a weighted linear combination of all attributes. 
In this paper, to simplify our presentation of the protocol, we consider binary weights and therefore the scoring function is just a sum of the values of a subset of attributes.
However, notice that for non $\bool$ weights the client should provide these weights to the server and the server can simply adapt the same techniques by using the scalar multiplication property of the Paillier encryption before it performs the rest of the protocol which we discuss next.
On input the key $K$ and query $q$, the $\Token(K, q)$ algorithm is quite simple and works as follows:
the client specifies the scoring attribute set $\M$ of size $m$, i.e. $|\M| = m\le M$, then requests the key $\K$ from the data owner, where $\K$ is the key corresponds the Pseudo Random Permutation $P$.
Then the client computes the ${P_\K(i)}$ for each $i\in \M$ and sends
the following query token to the cloud server $S_1$:
$\tk = \sql{SELECT * FROM \ER \ ORDERED BY}$ $\{P_\K(i)\}_{i\in \M} \sql{ STOP BY k}$.

%========= Query Processing ======%
%{main}
\newcommand{\SecUpdate}{\cc{SecUpdate}}

\section{Top-k Query Processing}\label{querysection}
As mentioned, our query processing protocol is based on the NRA algorithm. 
However, the technical difficulty is to execute the algorithm on the encrypted 
data while $S_1$ does not learn any object id or any score and attribute value of the data.
We incorporate several cryptographic protocols to achieve this. Our 
query processing uses two state-of-the-art efficient and secure protocols: 
$\EncSort$ introduced by~\cite{fc15/FO} and $\EncCompare$ introduced by~\cite{DBLP:conf/ndss/BostPTG15}
as building blocks. 
We skip the detailed description of these two protocols since they are not the focus of this 
paper. Here we only describe their functionalities:
1). $\EncSort$: $S_1$ has a list of encrypted keyed-value pairs
$(\En{key_1},\En{a_1})...(\En{key_m},\En{a_m})$ and a public key $\pk$, 
and $S_2$ has the secret key $sk$. At the end of the protocol, $S_1$ obtains a list 
\emph{new} encryptions $(\En{key'_1},\En{a'_1})...(\En{key'_m},\En{a'_m})$, 
where the key/value list is sorted based on the order $a'_1 \le a'_2 ...\le a'_m$ 
and the set $\{(key_1, a_1), ..., (key_m, a_m)\}$ is the same as 
$\{(key'_1,a'_1), ..., (key'_m, a'_m)\}$.
2). $\EncCompare(\En{a}, \En{b})$: $S_1$ has a public key $\pk$ and two encrypted values 
$\En{a}, \En{b}$, while $S_2$ has the secret key $\sk$.
At the end of the protocol, $S_1$ obtains the bit $f$ such that $f \assign (a\le b)$.
Several protocols have been proposed for the functionality above.
We choose the one from~\cite{DBLP:conf/ndss/BostPTG15} mainly because it is efficient and perfectly suits our requirements.

\subsection{Query Processing: $\SecQuery$}

We first give the overall description of the top-$k$ query processing $\SecQuery$ at a high level. 
Then in Section~\ref{sec:building-blocks}, we describe in details the secure sub-routines that we use in the query processing: $\SecWorst$, $\SecBest$, $\SecDedup$, and $\SecUpdate$.

As mentioned, $\SecQuery$ makes use of the NRA algorithm but is different from the original NRA, 
because $\SecQuery$ cannot maintain the global worst/best scores in plaintext. 
Instead, $\SecQuery$ has to run secure protocols depth by depth and homomorphically 
compute the worst/best scores based on the items at each depth. 
It then has to update the complete list of encrypted items seen so far with their global worst/best scores.
At the end, server $S_1$ reports $k$ encrypted objects (or object ids) without learning any object or its scores.

\paragraph{Notations.}
In the encrypted database, we denote each \emph{encrypted item} by $E(I) = \angles{\EHL(o), \enc(x)}$, 
where $I$ is the item with object id $o$ and score $x$.
During the query processing, the server $S_1$ needs to maintain the encrypted item with its current best/worst scores, 
and we denote by $\vE(I) = (\EHL(o), \enc(W), \enc(B))$ the \emph{encrypted score item} $I$ with object id $o$ with best score $B$ and worst score $W$.

\begin{algorithm}[th!]
	$S_1$ receives $\token$ from the client\;
    Parses the $\token$ and let $L_i = L_{P_K(j)}$ for $j\in \M$\;
	   %{Sequential scanning at each list $L_{j}$}
       %At each depth $d$, maintain the bottom values 
       %$\Enc(\bottom{x}_{1}^{d}), \dots, \Enc(\bottom{x}_{S}^{d})$, where 
       %each $\Enc(\bottom{x}_i^d)$ is the encrypted score in the list $L_i$ at current depth $d$.
	\ForEach{depth $d$ at each list}{
 		\ForEach{$E(I_i^d) = \angles{\EHL(o_i^d), \enc(x_i^d)}$ $\in$ $L_i$}{
 		    \tcc{Compute the worst score for object $o_i^d$ at current depth $d$}
		    Compute $\En{W^{d}_i)}\larr\SecWorst(E(I_i^{d}), H, \pk_\p, \sk_\p)$,
			where $H$ = $\{E(I_j^{d})\}_{j\in m, i \ne j}$\;
			\tcc{Compute the best score for object $o_i^d$ at current depth $d$}
     		Compute $\En{B^{d}_i}\larr\SecBest(E(I_i^d),\{j\}_{j\ne i}, \pk_\p, \sk_\p)$\;
     	 }
     	 \tcc{gets encrypted list $\Gamma_d$ without duplicated objects}
		 Run $\Gamma_d\larr\SecDedup(\{\vE(I_i^d)\}, \pk_\p,\sk_\p)$ with $S_2$ 
		 and get the local encrypted list $\Gamma_d$\label{line:topk-dedup}\;
  		Run $T^d\larr\SecUpdate(T^{d-1}, \Gamma_d, \pk_\p, \sk_\p)$ with $S_2$ and get $T^d$\label{line:dedup}\;
		If $|T^d| < k$ elements, go to the next depth.
        Otherwise, run $\EncSort(T^d)$ by sorting on $\En{W_i}$, 
        get first $k$ items as $T^d_k$\label{line:enc-sort}\;
  		Let the $k$th and the $(k+1)$th item be $E(I'_k)$ and $E(I'_{k+1})$, 
			$S_1$ then runs $f\larr\EncCompare(E(W'_k), E(B'_{k+1}))$ 
			with $S_2$, where $E(W'_k)$ is the worst score for $E(I'_k)$, 
            and $E(B'_{k+1})$ is the best score for $E(I'_{k+1})$ in $T^d$\;
	  	\If{$f = 0$}{{\bf Halt} and return the encrypted first
			 $k$ item in $T^d_k$}\label{line:halt}
	}
\caption{Top-$k$ Query Processing: $\SecQuery$}\label{alg:topk}
\end{algorithm}

In particular, upon receiving the token $\tk=$ \sql{SELECT * FROM ER ORDERED BY} $\{P_K(i)\}_{i\in \M}$ \sql{ STOP BY k}, 
the cloud server $S_1$ begins to process the query. 
%Let the query token be $\tk=$ \sql{SELECT *} \sql{FROM \ER ORDERED BY} $\{L_{g(\K, i)}\}_{i\in \M} \sql{ STOP BY k}$. 
The token $\tk$ contains $\{P_K(i)\}_{i\in \M}$ which informs $S_1$ to perform the sequential access to the lists $\{L_{P_K(i)}\}_{i\in \M}$.
By maintaining an encrypted list $T$, which includes items with their encrypted global best and worst scores, $S_1$ updates the list $T$ depth by depth. 
Let $T^d$ be the state of the encrypted list $T$ after depth $d$. 
At depth $d$, $S_1$ first homomorphically computes the local encrypted worst/best scores for each item appearing at this depth by running $\SecWorst$ and $\SecBest$.

In $\SecWorst$, $S_1$ takes the input of the current encrypted item $E(I^d_i) = \angles{\EHL(o_i^d), \enc(x_i^d)}$ and all of the encrypted items in other lists $H$ at current depth, i.e., $ H = \{E(I^d_j)\}_{j \ne i, j\in \M}$.
$S_1$ runs the protocol $\SecWorst$ with $S_2$, and obtains the encrypted worst score for the object $o^d_i$.
Similarly, in the protocol $\SecBest$, $S_1$ takes the input of the current encrypted item $E(I^d_i) = \angles{\EHL(o_i^d), \enc(x_i^d)}$ 
and the list pointers $\{j\}_{j\ne i}$ that indicates all of the encrypted item seen so far. $S_1$ runs the protocol $\SecBest$ with $S_2$, and obtains the encrypted worst score for the object $o^d_i$.
Then $S_1$ securely replaces the duplicated encrypted objects with large encrypted worst scores $Z$ by running $\SecDedup$ with $S_2$. 
In the $\SecDedup$ protocol, $S_1$ inputs the current encrypted items, $\{E(I_i^d)\}$, seen so far. 
After the execution of the protocol, $S_1$ gets list of encrypted items $\Gamma_d$ such that there are no duplicated objects.
Next, $S_1$ updates the encrypted global list from state $T^{d-1}$ to state $T^{d}$ by applying $\SecUpdate$.
After that, $S_1$ utilizes $\EncSort$ to sort the distinct encrypted objects with their scores in $T^d$ to obtain 
the first $k$ encrypted objects which are essentially the top-$k$ objects based on their worst scores so far.
The protocol halts if at some depth, the encrypted best score of the $(k$$+$$1)$-th object, $\En{B_{k+1}}$,
is less than the $k$-th object's encrypted worst score $\En{W_k}$. This can be checked
by calling the protocol $\EncCompare(\En{W_k}, \En{B_{k+1}})$.
Followed by underlying NRA algorithm, it is easy to see that $S_1$ can correctly reports  the encrypted top-$k$ objects.  
We describe the detailed query processing in Algorithm~\ref{alg:topk}.

\subsection{Building Blocks}\label{sec:building-blocks}
In this section, we present the detailed description of the protocols $\SecWorst$, $\SecBest$, $\SecDedup$, 
and $\SecUpdate$.

\begin{figure}[tbh!]
\centering
   \begin{minipage}[b]{.6\textwidth}
   	
  	  \begin{subfigure}[t]{\textwidth} 
           \includegraphics[width=\textwidth]{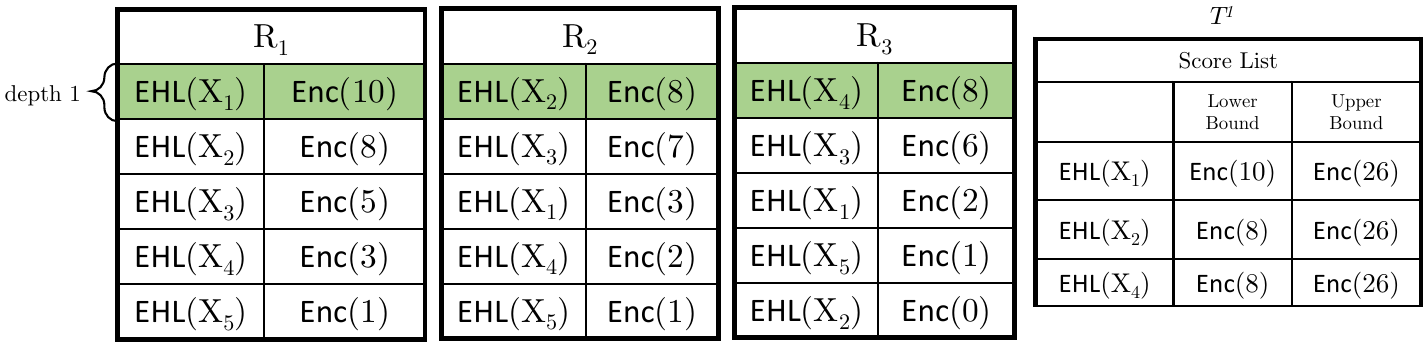}
           \caption{\centering{$\SecWorst$ $\&$ $\SecBest$ at depth 1.
           $T^1$ maintains the encrypted scores after depth 1.}}\label{fig:depth1}
       \end{subfigure}%
   \end{minipage}
   ~
   \begin{minipage}[b]{.6\textwidth}
   		\centering
    	\begin{subfigure}[t]{\textwidth}
          \centering  
           \includegraphics[width=\textwidth]{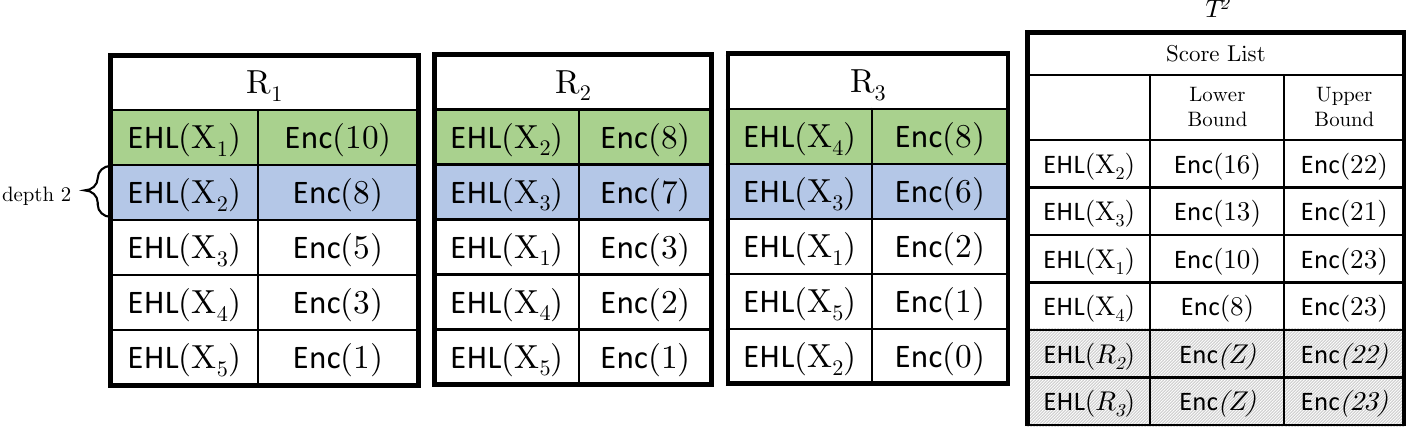}
	        \caption{\centering{$\SecWorst$ $\&$ $\SecBest$ at depth 2. 
	        $T^2$ maintains the sorted encrypted scores based on their worst scores after depth 2. 
	        Note that, after $\SecDedup$, the duplicated objects $X_1, X_2$ do not appear 
	        in the top-$k$ list twice.}}\label{fig:depth2}
    	 \end{subfigure}
   \end{minipage}
   ~
   \begin{minipage}[b]{.6\textwidth}
   	\centering
  	  \begin{subfigure}[t]{\textwidth}
    	   \centering
           \includegraphics[width=1.1\textwidth]{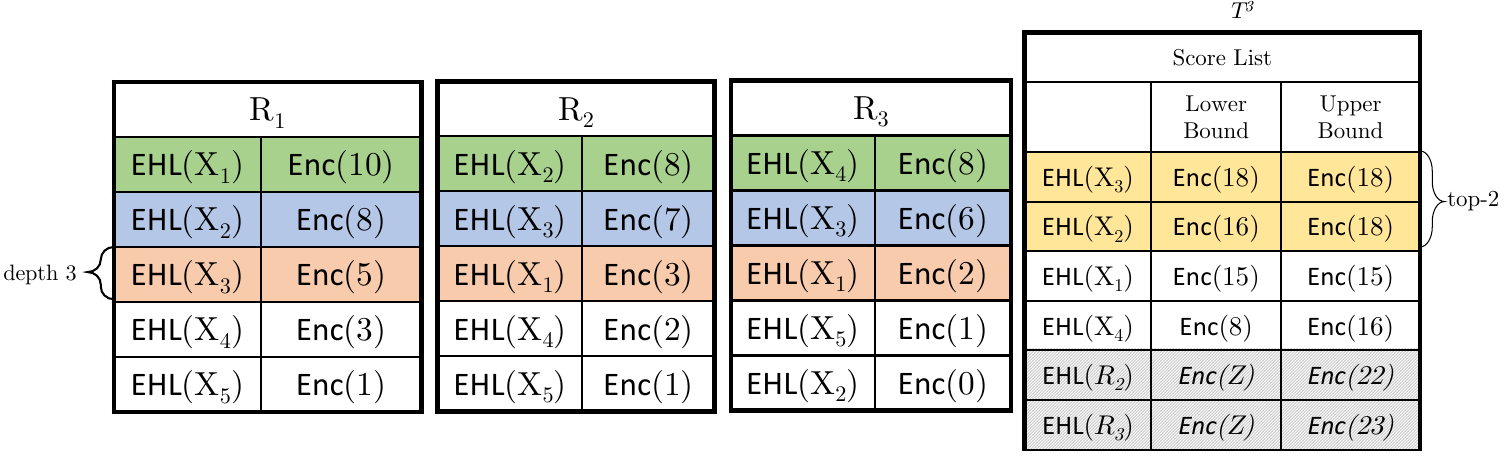}
           \caption{\centering{ $\SecWorst$ $\&$ $\SecBest$ at depth 3. 
           $T^3$ maintains the sorted encrypted scores based on their worst scores after depth 3 . 
           The $\SecQuery$ halts (based on line~\ref{line:halt}) 
           in Algorithm~\ref{alg:topk} }}\label{fig:depth3}
       \end{subfigure}
  \end{minipage}
  \caption{An example of securely computing the top-2 query for $\SecQuery$. The table has three attributes, and the score function $f$ is the sum of all the attributes.}\label{fig:example}
\end{figure}

\subsubsection{Secure Worst Score}\label{subsec:secworst}
At each depth, for each encrypted data item, 
server $S_1$ should obtain the encryption $\En{W}$, which is the worst score based on the items at the current depth \emph{only}.
Note that this is different than the normal NRA algorithm as it computes the global worst possible score for each encountered objects until the current depth.
%, whereas $\Enc(W)$ output by 
%$\SecWorst$ is the encrypted worst only based on the list encrypted items at the current depth. 
%Then, we'll update the worst score later in the $\SecUpdate$ protocol. 
We formally describe the protocol setup below:
\begin{protocol}\label{protocol:secworst}
Server $S_1$ has the input $E(I)= \angles{\EHL(o),\En{x}}$, a set of encrypted items $H$, i.e. $H= \{E(I_i)\}_{i = [|H|]}$, where $E(I_i) = \angles{\EHL(o_i), \En{x_i}}$, and the public key $\pk_\p$.
Server $S_2$'s inputs are $\pk_\p$ and $\sk_\p$.
$\SecWorst$ securely computes the encrypted worst ranking score based on $L$, i.e., 
$S_1$  outputs $\En{W(o)}$, where $W(o)$ is the worst score based on the list $H$.
\end{protocol}

\begin{example}
(Figure~\ref{fig:depth1}) At depth 1, to compute the worst score (lower bound) for $X_1$, 
$\SecWorst$ takes the encryptions $\Enc(8)$, $\Enc(8)$ at the same depth from columns $R_2$ and $R_3$ 
and finally outputs the $\Enc(10)$ as $10$ is the lower bound for $X_1$ so far after depth 1. 
\end{example}

The technical challenge here is to homomorphically evaluate the
encrypted score only based on the objects' equality relation. 
That is, if the object is the same as another $o$ from $L$, 
then we add the score to $\Enc(W(o))$, otherwise, we don't. 
However, we want to prevent the servers from knowing the relations between the objects at any depth.
We overcome this problem using the protocol $\SecWorst(E(I), L)$ between the two servers 
$S_1$ and $S_2$.  
We present the detailed protocol description of $\SecWorst$ 
in Algorithm~\ref{alg:sec-worst}.

\begin{algorithm}
	%Secworst protocol .....
	%===================
	%===================
   \SOne{$E(I)$, $H = \{E(I_j)\}$, $\pk_\p$}
   \STwo{$\pk_\p$, $\sk_\p$}
	
   \ServerOne{
	% {Get the current encrypted $\Enc(W)$ for object $E(I)$}
	 Let $|H| = m$. Generate a random permutation $\pi:[m]\to[m]$\;
	 For the set of encrypted items $H = \{E(I_j)\}$, 
	 permute each $E(I_j)$ in $H$ as $E(I_{\pi(j)}) = \EHL(o_{\pi(j)}), \En{x_{\pi(j)}}$.\;
	 
	 \For{each permuted item in $E(I_{\pi(j)})$}
	 {
      	compute $\En{b_j} \larr \EHL(o) \bitminus \EHL(o_{\pi(j)})$,
		send $\En{b_j}$ to $S_2$ \label{line:secworst-homo}
	  
      Receive $\Etwo{t_i}$ from $S_2$ and evaluate:
      	$\Etwo{\En{x_i'}}:=\Etwo{t_i}^{\En{x_i}} 
      	\cdot \Big(\Etwo{1}\Etwo{t_i}^{-1}\Big)^{\En{0}}$\;
      Run $\En{x_i'}\larr\RecoverEnc(\Etwo{\En{x_i'}}, \pk_\p, \sk_\p)$ with $S_2$\;
      }
      Set the worst score $\Enc(W)\larr(\prod_{i = 1}^{m}\En{x_i'})$ \;
	  Output $\Enc(W)$.
	  
	} 
   
   \ServerTwo{
   		\For{each $\En{b_i}$ received from $S_1$}{
   			Decrypt to get $b_i$, set
			$t_i \larr (b_i = 0 \ ? \ 1 : 0)$\;
    		Send $\Etwo{t_i}$ to $S_1$.}\label{line:worst-enc-bit}
	}
%}
\caption{$\SecWorst\big(E(I), H = \{E(I_i)\}_{i\in[|H|]}, \pk_\p, \sk_\p\big)$: Worst Score Protocol}\label{alg:sec-worst}
\end{algorithm}

Intuitively, the idea of $\SecWorst$ is that $S_1$ first generates 
a random permutation $\pi$ and permutes the list of items in $L$.
Then, it computes the $\En{b_i}$ between $E(I)$ and each permuted 
$E(I_{\pi(i)})$, and sends $\En{b_i}$ to $S_2$. 
The random permutation prevents $S_2$ from knowing the pair-wise 
relations between $o$ and the rest of the objects $o_i$'s. 
Then $S_2$ sends  {\small $\Etwo{t_i}$} to $S_1$ (line~\ref{line:worst-enc-bit}). 
Based on Lemma~\ref{lemma:eqi-bit}, $t_i = 1$ if two objects are 
the same, otherwise $t_i = 0$.
$S_1$ then computes
{\small $\Etwo{\En{x_i'}}\larr \Etwo{t_i}^{\En{x_i}} \cdot 
\Big(\Etwo{1}\Etwo{t_i}^{-1}\Big)^{\En{0}}$}.
Based on the properties of DJ Encryption,
\begin{align*}
\Etwo{t_i}^{\En{x_i}} \cdot \Big(\Etwo{1}\Etwo{t_i}^{-1}\Big)^{\En{0}}
= \Etwo{t_i\cdot \En{x_i} + (1-t_i)\cdot \En{0}} = \Etwo{\En{x_i'}}
\end{align*}
Therefore, it follows that $x_i' = 0$ if $t_i = 0$, 
otherwise $x_i' = x_i$.
$S_1$ then runs $\RecoverEnc(\Etwo{\En{x'_i}}, \pk_\p, \sk_\p)$ (describe in Algorithm~\ref{alg:recover-encryption})
to get $\En{x'_i}$. Note that the protocol $\RecoverEnc$ is also used in other protocols.
Finally, $S_1$ evaluates the following equation: 
{\small$\En{W(o)}\larr\prod_{i=1}^{m}\En{x_i'}$}. 
$S_1$ can correctly evaluate the worst score, because that, 
when $t_i = 0$, the object $o_i$ is not the same as $o$, otherwise, 
$t_i =1$. 
The following formula gives the correct computation of the worst score:
\begin{align*}
\prod_{i = 1}^{m} \En{x_i'} & = \En{\sum_{i=1}^{m}{x_i'}}, \text{ where }x_i' =  
\begin{cases} x_i &\mbox{if } o_i = o \\ 
0 & \mbox{otherwise }\end{cases} 
\end{align*}

\begin{algorithm}
%\Def{\RecoverEnc $\big(\Etwo{\En{c}}\big)$}{ %\Comment{Note that $c =\En{0}$ or $\En{x}$}
\SOne{\Etwo{\En{c}}, $\pk_\p$}   \STwo{$\pk_\p$, $\sk_\p$}

 \ServerOne{
    Generate $r\random \Z_N$,
    compute and send $\Etwo{\En{c+r}} \larr \Etwo{\En{c}}^{\En{r}}$ to $S_2$.
 }
 
 \ServerTwo{Decrypt as $\En{c + r}$ and send back to $S_1$} %\Comment{Only decrypts two-layered}

 \ServerOne{
	Receive $\En{c+r}$ and compute: $\En{c} = \En{c+r}\cdot\En{r}^{-1}$\;
	Output $\En{c}$. 
 }

\caption{$\RecoverEnc(\Etwo{\En{c}}, \pk_\p, \sk_\p)$ Recover Encryption}\label{alg:recover-encryption}
\end{algorithm}

Note that nothing has been leaked to $S_1$ at the end of the protocol.  However, there is some leakage function revealed to $S_2$ at current depth, which we will describe it in detail in later section.
However, even by learning this pattern, $S_2$ has still no idea on which particular item is
the same as the other at this depth since $S_1$ randomly 
permutes the item before sending to $S_2$ and everything has been encrypted. 
Moreover, no information has been leaked on the objects' scores.

\subsubsection{Secure Best Score}
The secure computation for the best score is different from computing the worst score. Below we describe the protocol $\SecBest$ between $S_1$ and $S_2$:
\begin{protocol}
Server $S_1$ takes the inputs of the public key $\pk_\p$, $E(I)$ $=$ $\angles{\EHL(o),\En{x}}$ for the object $o$ in list $L_i$, and
 a set of pointers $\P = \{j\}_{i\ne j, j \in \M}$ to the list in $\ER$. Server $S_2$'s inputs are $\pk_\p$, $\sk_\p$. 
The protocol $\SecBest$ securely computes the encrypted best score at the current depth $d$, i.e., $S_1$ finally outputs $\En{B(o)}$, where $B(o)$ is the best score for the $o$ at current depth.
\end{protocol}

\begin{example}
(Figure~\ref{fig:depth2}) At depth 2, to compute the best score (upper bound) for $X_4$, $\SecBest$ 
takes the encryptions seen so far, then based on the scores it outputs $\Enc(23)$ as $23$ is the upper bound for
$X_4$ after depth 2.
\end{example}

\begin{algorithm}[th!]
\SOne{$E(I_i)$ in list $L_i$, $\P= \{j\}_{i \ne j}$, $\pk_\p$}
\STwo{$\pk_\p$, $\sk_\p$}

  \ServerOne{
	% {Get the current encrypted $\Enc(W)$ for object $E(I)$}
 	 \ForEach{list $L_i$}{
 	    maintain $\enc(\bottom{x}_{i}^{d})$ for $L_i$, where 
        $\enc(\bottom{x}_i^d)$ is the encrypted score at depth $d$.
     }
 
	 Generate a random permutation $\pi:[l]\to[l]$\;
	 Permute each $L_i$ as $L_{\pi(i)} = \EHL(o_{\pi(i)}), \En{x_{\pi(i)}}$\;
	 
	 \ForEach{permuted $E(I_{\pi(i)})$}{
        compute $\En{b_i} \larr \EHL(o) \bitminus \EHL(o_i)$
	 }
    
     send $\En{b_i}$ to $S_2$ receive $\Etwo{t_i}$  and compute:
      $\Etwo{\En{x_i'}}:=\Etwo{t_i}^{\En{x_i}} 
      	\cdot \Big(\Etwo{1}\Etwo{t_i}^{-1}\Big)^{\En{0}}$\;
	  run $\En{x_i'}\larr\RecoverEnc(\Etwo{\En{x_i'}}, \pk_\p, \sk_\p)$ with $S_2$\;\label{line:secbest-score}  
  
	  compute
	  $\Etwo{\Enc(\bottom{x}_{i}'^{d})}$$\larr$$\big(1$$-$$\prod_{i = 1}^{d}\Etwo{t_i}\big)^{\Enc(\bottom{x}_{i}^{d})}$\;
	  run $\Enc(\bottom{x}_{i}'^{d})$$\larr$$\RecoverEnc(\Etwo{\Enc(\bottom{x}_{i}'^{d})}, \pk_\p, \sk_\p)$ with $S_2$\;
	  set $\Enc(B_i)\larr\Enc(\bottom{x}_{i}'^{d})\cdot(\prod_{i = 1}^{l}\En{x_i'})$\;\label{line:secbest-add}
	  compute $\Enc(B) \larr \prod_{i = 1}^{m}\Enc(B_i)$ and output $\Enc(B)$\;
 }
 \ServerTwo{
   \For{ $\En{b_i}$ received from $S_1$}{
	   Decrypt to get $b_i$.
	    If $b_i = 0$, set $t_i = 1$, otherwise, set $t_i = 0$\;
    Send $\Etwo{t_i}$ to $S_1$.}\label{line:enc-bit}

  }

\caption{$\SecBest\big(E(I_i), \P, \pk_\p, \sk_\p\big)$ Secure Best Score.}\label{alg:sec-best}
\end{algorithm}

At depth $d$, let $E(I)$ be the encrypted item in the list $L_i$, then its best score up to this depth is based on the whether this item has appeared in other lists 
$\{L_j\}_{j \ne i, j \in \M}$.
%Therefore, the same technique from $\SecWorst$ cannot
%be applied here, since $S_1$ has no information on the encrypted
%item $E(I)$ and cannot simply use the best score from the previous depth.
The detailed description for $\SecBest$ is described in Algorithm~\ref{alg:sec-best}.

In $\SecBest$, $S_1$ has to scan the encrypted items in the other lists to securely evaluate
the current best score for the encrypted $E(I)$. 
The last seen encrypted item in each sorted list contains the encryption of the 
\emph{best possible values} (or \emph{bottom scores}).
If the same object $o$ appears in the previous depth 
then homomorphically adds the object's score to the encrypted best score $\enc(B)$, 
otherwise adds the bottom scores seen so far to $\enc(B)$.
In particular, $S_1$ can homomorphically evaluate (at line~\ref{line:secbest-score}):
$\Etwo{x_i'} = \Etwo{t_{i}\cdot\enc(x_i) + (1-t_{i})\cdot\enc(0)}$. That is, if $t_{i} = 0$ which means 
item $I$ appeared in the previous depth, $x_i'$ will be assigned the corresponding 
score $x_i$, otherwise, $x_i' = 0$. Similarly, $S_1$ homomorphically evaluates the following:
$\enc(\bottom{x}_{i}'^{d})) = \enc((1-\sum_{i}^dt_{i})\cdot\bottom{x}_{i}^{d})$. 
If the item $I$ does not appear in the previous depth, then $(1-\sum_{i}^dt_{i}) = 1$ since
each $t_i = 0$, therefore, $\bottom{x}_{i}'^{d}$ will be assigned to the bottom value
$\bottom{x}_{i}^{d}$. Finally, $S_1$ homomorphically add up all the encrypted scores and 
get the encrypted best scores (line~\ref{line:secbest-add}).

\subsubsection{Secure Deduplication}
%After $S_1$ securely computes the local best score and worst score for each item in the current depth, 
At each depth, some of the objects might be repeatedly computed since the same objects may appear in different sorted list at the same depth.
$S_1$ cannot identify duplicates since the items and their scores are probabilistically encrypted.    
We now present a protocol that deduplicates the encrypted objects in the following.

\begin{protocol}
Let the $\vE(I)$ be an encrypted scored item such that $\vE(I) = (\EHL(o), \En{W}, \En{B})$, i.e.
the $\vE(I)$ is associated with $\EHL(o_i)$, its encrypted worst and best score $\En{W_i}$, $\En{B_i}$.
Assuming that $S_1$'s inputs are the public key $\pk_\p$, a set of encrypted scored items $Q = \{\vE(I_i)\}_{i\in[|Q|]}\}$.
Server $S_2$ has the public key $\pk_\p$ and the secret key $\sk_\p$.
The execution of the protocol $\SecDedup$ between $S_1$ and $S_2$ 
enables $S_1$ to get a new list of encrypted distinct objects and their scores, 
that is, at the end of the protocol, $S_1$ outputs a new list of items $\vE(I'_1), ...
,\vE(I'_{l})$, and there does not exist $i, j \in [l]$ with $i\ne j$ such that $o_i = o_j$. 
Moreover, the new encrypted list should not affect the final top-$k$ results.
\end{protocol}
\begin{example}
(Fig~\ref{fig:depth2}) After scanning depth 2, $\SecDedup$ deduplicates the repeated objects in the list $T^2$.
$X_1$ and $X_2$ are the repeated objects. $\SecDedup$ replaces those the objects with random ids $R_1$ and $R_2$ and replaces the worst scores with large number $Z$ so that they do not appear in the top-2 list.
\end{example}

\begin{algorithm}[th!bp]
 \SOne{$\vE(I_1), \dots , \vE(I_{l})$, $\pk_\p$}
 \STwo{$\pk_\p$, $\sk_\p$}
 \SOneOut{Output $\vE(I'_1) \dots \vE(I'_{l})$ without duplicated objects}
 
 \small{
 \ServerOne{
 	Let $|Q| = l$\;
	\For{$i = 1\dots l$}{
    	\For{$j = i+1, \dots, l$}{
	  		Compute 
			$\En{b_{ij}} \larr \big(\EHL(o_i)\bitminus \EHL(o_j)\big)$\;
			\label{line:dep-homo}
	  	}
    	Set the symmetric matrix $\vB$ such that $\vB_{ij} = \En{b_{ij}}$\;
	}
	$S_1$ generate it own public/private key $(\pk', \sk')$\;
	\For{each $\vE(I_{i})$}{
       	  Generate random $\valpha_{i} \in {\Z^k_N}$, $\beta_{i}$, $\gamma_i \in \Z_N$\;
		  Compute $\vE(\Irand_i)=(\EHL(\orand_i),\En{\Wrand_i},\En{\Brand_i})
		  	\larr \cc{Rand}(E(I_i), \valpha_i, \beta_i, \gamma_i)$\;\label{line:rand}
    	  Compute $H_i=\En[\pk']{\valpha_{i}}||\En[\pk']{\beta_{i}}||\En[\pk']{\gamma_{i}}$ using $\pk'$\;
	}
	Generate a random permutation $\pi: [l]\to [l]$\;
    Permute $\pi(\vB)$, i.e. permute $\vB_{\pi(i)\pi(j)}$ for each $B_{ij}$\;
	Permute $\vE(\Irand_{\pi(i)})$ and $H_{\pi(i)}$ for $i \in [1, l]$\;\label{line:permute-1}
	Send $\pi(\vB)$, $\{\vE(\Irand_{\pi(i)})\}^l_{i=1},\{H_{\pi(i)}\}^l_{i=1}$, $\pk'$ to $S_2$\;
   }
  \ServerTwo{
	Receive $\pi(\vB)$, $\{\vE(\Irand_{\pi(i)})\}^l_{i=1}$,
	 $\{H_{\pi(i)}\}^l_{i=1}$, and $\pk'$ from $S_1$\;
   
	\For{upper triangle of $\pi(\vB)$}{
	  decrypt $b_{\pi(i)\pi(j)}:=\dec_{\sk_{\p}}(\vB_{\pi(i)\pi(j)})$\;
      \If{$b_{\pi(i)\pi(j)}$ = 0}{ \label{line:check-dup}
        remove $\vE(\Irand_{\pi(i)}), H_{\pi(i)}$\;  \tcc{Deduplicate items}
        %Set $\Enc(\Irand_{\pi(j)}), H_{\pi(j)}$.
        %(w.l.o.g we assume we keep $I_{\pi(i)}$).
        randomly generate $o_i$, and $\valpha_i\in\Z_N^k$,
        $\beta_i,\gamma_i\in\Z_N$\;\label{line:replace-1}
        set $W_i=Z+\beta_i$ and $B_I=Z+\gamma_i$, where $Z = N-1$ \;
        Set $\vE(I'_{\pi(i)}) := (\EHL(o)\odot\Enc(\valpha_i), \Enc(W_i), \Enc(B_i))$\;
        Compute $H'_{\pi(i)}\larr\En[\pk']{\valpha_i}||
			\En[\pk']{\beta_i}||\En[\pk']{\gamma_i}$ using $\pk'$\;\label{line:replace-2}
         }
	 }
     \For{remaining $\En{\Irand_{\pi(j)}}$, $H_{\pi(j)}$}{
        generate random $\valpha'_{i} \in {\Z^k_N}$, $\beta'_{i}$, and $\gamma'_i \in \Z_N$\;
        $\En{I'_i} = (\EHL(o'_i), \En{W'_i}, \En{B'_i})
       	 \larr \cc{Rand}(\Enc(\Irand_{\pi(j)}), \valpha'_i,\beta'_i, \gamma'_i)$\;\label{line:re-rand} 
         $H_{\pi(j)}=\En[\pk']{\valpha_{\pi(j)}},\En[\pk']{\beta_{\pi(j)}},\En[\pk']{\gamma_{\pi(j)}}$\;
         set $H_{i}'=\En[\pk']{\valpha_{\pi(j)}}\cdot\En[\pk']{\valpha_{i}'}||
    		   			\En[\pk']{\beta_{\pi(j)}}\cdot\En[\pk']{\beta_{i}'}||
                  		\En[\pk']{\gamma_{\pi(j)}}\cdot\En[\pk']{\gamma_{i}'}$\;\label{line:rand-homo}  
  	 }
       
  	 Generate a random permutation $\pi':[l]\to[l]$. 
	  Permute new list $\vE(I_{\pi'(i)})$ and $H_{\pi'(i)}'$, 
 	  then send them back to $S_1$\;\label{line:re-permute}
 }
  
 \ServerOne{
      Decrypt each $H_{\pi'(i)}'$ as $\valpha_{\pi'(i)}'$, $\beta_{\pi'(i)}'$, 
       $\gamma_{\pi'(i)}'$ using $\sk'$\;\label{line:rand-decrypt}

  \ForEach{$\vE(I'_{\pi'(i)}) = \EHL(o'_{\pi'(i)}), \En{W'_{\pi'(i)}}, \En{B'_{\pi'(i)}}$}{
     Run and get $\En{\Ihat_i} =(\EHL(\ohat_{i}), \En{\What_i}, \En{\Bhat_i}) \larr
        			\Rand(\Enc(I'_{\pi'(i)}), -\valpha_{\pi'(i)}', -\beta_{\pi'(i)}', 
                    	-\gamma_{\pi'(i)}')$\;\label{line:re-cover}
	}
  Output the encrypted list $\vE(\Ihat_1) ... \vE(\Ihat_{l})$\;
 }
 
 }
\caption{$\SecDedup\big(Q = \{\vE(I_i)\}_{i\in[|Q|]}, \pk_\p, \sk_\p\big)$ : De-duplication Protocol }\label{protocol:dep}
\end{algorithm}

\begin{algorithm}
	\small{
	\caption{$\cc{Rand}\big(\vE(I), \valpha, \beta, \gamma\big)$: Blinding the randomness}
	Let $\vE(I) = (\EHL(o), \En{B}, \En{W})$\;
	 Compute $\vE(\valpha), \Enc(\beta), \Enc(\gamma)$\;
	 Compute
	 $\EHL(o)  \larr \EHL(o)\odot\En{\valpha}$,
	 $\En{W}\larr \En{W}\cdot\En{\beta}$, and
	 $\En{B}\larr \En{B}\cdot\En{\gamma}$\;
	 Output $\vE(I') = (\EHL(o), \En{W}, \En{B})$\;
	 }
\end{algorithm}

\begin{algorithm}[h!]
 \SOne{$\pk_\p$, $T^{d-1}$, $\Gamma^d$ (encrypted list without duplicated objects) }
 \STwo{$\pk_\p$, $\sk_\p$}
 
 \small{
	 \ServerOne{
		Permute $\vE(I_i)\in \Gamma^d$ as $\vE(I_{\pi(i)})$ based on random permutation $\pi$\;

		\ForEach{each permute $\vE(I_{\pi(i)})$}{
			\ForEach{each $\vE(I_j) \in T^{d-1}$}{
				Let $\enc(W_{i})$, $\enc(B_{i})$ be encrypted worst/best score in $\vE(I_{\pi(i)})$
				, and let $\enc(W_{j}), \enc(B_{j})$ be encrypted worst/best score in $\vE(I_{j})$\;
			    Compute $\enc(b_{ij})\larr\EHL(I_{\pi(i)})\bitminus\EHL(I_j)$,
			   send $\enc(b_{ij})$ to $S_2$ and get $\Etwo{t_{ij}}$\;
			    Compute $\Etwo{\Enc(W'_i)}\larr\Etwo{t_{ij}}^{\enc(W_{i})}$,
			    $\enc(W_i')\larr\RecoverEnc(\Etwo{\enc(W'_i)}, \pk_\p, \sk_\p)$, 
			    $\enc(W'_j)\larr\enc(W_j)\enc(W'_i)$\; %\Comment{\sql{\scriptsize updated worst score}}
			    Compute $\Etwo{\Enc(B'_j)}\larr\Etwo{t_{ij}}^{\enc(B_{i})}$
			    $\big(\Etwo{1}\Etwo{t_{ij}}^{-1}\big)^{\enc(B_{j})}$
	            $\enc(B'_j)\larr\RecoverEnc(\Etwo{\Enc(B'_j)}, \pk_\p, \sk_\p)$\; %\Comment{\sql{\scriptsize updated best score}}
			   Set $\enc(W'_j),\enc(B'_j)$ as the updated score for $\enc(I_j)$\;
			   compute 
			   $\Etwo{\enc(W'_i)} \larr \Etwo{t_{ij}}^{\enc(W_{i})} \big(\Etwo{1}\Etwo{t_{ij}}^{-1}\big)^{\enc(W'_{j})}$,
			   run $\enc(W'_i)\larr\RecoverEnc(\Etwo{\enc(W'_i)}, \pk_\p, \sk_\p)$ 
			   and maintain $\enc(W'_i)$ for each $E(I_{\pi(i)})$             
			}
			Update the encrypted worst score to $\enc(W'_i)$ 
			for each $\enc(I_{\pi(i)})$ and keep the original best score $\enc(B_i)$\;
		}
	    Append the updated $\enc(I_{\pi(i)})$ to $T^{d-1}$ and get $T^{d}$\;
     	$S_1$ and $S_2$ execute $\SecDedup(T^{d}, \pk_\p, \sk_\p)$ and get the updated list $T^{d}$\;\label{line:update-dedup}
     	$S_1$ finally outputs $T^d$.
  }
 \ServerTwo{
   \ForEach{$\En{b_i}$ received from $S_1$}{Decrypt to get $b_i$\;
	    If $b_i = 0$, set $t_i = 1$, otherwise, set $t_i = 0$.
	    Send $\Etwo{t_i}$ to $S_1$.}\label{line:enc-bit}
 }
 }
\caption{A Secure Update Protocol $\SecUpdate \big(T^{d-1}, \Gamma^d, \pk_\p, \sk_\p\big)$}
\label{protocol:sec-update}
\end{algorithm}

Intuitively, at a high level, $\SecDedup$ let $S_2$ obliviously find the duplicated objects and its scores, and replaces the object id with a random value and its score with a large enough value $Z = N-1\in \Z_N$
(the largest value in the message space) such that, after sorting the worst scores, it will definitely not appear in the top-$k$ list.

\begin{figure}[h]
\centering
 	\includegraphics[width=.6\textwidth]{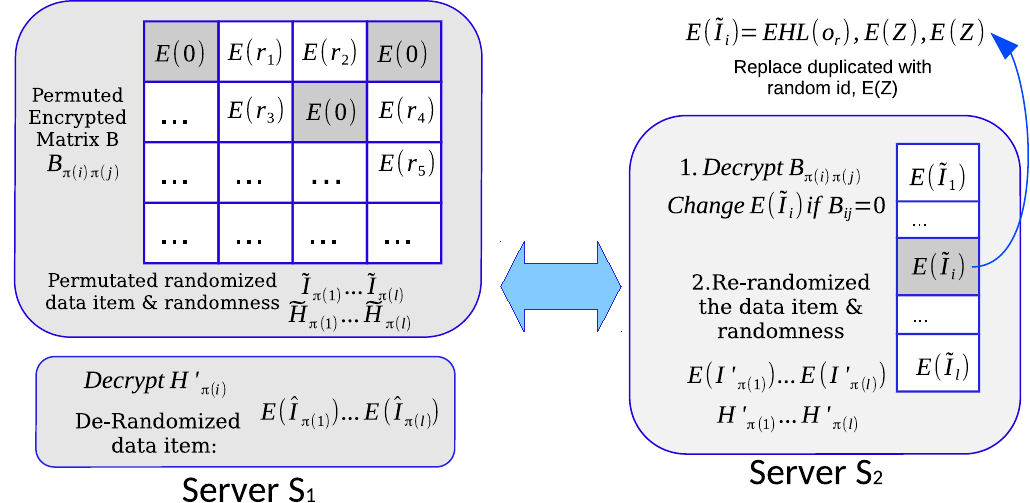}
    \caption{Overview of the $\SecDedup$ protocol}\label{fig:secDedup}
\end{figure}
Figure~\ref{fig:secDedup} gives the overview of our approach.
The technical challenge here is to allow $S_2$ to find the duplicated objects without letting
$S_1$ know which objects have been changed.
%During the execution of the protocol, $S_2$ should not learn anything about the underlying objects and their scores.
The idea is to let the server $S_1$ send a encrypted permuted matrix $\vB$, 
which describes the pairwise equality relations between the objects in the list. 
$S_1$ then use the same permutation to permute the list of blinded 
encrypted items before sending it to $S_2$. 
This prevents $S_2$ from knowing the original data.
For the duplicated objects, $S_2$ replace the scores with a large 
enough encrypted worst score. 
On the other hand, after deduplication, $S_2$ also has to blind the
data items as well to prevent $S_1$ from knowing which items are the duplicated ones.
%The $\SecDedup$ protocol enables $S_2$ to discover duplicated objects, therefore 
$S_1$ finally gets the encrypted items without duplication. 
Algorithm~\ref{protocol:dep} describes the detailed protocol.

%\ignore{
We briefly discuss the execution of the protocol as follows:
$S_1$ first fill the entry $\vB_{ij}$ by computing $\EHL(o_i)\bitminus\EHL(o_j)$.
Note that, since the encrypted $\vB$ is symmetric matrix indicating the equality
relations for the list, therefore, $S_1$ only need fill the upper triangular for $\vB$.
 and lower triangular can be filled by the fact that $\vB_{ij} = \vB_{ji}$.
In addition, $S_1$ blinds the encrypted item $\Enc(I_i)$
 by homomorphically adding random values and get $\Enc(\Irand_i)$.
This prevents $S_2$ from knowing the values of the item since $S_2$ has the secret key.
Moreover, $S_1$ encrypts the randomnesses using his own public key $pk'$ and get $H_i$.  
To hide the relation pattern between the objects in the list,
$S_1$ applies a random permutation $\pi$ to the matrix 
$\vB_{\pi(i)\pi(j)}$, as well as $\Enc(I_{\pi(i)})$ and $H_{\pi{(i)}}$.  
Receiving the ciphertext, $S_2$ only needs to decrypt
the upper triangular of the matrix, 
$S_2$ only keeps one copy of the $\Enc(\Irand_{\pi(i)})$, $H_{\pi(i)}$ and 
$\Enc(\Irand_{\pi(j)})$, $H_{\pi(j)}$ if $b_{\pi(i)\pi(j)} = 0$. 
Without loss of generality, we keep 
$\Enc(\Irand_{\pi(j)})$, $H_{\pi(j)}$ 
and replace $\Enc(\Irand_{\pi(i)}), H_{\pi(i)}$ as line~\ref{line:replace-1}-\ref{line:replace-2}.
For the unchanged item, $S_2$ blinds them using as well (see line~\ref{line:re-rand}-\ref{line:rand-homo}).
It worth noting that the randomnesses added by $S_2$ are 
to prevent $S_1$ from discovering which item has been changed or not. 
$S_2$ also randomly permute the list as well (line~\ref{line:re-permute}).
$S_1$ homomorphically recovers the original values by decrypting the received 
$H_{\pi'(i)}'$ using his $\sk'$ (see line~\ref{line:re-cover}).
$S_1$ eventually the new permuted list of encrypted items.
%}

For the duplicated objects, the protocol replaces their object id with a random value, and its worst score with a large number $Z$.
For the new encrypted items that $S_2$ replaced (line~\ref{line:replace-1}),
{\small $\En{\Ihat_{i}} = (\EHL(\ohat_{i}), \En{\What_{i}}, \En{\Bhat_{i}})$}, 
we show in the following that {\small $\En{\What_{i}}$} is indeed a new encryption of the permuted {\small $\En{W_{\pi'(\pi(j))}}$} for some $j\in [l]$.
As we can see, the {\small $\En{\What_{i}}$} is permuted by $S_2$'s random $\pi'$, 
i.e. {\small $\En{\What_{\pi'(i)}}$} (see line~\ref{line:re-permute}).
Hence, it follows that:
{\small
\begin{align}
& \En{\What_{\pi'(i)}} \sim \enc\big(W'_{\pi'(i)} - \beta_{\pi'(i)}'\big) \label{eq:1}\\ %\ref{line:re-cover})}
& \sim \enc\big(W'_{\pi'(i)} - (\beta_{\pi'(\pi(j))}+\beta_{\pi'(i)}')\big)\label{eq:2}\\
      %line~\ref{line:rand-decrypt}, \ref{line:rand-homo})}\\                 
& \sim \enc\big(\Wrand_{\pi'(\pi(j))} + \beta_{\pi'(\pi(j))} 
       - (\beta_{\pi'(\pi(j))}+\beta_{\pi'(i)}')\big)\label{eq:3}\\
      %\ref{line:re-rand}
& \sim \enc\big(W_{\pi'(\pi(j))})+ \beta_{\pi'(\pi(j))} + \beta_{\pi'(i)}'
    - (\beta_{\pi'(\pi(j))}+\beta_{\pi'(i)}')\big)\label{eq:4}\\
& \sim \enc\big(W_{\pi'(\pi(j))} \big)
\end{align}
}
In particular, from Algorithm~\ref{protocol:dep}, we can see that Equation~(\ref{eq:1}) 
holds due to line~\ref{line:re-cover}, Equation~(\ref{eq:2}) holds since line~\ref{line:rand-homo} and \ref{line:rand-decrypt}, Equation~(\ref{eq:3})
holds due to line~\ref{line:re-rand}, and Equation~(\ref{eq:4}) holds because of line~\ref{line:rand}.
On the other hand, for the duplicated items that $S_1$ has changed from line~\ref{line:replace-1} to \ref{line:replace-2},
by the homomorphic operations of $S_1$ at line~\ref{line:re-cover}, we have 
{\small
\begin{align*}
\En{\What_{\pi'(k)}} &\sim \enc(W'_{\pi'(k)} - \beta_{\pi'(k)}') 
  \sim \enc(Z + \beta_{\pi'(k)}' - \beta_{\pi'(k)}') \sim \enc(Z)
\end{align*}}
Since $Z$ is a very large enough number, this randomly generated objects definitely do not appear
in the top-$k$ list after sorting.

\subsubsection{Secure Update} 
%After computing the worst/best score and de-duplicate encrypted items,
At each depth $d$, we need to update the current list of objects with the latest global worst/best scores. 
At a high level, $S_1$ has to update the encrypted list $\Gamma^d$ from the state $T^{d-1}$ (previous depth) to $T^{d}$, 
and appends the new encrypted items at this depth. 
Let $\Gamma^d$ be the list of encrypted items with the encrypted worst/best scores $S_1$ get at depth $d$.
Specifically, for each encrypted item $E(I_i)\in T^{d-1}$ and each
$E(I_j)\in L_d$ at depth $d$, we update $I_i$'s worst score by adding the worst from 
$I_j$ and replace its best score with $I_j$'s best score if $I_i = I_j$ since the worst
score for $I_j$ is the in-depth worst score and best score for $I_j$ is the most updated 
best score. If $I_i \ne I_j$, we then simply append $E(I_j)$ with its scores to the list.
Finally, we get the fresh $T^d$ after depth $d$. We describe the $\SecUpdate$
protocol in Algorithm~\ref{protocol:sec-update}.

%\section{Security Discussion}
%In this thesis, we do not provide a proof of security for  $\SecTopK = (\Enc, \Token, \SecQuery)$ and leave it as future work. 
%In particular, since our non-colluding two-server model is different than the standard model considered from searchable and structured encryption, 
%a provable security treatment of our protocol requires new definitions against which to prove security. 
%Intuitively, however, the security guarantee our protocol should achieve is that the encrypted database and query processing protocol should 
%reveal no information about the plaintext database and queries to either server ($S_1$ or $S_2$) beyond some well-specified and reasonable leakage. 
%It remains an interesting and important open question to mathematically formalize such a notion and prove that our protocol satisfies it.
%

\newcommand{\oput}{\cc{Output}}

\section{Security}\label{sec:security}
Since our construction supports a more complex query type than searching, 
the security has to capture the fact that the adversarial servers also get the 
`views' from the data and meta-data during the query execution. 
The CQA security model in our top-$k$ query processing defines a $\Real$ world and an $\Ideal$ world. In the real world, the protocol between the adversarial servers and the client executes just like 
the real $\SecTopK$ scheme. In the ideal world, we assume that there exists two simulator 
$\SIM_1$ and $\SIM_2$ who get the leakage profiles from an ideal functionality and try to simulate the execution for the real world.
We say the scheme is CQA secure if, after polynomial many queries, no ppt distinguisher can distinguish between the two worlds only with non-negligibly probability. We give the formal security definition in Definition~\ref{def:sec-query}.

\begin{definition}\label{def:sec-query}
Let $\SecTopK = (\enc, \Token, \SecQuery)$ be a top-$k$ query processing scheme 
and consider the following probabilistic experiments where $\E$ is an environment, 
$\C$ is a client, $S_1$ and $S_2$ are two non-colluding semi-honest servers, 
$\SIM_1$ and $\SIM_2$ are two simulators, and $\L_\setup$, 
$\L_\query = (\L^1_\query, \L^2_\query)$ are (stateful) leakage functions:
\begin{description}%[noitemsep, nolistsep]
\item[\bf $\Ideal(1^\lambda)$:]
   The environment $\E$ outputs a relation $R$ of size $n$ and 
 	sends it to the client $\C$. $\C$ submits the relation $R$ 
    to $\Fideal$, i.e. an \textbf{ideal} top-$k$ functionality.
	$\Fideal$ outputs $\L_\setup(R)$ and $1^\lambda$, and gives 
    $\L_\setup(R)$, $1^\lambda$ to $\SIM_1$.
    Given $\L_\setup(R)$ and $1^\lambda$, $\SIM_1$ generates 
    an encrypted $\ER$. 
    
    $\C$ generates a polynomial number of adaptively chosen queries 
    $(q_1,\ldots,q_m)$. For each $q_i$, $\C$ submits $q_i$ to $\Fideal$, $\Fideal$
    then sends $\L^1_\query(\ER, q_i)$ to $\SIM_1$ and sends $\L^2_\query(\ER, q_i)$
    to $\SIM_2$.
	
	After the execution of the protocol, $\C$ outputs $\out^\Ideal_\C$, 
 	$\SIM_1$ outputs $\out_{\SIM_1}$, and $\SIM_2$ outputs $\out_{\SIM_2}$.

\item[\bf $\Real_{\A}(1^\lambda)$:]
	The environment $\E$ outputs a relation $R$ of size $n$ and 
 	sends it to $\C$. 
    $\C$ computes $(\K, \ER) \larr \enc(1^\lambda, R)$
     and sends the encrypted $\ER$ to $S_1$.
     
    $\C$ generates a polynomial number of adaptively chosen queries $(q_1,\ldots,q_m)$. 
         For each $q_i$, $\C$ computes $\tk_i\larr\Token(K, q_i)$ and sends $\tk_i$
         to $S_1$. $S_1$ run the protocol $\SecQuery\big(\tk_i, \ER\big)$ with $S_2$. 
         
         After the execution of the protocol, $S_1$ sends the encrypted results to $\C$. 
         $\C$ outputs $\out^\Real_\C$, 
 		 $S_1$ outputs $\out_{S_1}$, and $S_2$ outputs $\out_{S_2}$.
\end{description}
We say that $\SecQuery$ is adaptively $(\L_\setup,\L_\query)$-semantically secure (CQA) if the following 
holds:
  \begin{enumerate}[noitemsep, nolistsep]
	\item For all $\E$, for all $S_1$, there exists a ppt simulator $\SIM_1$ 
       such that the following two distribution ensembles are computationally indistinguishable
       \begin{align*}
	       \angles{\out_{S_1}, \out^\Real_\C} \approxeq \angles{\out_{\SIM_1}, \out^\Ideal_\C}
	   \end{align*}
	\item For all $\E$, for all $S_2$, there exists a ppt simulator $\SIM_2$ 
		such that the following two distribution ensembles are computationally indistinguishable
        \begin{align*}
	       \angles{\out_{S_2}, \out^\Real_\C} \approxeq \angles{\out_{\SIM_2}, \out^\Ideal_\C}
	   \end{align*}
   \end{enumerate}
\end{definition}

We formally define the leakage function in $\SecTopK$.
Let the setup leakage $\L_\setup = (|R|, |M|)$, i.e. the size
of the database and the total number of attributes. $\L_\setup$ is the
leakage profile revealed to $S_1$ after the execution of $\Enc$.
During the query processing, we allow $\L_\query = (\L^1_\query, \L^2_\query)$ revealed to the servers.
Note that $\L^1_\query$ is the leakage function for $S_1$, while $\L^2_\query$ 
is the leakage function for $S_2$.
In our scheme, $\L^1_\query = (\QP, D_q)$, where $\QP$ is the \emph{query pattern}
indicating whether a query has been repeated or not.
Formally, for $q_j\in\vq$, the \emph{query pattern $\QP(q_j)$} is a binary vector of length $j$ with a $1$ at location $i$ if $q_j = q_i$ and $0$ otherwise. 
$D_q$ is the halting depth for query $q$.
For any query $q$, we define the equality pattern as follows:
suppose that there are $m$ number of objects at each depth, then
\begin{itemize}
\item \emph{Equality pattern $\EP_d(q)$:} a symmetric binary $m\times m$ matrix $M^d$, 
         where $M^d[i, j] = 1$ if there exist $o_{\pi(i')} = o_{\pi(j')}$ 
         for some random permutation $\pi$ such that $\pi(i')=i$ and $\pi(j')=j$,
         otherwise $M^d[i, j] = 0$.
\end{itemize}
Then, let $\L^2_\query = (\{\EP_d(q)\}^{D_q}_{i = 1})$, i.e. at depth $d\le D_q$ the equality
pattern indicates the number of equalities between objects.
Note that $\EP^d(q)$ does not leak the equality relations between objects at 
any depth in the original database, i.e. the server never knows which objects 
are same since the server doesn't know the permutation.

\newcommand{\HP}{\cc{Hp}}
\newcommand{\Dp}{\cc{Dp}}

\begin{theorem}\label{thm:sec-proof}
Suppose the function used in $\EHL$ is a pseudo-random function and the Paillier encryption is CPA-secure, 
then the scheme $\SecTopK= (\Enc, \Token, \SecQuery)$ we proposed is $(\L_\setup,\L_\query)$-CQA secure.
\end{theorem}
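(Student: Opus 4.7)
The plan is a standard simulation-based argument with a sequence of hybrids, one direction per non-colluding server. I would first construct the two simulators $\SIM_1, \SIM_2$ from the leakage profiles, and then prove the two indistinguishability requirements of Definition~\ref{def:sec-query} separately, invoking PRF-security of the HMAC used inside $\EHL$ and CPA-security of Paillier/Damg{\aa}rd-Jurik.

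For $\SIM_1$ (the view of $S_1$): given $\L_\setup=(|R|,M)$, the simulator fabricates $\ER$ by placing, at each of the $M$ lists and $n$ depths, a pair $\langle \widetilde{\EHL}, \En{0}\rangle$ where $\widetilde{\EHL}$ is a vector of $s$ fresh Paillier encryptions of $0$. Upon the $j$-th query, $\SIM_1$ reads $\QP(q_j)$ from $\L^1_\query$: if $q_j$ repeats an earlier $q_i$ it replays $\tk_i$, otherwise it picks a fresh random length-$m$ index set as the token. It then runs the protocol up to depth $D_q$ using freshly generated encryptions of $0$ in place of every ciphertext $S_1$ would have computed, sampled, or received from $S_2$ (including the $\Etwo{\cdot}$ values returned in $\SecWorst/\SecBest$, the re-randomizations from $\RecoverEnc$, and the messages in $\SecDedup/\SecUpdate$). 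For $\SIM_2$ (the view of $S_2$): the only quantities $S_2$ actually decrypts are (i) the $\En{b_i}$ in $\SecWorst$ and $\SecBest$, (ii) the upper triangle of the permuted matrix $\pi(\vB)$ in $\SecDedup$, and (iii) the recovery-blinded values in $\RecoverEnc$. $\SIM_2$ uses $\EP_d(q)\in\L^2_\query$ to decide, at each depth, which of these decrypt to $0$ and which to a fresh random element of $\Z_N$ (by Lemma~\ref{lemma:eqi-bit}), and simulates the surrounding ciphertexts as encryptions of $0$ together with random DJ encryptions for the $\Etwo{t_i}$ it would have produced.

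The hybrid chain then proceeds as follows. $H_0$ is $\Real_\A(1^\lambda)$. In $H_1$ I would replace every $\hmac(\kappa_i,\cdot)$ output by a truly random function value, which is indistinguishable by PRF-security of HMAC; this makes all $\EHL$-entries in $\ER$ and in the randomized messages computationally independent of the plaintext object ids except through the equality pattern. In $H_2,\ldots,H_t$ I would, list by list and depth by depth, swap each Paillier/DJ ciphertext that never gets decrypted by the adversary for an encryption of $0$, using CPA-security of the two schemes and the fact that the blinding factors $\valpha,\beta,\gamma$ in $\Rand$ and the $r$ in $\RecoverEnc$ uniformly mask the underlying plaintext in $\Z_N$; this step exploits that the homomorphic-and-randomize operation $\bitminus$ and the $\RecoverEnc$ randomization make the single decryption $S_2$ performs on each message independent of the hidden values beyond $\EP_d(q)$. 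Finally $H_\text{last}$ coincides with $\Ideal$, proving both indistinguishability ensembles.

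The main obstacle I expect is the $\SecDedup$ transition. Unlike $\SecWorst/\SecBest$, in $\SecDedup$ $S_2$ decrypts a whole $O(l^2)$ matrix and, after replacing duplicates, re-encrypts items that $S_1$ subsequently ``unblinds'' via the key $\pk'$; thus I need to show simultaneously that (a) $\pi(\vB)$ leaks no more than the unordered equality pattern $\EP_d(q)$, (b) the $H_i$ values carrying the blinding factors under $\pk'$ are CPA-hidden from $S_2$ so that his re-randomization is a uniform re-sampling from $S_1$'s viewpoint, and (c) the final $\Rand$-inversion at line~\ref{line:re-cover} produces a distribution on encrypted items identical to the real one when the underlying object identities are replaced by fresh $\EHL$s for duplicates. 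Proving (a)--(c) cleanly requires one additional sub-hybrid in which $S_2$'s replacement step is performed directly on the ideal leakage $\EP_d(q)$ rather than on $\vB$; indistinguishability of this sub-hybrid follows from PRF-security of HMAC (step $H_1$) combined with the uniformity of the randomizers $r_i$ in $\bitminus$ guaranteed by Lemma~\ref{lemma:eqi-bit}. Once $\SecDedup$ is handled, the $\SecUpdate$ transition is essentially an instance of the same argument applied across two depths, and the halting-comparison via $\EncCompare$ contributes only the single bit already implicit in $D_q$.
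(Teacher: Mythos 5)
Your proposal is correct and proves the same statement by the same underlying principles (PRF security of the HMAC inside $\EHL$, CPA security of Paillier/DJ, and the uniformity of the blinding randomizers), but it is organized quite differently from the paper's argument. The paper's proof is modular: it defines the sub-protocol privacy notion of Definition~\ref{def:sec-servers}, proves Lemma~\ref{lemma:secworst} for $\SecWorst$ (asserting that $\SecBest$, $\SecDedup$, $\SecUpdate$ are analogous), and then lets $\SIM_1$ and $\SIM_2$ for the full scheme simply invoke the sub-protocol simulators, with the query-pattern leakage $\QP$ used to replay repeated queries. Your proof instead runs one global hybrid chain over the entire execution ($H_0=\Real$, an HMAC-to-random-function step, then ciphertext-by-ciphertext replacement with encryptions of $0$), and explicitly constructs what each simulator emits at every message. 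What your route buys is precision exactly where the paper is thinnest: you spell out that $S_2$'s decryptions must be partitioned via $\EP_d(q)$ into zeros versus fresh uniform elements of $\Z_N$ (Lemma~\ref{lemma:eqi-bit}), and you isolate $\SecDedup$ as the step needing a dedicated sub-hybrid because $S_2$ decrypts the whole permuted matrix $\pi(\vB)$ and $S_1$ later unblinds under $\pk'$ --- points (a)--(c) in your writeup are genuine obligations the paper discharges only implicitly. What the paper's modular route buys is reusability of the sub-protocol lemmas and a shorter composition argument, at the cost of leaving the composition of leakages across sub-protocols informal. One caveat that applies to both proofs equally: simulating the token requires knowing which permuted lists are accessed (and hence $m$), which is not literally contained in $\L^1_\query=(\QP,D_q)$ as defined; your simulator samples a random index set of size $m$, which is the right behaviour but tacitly assumes this information is part of the leakage.
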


\begin{proof} 
We describe the ideal functionality $\Fideal$ as follows.
An environment $\E$ samples samples $\phi_i\random\bool^{\log n}$ 
for $1 \leq i \leq n$ to create a set $\Phi$ of $n$ distinct identifiers for $n$ objects. For each 
object, $\E$ randomly pick $M$ attribute values from $\Z_N$.
$\E$ finally outputs the relation $R$ with the sampled $n$ objects associated with $M$ attributes.
$\E$ sends the $R$ to the client $\C$, and $\C$ outsources the relation to the ideal functionality
$\Fideal$.
During the setup phase, $\Fideal$ computes $\L_\setup = (n, M)$, 
where $n$ is the number of the objects and $M$ is the number of the attributes. 
Then $\Fideal$ sends $\L_\setup$ to a simulator $\SIM_1$, which we'll describe it next.
$\SIM_1$ receives $\L_\setup$, then $\SIM_1$ generates the random keys 
$\kappa_1, \dots, \kappa_s$ for the $\EHL$ and $\pk_\p, \sk_\p$ for the paillier encryption. 
Then $\SIM_1$ encrypts each object using $\EHL$ and each score using the paillier encryption.  

The client adaptively generate a number of queries $q_1 \dots q_m$.
For each $q_i$, $\Fideal$ computes $\L^1_\query = (\QP, D_{q_i})$ and 
$\L^2_\query = (\{\EP_d(q_i)\}^{D_{q_i}}_{j = 1})$.
$\Fideal$ sends $\L^1_\query$ to $\SIM_1$ and sends
$\L^2_\query = (\{\EP_d(q_i)\}^{D_{q_i}}_{j = 1})$ to
$\SIM_2$.
Next, we describe $\SIM_1$ and $\SIM_2$.
As mentioned above, $\SIM_1$ learns the leakage function $\L^1_\query = (\QP, D_{q_i})$. 
Given the leakage $\QP$, 
$\SIM_1$ first checks if either of the query $q_i$ appeared in any previous query.
If $q_i$ appeared previously, $\SIM_1$ runs the same simulation of $\SecQuery$ as before.
If not, $\SIM_1$ invokes the simulations for each of the sub-routine from $\SecQuery$. 
We show in the Appendix~\ref{app:servers-security} 
(See Definition~\ref{def:sec-servers} and Lemma~\ref{lemma:secworst})
that our building blocks are secure. 
Therefore, $\SIM_1$ can invoke the simulation the original $\SecQuery$ 
by calling the underlying simulators $\SIM_1$ from those sub-protocols 
$\SecWorst$, $\SecBest$, $\SecDedup$, and $\SecUpdate$. As those sub-protocols
have been proved to be secure, the $\SIM_1$ can simulate the execution
of the original $\SecQuery$. Moreover, the messages sent during the execution are
either randomly permuted or protected by the semantic encryption scheme. Therefore,
at the end of the protocol, $\out_{\SIM_1}$ looks indistinguishable from the the output $\out_{S_1}$, 
i.e.
$$\angles{\out_{S_1}, \out_\C} \approxeq \angles{\out_{\SIM_1}, \out_\C'}$$
By knowing $\L^2_\query$, $\SIM_2$ can simulate the execution
of the original $\SecQuery$. Similarly, $\SIM_2$ needs to call the underlying
the simulator $\SIM_2$ from the building blocks $\SecWorst$, $\SecBest$, $\SecDedup$, and $\SecUpdate$. 
As the messages sent during the execution are
either randomly permuted or protected by the semantic encryption scheme, $\SIM_2$ learns nothing
except the leakage $\L_\query^2$. Therefore,
at the end of the protocol, $\out_{\SIM_2}$ looks indistinguishable from the the output $\out_{S_2}$, 
$$\angles{\out_{S_2}, \out_\C} \approxeq \angles{\out_{\SIM_2}, \out_\C'}$$
\end{proof}

\ignore{
\paragraph{Remarks} 
We formally define the privacy between the servers $S_1$ and $S_2$ in Definition~\ref{def:sec-servers}. 
When the client sends a token to $S_1$, the security of the $\SecTopK$ scheme is captured by the definition~\ref{def:secquery}. 
As mentioned in Section~\ref{section:topk-pre}, we assume the existence of non-colluding servers, 
so the privacy between the servers have to satisfy the definition~\ref{def:sec-servers}. 
In particular, in the previous section we have shown that the security analysis of the underlying building block, such as $\SecWorst$; 
therefore, this captures the privacy between $S_1$ and $S_2$ during the execution of the $\SecQuery$. 
Furthermore, as mentioned in section~\ref{sec:topk-token}, to simplify the presentation, we only consider monotone functions with binary weights, 
i.e. the scoring function is just a sum of the values of a subset of attributes. In the security analysis for the $\SecTopK$ scheme, 
we only analyze the security for the summation function. However, the similar analysis can be applied to other linear combination function.

After the encryption $\SecTopK.\enc$ of a relation $R$, 
the data server $S_1$ learns \emph{depth pattern} $\Dp(R, f_s, k)$ for the relation $R$, where $f_s$ denotes the summation function over a subset of the attributes. 
Formally, $\Dp(R, f_s, k)$ outputs the distribution of halting depth 
$\D(k) = \{d_{I}\}_{I \in \M}$ based on different $k$, 
where each $I$ is a subset of $\M$ and $d_I$ indicates the halting depth when sum the subset of $I$ attributes. 
For the top-$k$ queries, we normally have $k\ll n$, i.e. $\D(k)$ only considers the distribution for small $k$. 
Let $K \ll n$ be the upper bound for $k$ in our top-$k$ queries.
Note that $\Dp(R, f_s, k)$ states the halting depth when summing up a set of attributes in $R$ for computing top-$k$ in the NRA. 
NRA allows the server to compute the top-$k$ results without scanning the whole database.  
$S_1$ learns this halting depth leakage due to nature of NRA algorithm. However, the leakage can be avoided by forcing the data server to perform a fixed number of sequential scans for the encrypted relation, but this will add computational overhead to the query processing procedure as the servers $S_1$ and $S_2$ need to execute many unnecessary protocols, therefore it will slow down the total query time.
}

\section{Query Optimization}
In this section, we present some optimizations that improve the performance of our protocol.
The optimizations are two-fold: 
1) we optimize the efficiency of the protocol $\SecDedup$ at the expense of 
some additional privacy leakage, and
2) we propose batch processing of $\SecDupElim$ and $\EncSort$ to further 
improve the $\SecQuery$.

\subsection{Efficient SecDupElim}
We now introduce the efficient protocol $\SecDupElim$ that provides similar functionality as $\SecDedup$. 
Recall that, at each depth, $S_1$ runs $\SecDedup$ to deduplicate $m$ encrypted objects, 
 then after the execution of $\SecDedup$ $S_1$ still receives $m$ items but without duplication,
and add these $m$ objects to the list $T^d$ when running $\SecUpdate$. 
Therefore, when we execute the costly sorting algorithm  $\EncSort$ the size of list to sort has $md$ elements at depth $d$.

The idea for $\SecDupElim$ is that instead of keeping the same number encrypted items $m$, 
$\SecDupElim$ {\it eliminates} the duplicated objects. In this way, the number of encrypted
objects gets reduced, especially if there are many duplicated objects.
The $\SecDupElim$ can be obtained by simply changing the $\SecDedup$ as follows: 
in Algorithm~\ref{protocol:dep} at line~\ref{line:check-dup}, 
when $S_2$ observes that there exist duplicated objects, $S_2$ only keeps one copy of them.
The algorithm works exactly the same as before but without performing the 
line~\ref{line:replace-1}-\ref{line:replace-2}. 
%However, this time $S_1$ receives the $m'$ unique encrypted items, for some $m'\le l$. 
We also run $\SecDupElim$ instead of $\SecDedup$ at line~\ref{line:update-dedup} in 
the $\SecUpdate$. That is, after secure update, we only keep the distinct objects with updated
scores. Thus, the number of items to be sorted also decrease.
Now by adapting $\SecDupElim$, if there are many duplicated objects appear in the list,
we have much fewer encrypted items to sort.

\paragraph{Remark on security.}
The $\SecDupElim$ leaks additional information to the server $S_1$.
$S_1$ learns the \emph{uniqueness pattern $\UP^d(q_i)$} at depth $d$, 
where $\UP^d(q_i)$ denotes the number of the unique objects that appear at current depth $d$.
The distinct encrypted values at depth $d$ are independent from all other depths, therefore, 
this protocol still protects the distribution of the original $\ER$. 
In addition, due to the `re-encryptions' during the execution of the protocol, all the encryptions 
are fresh ones, i.e., there are not as the same as the encryptions from $\ER$.
Finally, we emphasize that nothing on the objects and their values 
have been revealed since they are all encrypted. 
%We skip the proof as the proof of security for both $S_1$ and $S_2$ is very
%similar to the proof for lemma~\ref{lemma:secworst}.

\begin{figure}[th!]
\centering
\begin{minipage}{.3\textwidth}
  \centering
  \includegraphics[width=.7\linewidth]{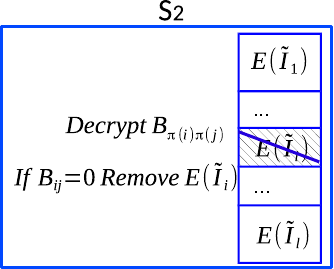}
  \captionof{figure}{$\SecDupElim$}
  \label{fig:test1}
\end{minipage}
 \ \ \ 
\begin{minipage}{.3\textwidth}
  \centering
  \includegraphics[width=\linewidth]{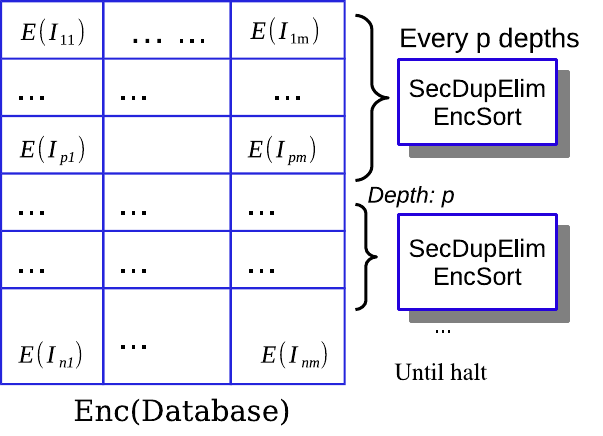}
  \captionof{figure}{Batching Process}
  \label{fig:test2}
\end{minipage}
\end{figure}

\subsection{Batch Processing for SecQuery}
In the query processing $\SecQuery$, we observe that we do not need to run the
protocols $\SecDupElim$ and $\EncSort$  for every depth. 
Since $\SecDupElim$ and $\EncSort$ are the most costly protocols in $\SecQuery$, 
we can perform \emph{batch processing} and execute them after a few depths and not at each depth.
Our observation is that there is no need to deduplicate repeated objects 
at each scanned depth.
If we perform the $\SecDupElim$ after certain depths of scanning, then 
the repeated objects will be eliminated, and those distinct encrypted objects 
with updated worst and best scores will be sorted by running $\EncSort$. The protocol will remain correct.
%\begin{figure}[h!]
%\centering
% 	\includegraphics[width=0.2\textwidth]{batching}
%    \caption{Query Optimization using Batching Processing}\label{fig:batching}
%\end{figure}
We introduce a parameter $p$ such that $p \ge k$. The parameter $p$ specifies 
where we need to run the $\SecDupElim$ and $\EncSort$ in the $\SecQuery$ protocol.
That is, the server $S_1$ runs the $\SecQuery$ with $S_2$ the same as in Algorithm~\ref{alg:topk}, except
that every $p$ depths we run line~\ref{line:enc-sort}-\ref{line:halt} in Algorithm~\ref{alg:topk} 
to check if the algorithm could halt. 
In addition, we can replace the $\SecDupElim$ with the original $\SecDedup$ 
in the batch processing for better privacy but at the cost of some efficiency.

\paragraph{Security.}
Compared to the optimization from $\SecDupElim$, we show that the 
batching strategy provides more privacy than just running the $\SecDupElim$ alone.
For query $q$, assuming that we compute the scores over $m$ attributes. 
Recall that the $\UP^p(q)$ at depth $p$ 
has been revealed to $S_1$ while running $\SecDupElim$,
therefore, after the first depth, in the worst case, $S_1$ learns that 
the objects at the first depth is the same object. 
To prevent this worst case leakage, we perform $\SecDupElim$ every $p$ depth.
Then $S_1$ learns there are $p$ distinct objects in the worst case. 
After depth  $p$, the probability that $S_1$ can correctly locate those distinct encrypted 
objects' positions in the table is at most $\frac{1}{(p!)^m}$. 
This decreases fast for bigger $p$. However, in practice this leakage is very small as many distinct
objects appear every $p$ depth.
Similar to all our protocols, the encryptions are fresh due to the `re-encryption' by the server.
Even though $S_1$ has some probability of guessing the distinct objects' location, 
the object id and their scores have not been revealed since they are all encrypted.  
%The proof of security is very similar to the proof of lemma~\ref{lemma:secworst}.

\subsection{Efficiency}\label{complexity}
We analyze the efficiency of query execution. 
Suppose the client chooses $m$ attributes for the query, therefore
at each depth there are $m$ objects. At depth $d$,
it takes $S_1$ $O(m)$ for executing $\SecWorst$, $O(md)$ for executing $\SecBest$, 
$O(m^2)$ for $\SecDedup$,  and $O(m^2d)$ for the $\SecUpdate$. The complexities for $S_2$ are similar.
In addition, the $\EncSort$ has time overhead $O(m\log^2m)$; however,
we can further reduce to $O(\log^2m)$ by adapting parallelism (see~\cite{fc15/FO}). 
On the other hand, the $\SecDupElim$ 
only takes $O(u^2)$, where $u$ is the number of distinct objects at this depth. 
Notice that most of the computations are multiplication (homomorphic addition), 
therefore, the cost of query processing is relatively small.

%==========================================
%==========================================
%======= Experiment =====
\newcommand{\syn}{\texttt{synthetic}}
\newcommand{\pamap}{\texttt{PAMAP}}
\newcommand{\diabetes}{\texttt{diabetes}}
\newcommand{\insurance}{\texttt{insurance}}
\newcommand{\qryf}{\cc{Qry\ensuremath{\_}F}}
\newcommand{\qrye}{\cc{Qry\ensuremath{\_}E}}
\newcommand{\qryba}{\cc{Qry\ensuremath{\_}Ba}}
\section{Experiments} \label{experiment}

\begin{figure*}[h!t]
  \begin{minipage}[b]{0.5\textwidth}
  	  \centering
    \begin{subfigure}[t]{.48\textwidth}
        \centering 
        \includegraphics[width=1.09\textwidth]{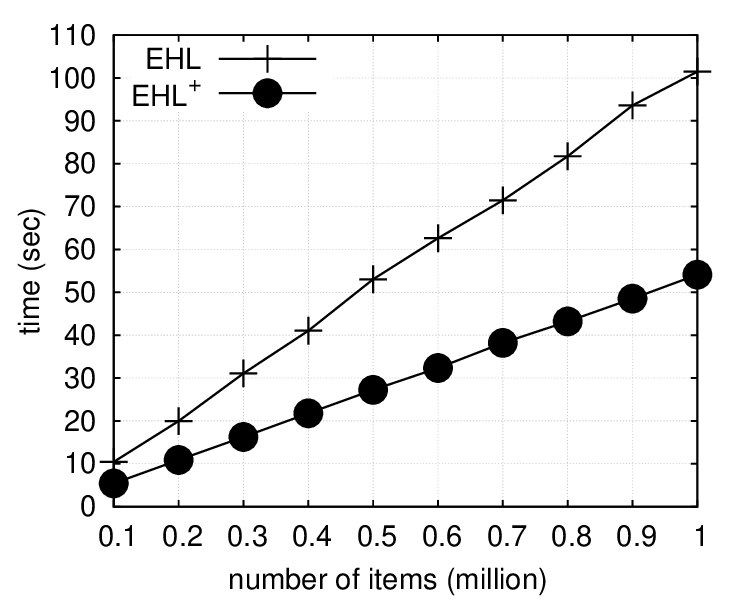}
        \caption{Construction Time}
    \end{subfigure}%  
    \begin{subfigure}[t]{.48\textwidth}
        \centering 
        \includegraphics[width=1.09\textwidth]{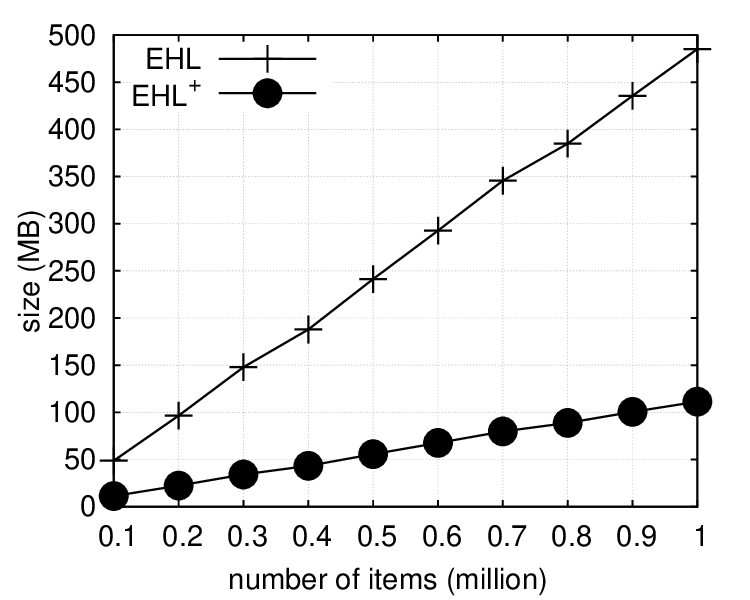}
        \caption{Size Overhead}
    \end{subfigure}
     \caption{Encryption using $\EHL$ vs. $\eEHL$.}\label{fig:construction}
   \end{minipage}
   \quad
   \begin{minipage}[b]{0.5\textwidth}
    	\begin{subfigure}[t]{.48\textwidth}
           \centering 
           \includegraphics[width=1.09\textwidth]{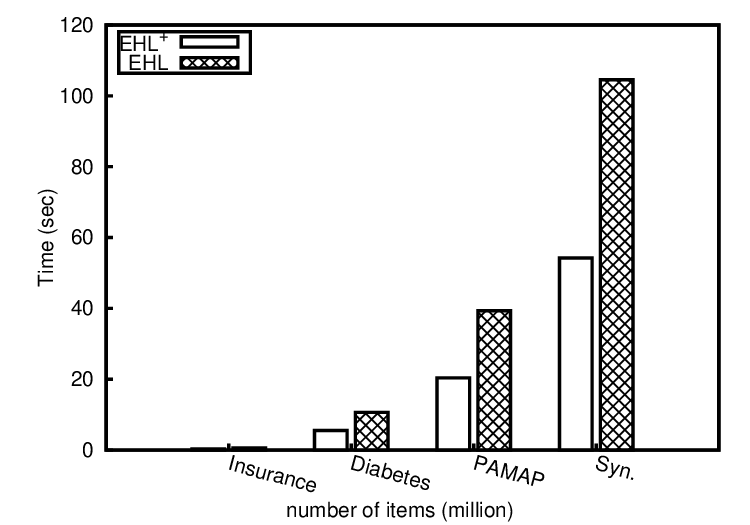}
           \caption{Construction Time}
    	\end{subfigure}
    	\begin{subfigure}[t]{.48\textwidth}
          \centering  
           \includegraphics[width=1.09\textwidth]{size-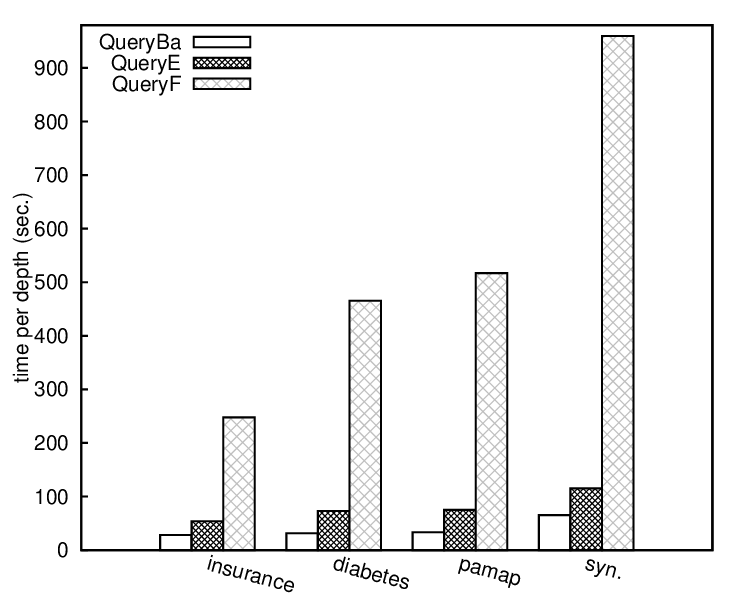}
	   \caption{Size Overhead}
    	\end{subfigure}
          \caption{Encryption $\EHL$ vs. $\EHL^+$ on real data}\label{fig:data-cons}
   \end{minipage}
\end{figure*}
To evaluate the performance of our protocols, we conducted a set of experiments using real and synthetic datasets.
We used the HMAC-SHA-256 as the pseudo-random function (PRF) for the $\EHL$ and $\eEHL$ 
encoding, a 128-bit security for the Pailliar and DJ encryption, and 
all experiments are implemented using C++. We implement the scheme $\SecTopK$ $=$ $(\Enc, \Token, \SecQuery)$,
including all the protocols $\SecWorst$, $\SecBest$, $\EncSort$, and $\EncCompare$ and their optimizations.
We run our experiments on a 24 core machine, who serves as the cloud,
running Scientific Linux with 128GB memory and 2.9GHz Intel Xeon. 
\paragraph{DataSets}
We use the following real world dataset downloaded from UCI Machine Learning Repository~\cite{Lichman:2013}.
$\insurance$: a benchmark dataset that contains $5822$  customers' information on an insurance company and 
we extracted $13$ attributes from the original dataset.
$\diabetes$: a patients' dataset containing $101767$ patients' records 
(i.e. data objects), where we extracted  $10$ attributes.
$\pamap$: a physical activity monitoring dataset
that contains $376416$ objects, and we extracted $15$ attributes.
We also generated synthetic datasets $\syn$ that has 1 million records with $10$ attributes that takes values 
from Gaussian distribution.

\subsection{Evaluation of the Encryption Setup}
We implemented both the $\EHL$ and the efficient $\eEHL$. 
For $\EHL$, to minimize the false positives, we set the parameters as $H = 23$ and $s = 5$, 
where $L$ is the size of the $\EHL$ and $s$ is the number of the secure hash functions.
For $\eEHL$, we choose the number of secure hash function $\hmac$ in $\eEHL$ to be $s = 5$,
and, as discussed in the previous section, we obtained negligible false positive rate in practice. 
The encryption $\Enc$ is independent of the characteristics of the dataset and depends only on the size. 
%Therefore, to evaluate the time and the space in general, we uniformly select an object from each of the
%dataset described above, and for each object, we randomly select $5$ attributes.
Thus, we generated datasets such that the number of the objects range from $0.1$ to $1$ million. 
We compare the encryptions using $\EHL$ and $\eEHL$.
After sorting the scores for each attribute, the encryption for each item can be fully parallelized.
Therefore, when encrypting each dataset, we used $64$ threads on the machine that we discussed before.
Figure~\ref{fig:construction} shows that, both in terms of time and space, 
the cost of database encryption $\Enc$ is reasonable and scales linearly to the 
size of the database. Clearly, $\eEHL$ has less time and space overhead.
For example, it only takes $54$ seconds to encrypt $1$ million records
using $\eEHL$. The size is also reasonable, as the encrypted database only takes $111$ MB using $\eEHL$.
Figure~\ref{fig:data-cons} also shows the encryption time and size overhead for the real dataset that
we used. 
%As we can see,  using $\EHL^+$, one can obtain very reasonable time and size overhead to encrypt the database before outsourcing it to the cloud.
Finally, we emphasize that the encryption only incurs a one-time off-line construction overhead.

\subsection{Query Processing Performance}
\subsubsection{Query Performance and Methodology}
We evaluate the performance of the secure query processing 
and their optimizations that we discussed before.
In particular, we use the query algorithm without any optimization but with full privacy, denoted as $\qryf$;
the query algorithm running $\SecDupElim$ instead of $\SecDedup$ at every depth,
denoted as $\qrye$; and the one using the batching strategies, denoted as $\qryba$.
We evaluate the query processing performance using all the datasets and use $\eEHL$ to encrypt all of the object ids.

Notice that the performance of the NRA algorithm depends on the distribution of the dataset among other things.
Therefore, to present a clear and simple comparison of the different methods, we measure the average time per depth for the query processing, i.e. $\frac{T}{D}$, where $T$ is the total time that the program spends on executing a query and $D$ is the total number of depths the program scanned before halting. In most of our experiments the value of $D$ ranges between a few hundred and a few thousands. For each query, we randomly choose the number of attributes $m$ that are used for the ranking function ranging from $2$ to $8$, and we also vary $k$ between $2$ and $20$. The ranking function $F$ that we use is the sum function.

\subsubsection{$\qryf$ evaluation}
We report the query processing performance without any query optimization.
Figure~\ref{fig:full-query} shows $\qryf$ query performance. 
The results are very promising considering that the query is executed completely on encrypted data. 
For a fixed number of attributes $m = 3$, the average time 
is about $1.30$ seconds for the largest dataset $\syn$ running top-$20$ queries.
When fixing $k = 5$, the average time per depth for all the dataset is below $1.20$ seconds. 
As we can see that, for fixed $m$, the performance scales linearly as $k$ increases.
Similarly, the query time also linearly increases as $m$ gets larger for fixed $k$.

\begin{figure*}[h!bp]
  \centering
  \begin{minipage}[b]{0.49\textwidth}
  	  \begin{subfigure}[t]{.48\textwidth}
    	    \centering
            \includegraphics[width=\textwidth]{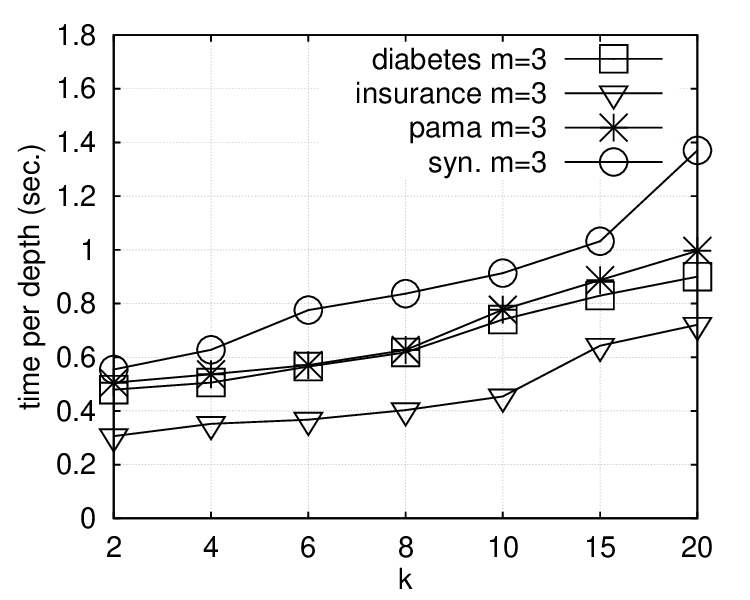}
           \caption{Performance varying $k$}
       \end{subfigure}%  
       \begin{subfigure}[t]{.48\textwidth}
   	   \centering  
            \includegraphics[width=\textwidth]{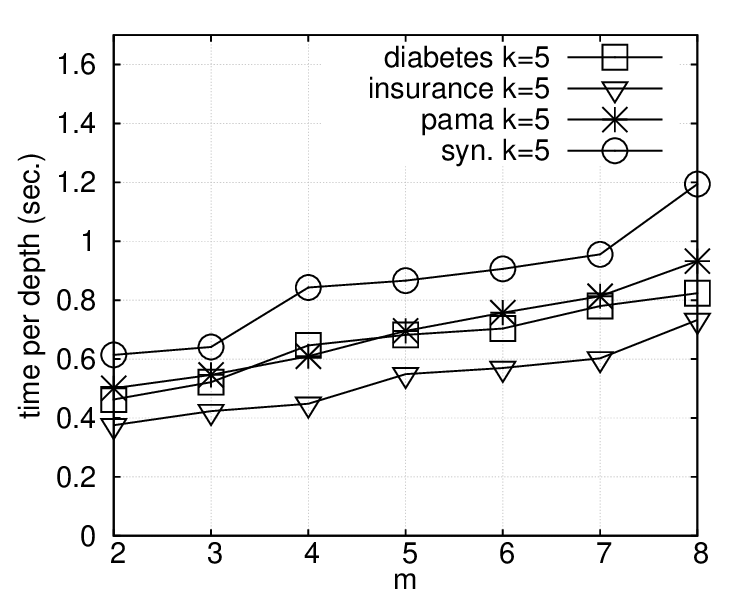}
           \caption{Performance varying $m$}
    	\end{subfigure}
   \caption{$\qryf$ query performance}\label{fig:full-query}
   \end{minipage}
   ~
   \begin{minipage}[b]{0.49\textwidth}
    	\begin{subfigure}[t]{.48\textwidth}
           \centering 
           \includegraphics[width=\textwidth]{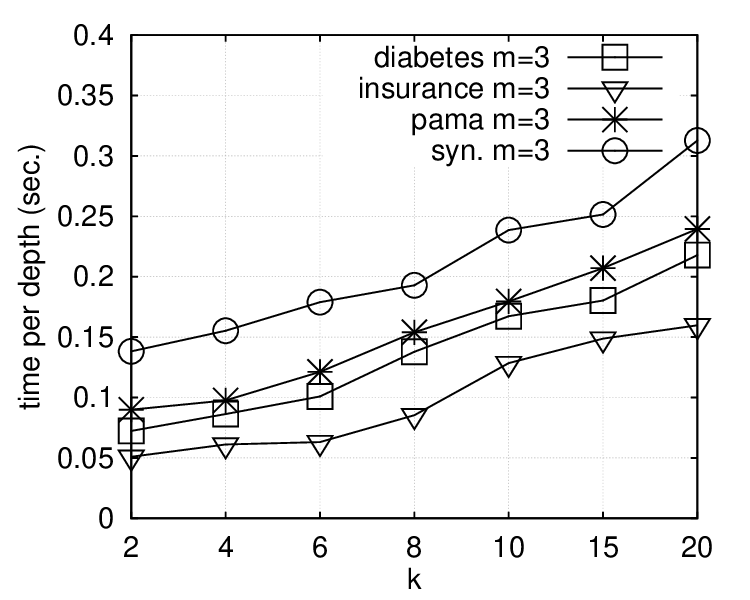}
           \caption{Performance varying $k$}
    	\end{subfigure}
    	\begin{subfigure}[t]{.48\textwidth}
          \centering  
           \includegraphics[width=\textwidth]{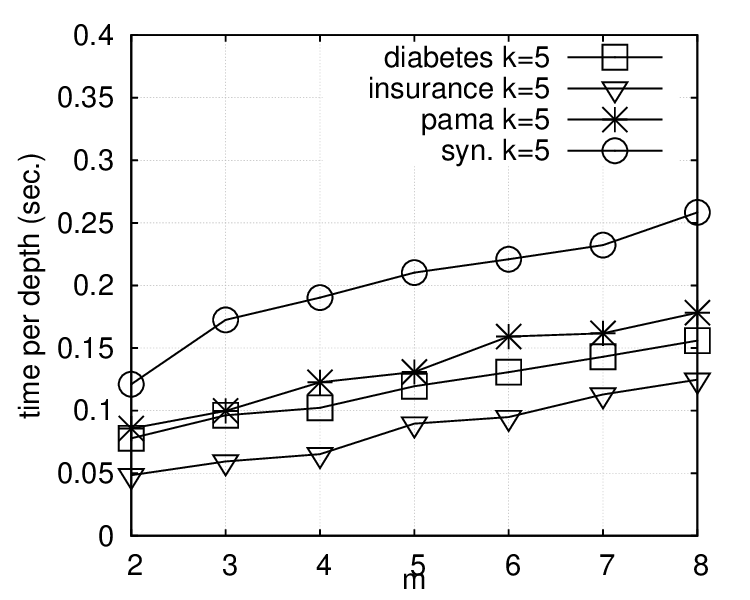}
	   \caption{Performance varying $m$}
    	\end{subfigure}
          \caption{$\qrye$ query optimization performance}\label{fig:elim-query}
   \end{minipage}
\end{figure*}

\subsubsection{$\qrye$ evaluation}
The experiments show that the $\SecDupElim$ improves the efficiency of the query processing.
Figure~\ref{fig:elim-query} shows the querying overhead for exactly the same setting as before.
Since $\qrye$ eliminates all the duplicated the items for each depth, $\qrye$ 
has been improved compared to the $\qryf$ above.
%For example, for the dataset $\diabetes$, for fixed $m = 3$, the running time exhibits more than 
%$6$ times faster than $\qryf$ when $k=2$.
As $k$ increases, the performance for $\qrye$ executes up to $5$ times faster than $\qryf$ when $k$ increase to $20$.
On the other hand, fixing $k = 5$, the performance of $\qrye$ can execute up to around $7$ times faster than $\qryf$ as $m$ grows to $20$.
In general, the experiments show that $\qrye$ effectively speed up the query time 5 to 7 times over the basic approach.

\begin{figure*}[th!bp]
	\centering
  \begin{minipage}[b]{0.9\textwidth}
  	  \begin{subfigure}[t]{0.33\textwidth}
    	    \centering
           \includegraphics[width=\textwidth]{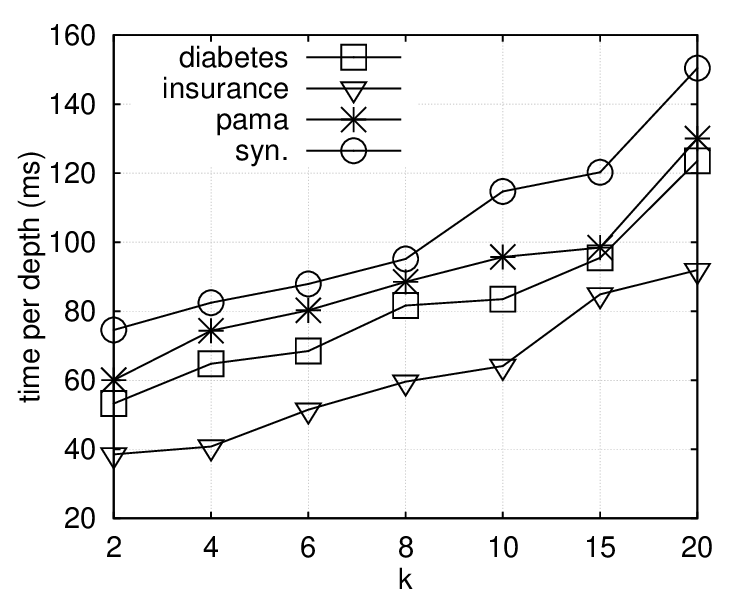}
           \caption{Performance varying $k$}\label{fig:batch-vary-k}
       \end{subfigure}%  
       \begin{subfigure}[t]{0.33\textwidth}
	 \centering  
         \includegraphics[width=\textwidth]{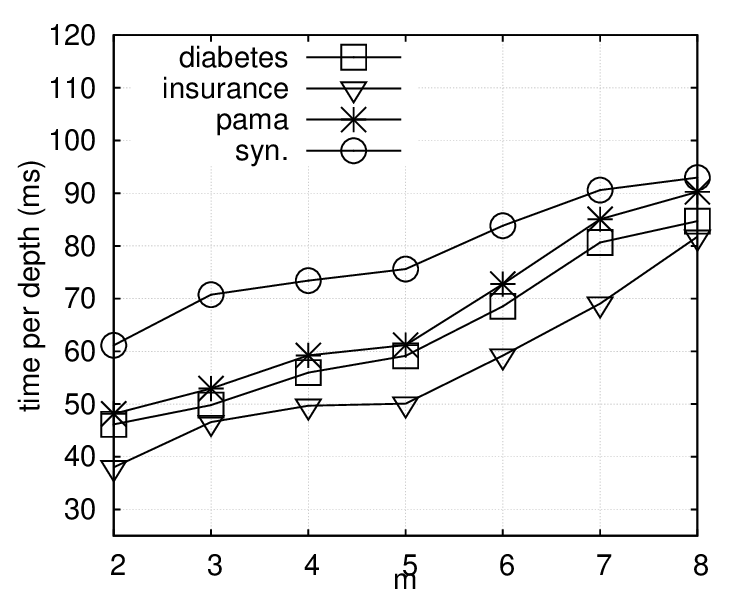}
         \caption{Performance varying $m$}
    	\end{subfigure}
   	\begin{subfigure}[t]{0.33\textwidth}
      	  \centering  
          \includegraphics[width=\textwidth]{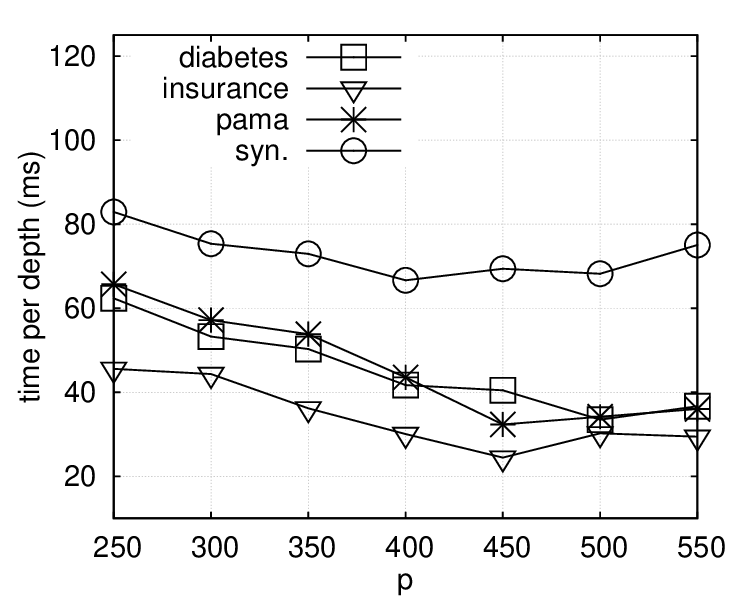}
          \caption{Performance varying $p$}\label{fig:batch-vary-p}
    	\end{subfigure}
      \caption{$\qryba$ query optimization performance}\label{fig:batch}
   \end{minipage}  
\end{figure*}

\begin{figure}[th!]
	\centering
          \includegraphics[width=.3\textwidth]{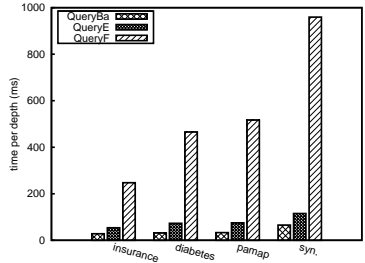}
   \caption{\small Comparisons ($k = 5$, $m = 2$, and $p = 500$)}
   \label{fig:batch-comp}
\end{figure}

\subsubsection{$\qryba$ evaluation}
We evaluate the effectiveness of batching optimization for the $\qryba$ queries.
Figure~\ref{fig:batch} shows the query performance of the $\qryba$ for the same settings as the previous experiments.  
The experiments show that
the batching technique further improves the performance. In particular, for fixed batching 
parameter $p = 150$, i.e. every $150$ depths we perform $\SecDupElim$ and $\EncSort$ in
the $\SecQuery$, and we vary our $k$ from $2$ to $20$. 
Compared to the $\qrye$, the average time per depth for all of the datasets have been further improved. 
For example, when $k = 2$,
the average time for the largest dataset $\syn$ is reduced to $74.5$ milliseconds, 
while for $\qryf$ it takes more than $500$ milliseconds . 
For $\diabetes$, the average time is reduced to $53$ milliseconds when $k = 2$ and $123.5$ milliseconds
when $k$ increases to $20$.
As shown in figure~\ref{fig:batch-vary-k}, the average time linearly increases as $k$ gets larger. 
Similarly, when fixing the $k = 5$ and $p = 150$, for $\syn$ the performance per depth reduce to 
$61.1$ milliseconds and $92.5$ milliseconds when $m = 2$ and $8$ separately.
In Figure~\ref{fig:batch-vary-p}, We further evaluate the parameter $p$. Ranging $p$ from $200$ to $550$, the experiments show that the proper $p$ can be chosen for better query performance. 
For example, the performance for $\diabetes$ achieves the best when $p = 450$. 
In general, for different dataset, there are different $p$'s that can achieve the best query performance. When $p$ gets larger, the number of calls for $\EncSort$ and $\SecDupElim$ are reduced, however, the performance for these two protocols also slow down as there're more encrypted items.

We finally compare the three queries' performance. 
Figure~\ref{fig:batch-comp} shows the query performance when fixing $k = 5$, $m = 3$, and $p = 500$. 
Clearly, as we can see, $\qryba$ significantly improves the performance compared to $\qryf$. 
For example, compared to $\qryf$, the average running time is roughly $15$ times faster for $\pamap$.
%$15$ times faster for $\pamap$, $15$ times faster for $\diabetes$, and $12$ times faster for $\insurance$.

\subsubsection{Communication Bandwidth}
We evaluate the communication cost of our protocol. 
\emph{Our experiments show that the network latency is significantly less than the query computation cost.}
In particular, we evaluate the communication of the fully secure and un-optimized 
$\qryf$ queries on the largest  dataset $\syn$. 
For each depth, the bandwidth is the same since the duplicated encrypted 
objects are filled with encryptions of random values.
Each ciphertext is 32 bytes and each round only a few ciphertexts are transferred. Especially,
if we use $m$ attributes in our query we communicate between $m$ and $m^2$ number of ciphertexts each time.
So, for $m=4$, we use between $1024$ to $4096$ bytes messages. Also, the number of messages per depth (per step) is 12.
So, in the worst case, we need to submit 12 total messages between $S_1$ and $S_2$ of 4KB each.
\begin{figure}[h!tbp]
 \centering
  \begin{minipage}[b]{.6\textwidth}
  	  \begin{subfigure}[t]{.5\textwidth}
    	    \centering
           \includegraphics[width=\textwidth]{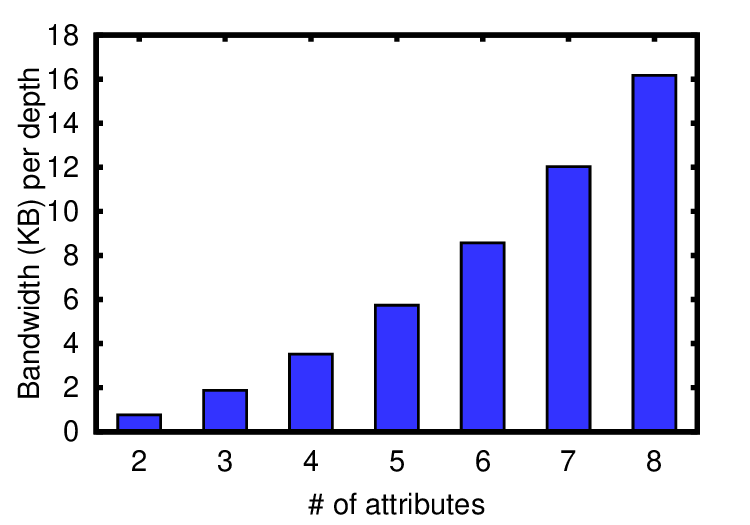}
           \caption{Bandwidth per depth varying $m$}\label{fig:band-m}
       \end{subfigure}
       ~
       \begin{subfigure}[t]{.5\textwidth}
	 	\centering  
         \includegraphics[width=\textwidth]{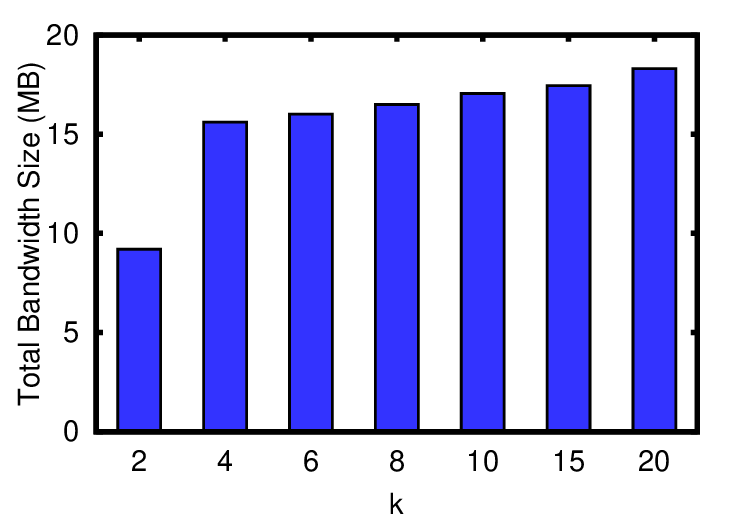}
         \caption{Total bandwidth size for different $k$}\label{fig:band-k}
    	\end{subfigure}
  \end{minipage}
  \caption{Communication bandwidth evaluation}\label{fig:band}
\end{figure}

We evaluate the bandwidth on the largest dataset $\syn$. Note that, the bandwidth per depth is
independent of $k$ since each depth this communication size only depends on $m$. 
As mentioned the bandwidth is $O(m^2)$, by varying $m$
we show in Figure~\ref{fig:band-m} the bandwidth per depth. 
In Figure~\ref{fig:band-k}, we show the total bandwidth when executing the top-$20$ by fixing $m=4$.
As we can see, the total size of the bandwidth is very small, therefore, the total latency could be very small for a high-speed connection between the two clouds.
The speed of the network between two clouds depends on the location and the technology of the clouds.
A recent study showed that we can achieve more than 70 Mbps for 
two clouds where one is in the US and the other in Japan~\cite{DBLP:conf/ndss/Demmler0Z15}.
Furthermore, with recent networking advances~\footnote{\url{http://www.cisco.com/c/en/us/products/cloud-systems-management/intercloud-fabric/index.html}}, we expect that the connections between clouds (inter-clouds) will be much higher~\cite{DBLP:journals/pvldb/BinnigCGKZ16}. 
However, even if we assume that the communication between the two clouds is about $50$ Mbps, 
the total cost of the communication at each depth is below 1 ms!
%Notice that the computation time at each depth (step) is between $28$ and $65$ ms, 
%therefore the communication overhead is negligible compared to the computation. 
Thus, communication is not a bottleneck for our protocol. 
In  Figure~\ref{fig:band-m}, we report the actual bandwidth per depth. 
In Figure~\ref{fig:band-k}, we show the total bandwidth when executing the top-$20$ by fixing $m=4$.
As we can see, the total size of the bandwidth is very small that confirms our intuition.
Also, assuming a standard $50$ Mbps LAN setting, we show in Table~\ref{tab:band} the total
network latency between  servers $S_1$ and $S_2$ when $k=20$ and $m=4$.
\begin{table}[t]
\scriptsize
\centering
% \resizebox{.48\textwidth}{!}{
		\begin{tabular}{|c|c|c|}\hline
		 Dataset & bandwidth (MB) & latency (sec.)\\\hline
		$\insurance$& 8.87  & 1.41 \\\hline
		 $\diabetes$& 12.45 & 1.99 \\\hline
		 $\pamap$   & 15.72 & 2.5152\\\hline
		 $\syn$ 	& 17.3  & 2.768 \\\hline				
		\end{tabular}
%	}
	\caption{\scriptsize Comm. bandwidth \& latency ($k=20$, $m=4$)}
		\label{tab:band}
\end{table}
Based on the discussion above, we can see that the communication cost of our protocol is very moderate for any reasonable assumptions about the connectivity between the two clouds. The total cost of our protocol is dominated by the computational cost that was presented in the previous section.

\subsection{Related works on secure $k$NN}
As discussed previously, although existing work~\cite{DBLP:conf/icde/ElmehdwiSJ14} on secure $k$NN does not directly solve our problem, we can adopt their techniques to obtain top-$k$ results by  restricting our scoring function to be $\sum x^2_i(o)$.  We then use as a query a point with large enough values in each attribute and run their secure $k$-nearest-neighbor scheme. In order to compare the experiment from~\cite{DBLP:conf/icde/ElmehdwiSJ14}, during our encryption setup in our $\SecTopK$, the data owner needs to encrypt the additional squares of the values, i.e. $\enc(x^2_i(o))$, then the scoring function would simply be the sum of all the attributes. 
We emphasize that the execution of the rest of our protocol remains the same.

We can now use the results from the experiments in~\cite{DBLP:conf/icde/ElmehdwiSJ14}. It is clear that the protocol in~\cite{DBLP:conf/icde/ElmehdwiSJ14} is very inefficient. For example, it is reported that it would take more than 2 hours to return 10 nearest neighbors for a database of only $2,000$ records. On the other hand, with our scheme, we can return 10 nearest neighbors over a database  with the same characteristics of 1 million records in less than 30 minutes. Moreover, as~\cite{DBLP:conf/icde/ElmehdwiSJ14} needs to sends all of the encrypted records for each query execution, the communication bandwidth is very large even for small dataset that has 2,000 records. On the other hand, in our approach, we show that the bandwidth cost is low and will not affect the performance much.

%==========================================
%==========================================
%==========================================

\section{Top-k Join}\label{app:sec-join}
We would like to briefly mention that our technique can be also extended
to compute top-$k$ join queries over multiple encrypted relations. 
Given a set of relations, $R_1, \dots, R_L$, 
each tuple in $R_i$ is associated with some score that gives it a rank within $R_i$. 
The \emph{top-k join query} joins $R_1$ to $R_L$ and produces the results ranked on a total score.
The total score is computed according to some function, $F$, that combines individual scores. 
We consider only (i.e.\emph{equi-join}) conditions in this paper.
Similarly, the score function $F$ we consider in this paper is also a linear combination 
over the attributes from the joining relations. 
A possible SQL-like join query example is as follows:
\texttt{Q1 = SELECT * FROM A,B,C
WHERE A.1=B.1 and B.2=C.3
ORDER BY A.1+B.2+C.4
STOP AFTER k;
} where \texttt{A, B} and \texttt{C} are three
 relations and \texttt{A.1, B.1, B.2, C.3, C.4} are attributes on these relations. 
%As we mentioned in section~\ref{??}, any join strategies, such as hash join, sort-merge join 
%nested loop join, on plaintext data have to look at the data in the clear, therefore, 
%simply adapts those join strategies cannot apply to the encrypted data. In order to
%securely join the relations, we need to design an operator such the cloud server
%does not learn information on the underlying data. 
Our idea is to design a secure join operator, denoted as $\secjoin$, such that 
the server $S_1$ obliviously joins the relations based on the received token. $S_1$ has to
invoke a protocol with $S_2$ to get the resulting  joined results that meet the 
join condition. 
\subsection{Secure Top-k Join}
We provide a description of the secure top-$k$ join in this section.
Since a join operator is implemented in most system as a dyadic (2-way) operator, 
we describe the secure top-$k$ operator as a \emph{binary} join operator 
between two relations  $R_1$ and $R_2$.
Consider an authorized client that wants to join two encrypted relations and get the top-$k$ based on a join condition.
Assume that each tuple in $R_i$ has $m_i$ many attributes and each $R_i$
have $n_i$ many tuples for $i = \{1,2\}$. Furthermore, denote $o^i_{j}$ be the $j$th objects
in $R_i$ and let $o^i_{j}.x_k$ be the $k$th attribute value. 
%In the following, we describe the encryption algorithm $Enc$, query token algorithm $Token$, and
%the secure query processing $SecQuery$ that can support top-$k$ join queries.

\subsection{Encryption Setup for Multiple databases}

\begin{algorithm}[thbp]
 Generate public/secret key $\pk_p, \sk_p$ for the pailliar encryption, generate
     random secret keys $\kappa_1, \dots, \kappa_s$ for the $\EHL$\;
 \ForEach{each $o^i_{j}\in R_i$}{
    \ForEach{each attribute $o^i_{j}.x_k$}{
    	set $E(s_{k})\larr \angles{\EHL(o^i_{j}.x_k), \En{o^i_{j}.x_k}}$\;
	}
	{set $E(o^i_{j}) = \big(E(s_{1}), ..., E(s_{m_i})\big)$}
 }
 Generate a key $\K$ for the PRP $P$\;
 Permutes the encrypted attributes based on $P$, i.e.
	      set $E(o^i_{j}) =  \big(E(s_{P_\K(1)}), ..., E(s_{P_\K(m_i)})\big)$\;
 Output permuted encrypted databases as
  $\ER_1 = \{E(o^1_{1})...E(o^1_{n_1})\}$ and $\ER_2 = \{E(o^2_{1})...E(o^2_{n_2})\}$\;
 \caption{$\Enc(R_1, R_2)$: database encryption}\label{alg:join-encryption}
\end{algorithm}

Consider a set of relations $R_1$ and $R_2$.
The encryption setup is similar as the top-$k$ for one relation. 
The difference is that since we have multiple relations on different data
we cannot assign a global object identifier for each the objects in different relations.
The difference here is that, 
in addition to encrypting an object id with $\EHL$, we encrypt the attribute value using $\EHL$
since the join condition generated from the client is to join the relations based on the
attribute values. 
Therefore, we can compare the equality between different records based on their
attributes. 
The encryption $\Enc(R_1, R_2)$ is given in Algorithm~\ref{alg:join-encryption}.

The encrypted relations  $\ER_1, \ER_2$ do not reveal anything besides the size. 
The proof is similar to the proof in Theorem~\ref{thm:indis}.

\subsection{Query Token}
%The client first generates the $F_k(R_1)$ and $F_k(R_2)$ and the condition $F_k(R_1.A)$ and $F_k(R_2.B)$. 
Consider a client  that wants to run query a SQL-like top-$k$ join as follows:
{\small\sql{Q = SELECT * FROM R1, R2 WHERE R1.A = R2.B ORDER BY R1.C + R1.D STOP AFTER k;} }
\noindent where $\sql{A, C}$ are  attributes in $R_1$ and $\sql{B, D}$ 
are  attributes in $R_2$.
The client first requests the key $\K$ for the $P$, then computes 
{$(t_1,t_2, t_3, t_4)$ $\larr$ $(P_\K(R_1.A), P_\K(R_2.B), P_\K(R_1.C), P_\K(R_2.D))$.}
Finally, the client generates the SQL-like query token as follows:
$t_Q$ = \sql{SELECT * FROM } $\ER_1, \ER_2$ \sql{ WHERE } 
$\ER_1.t_1=\ER_2.t_2$ \sql{ ORDERED BY } $\ER_1.t_3+\ER_2.t_4$ \sql{ STOP AFTER } $k$.
Then, the client sends the token $t_Q$ to the server $S_1$.
%In general, the token indicates the server $S_1$ the lists to perform the equi-join
%and the lists to access and homomorphically evaluate the score. 
%Note that it's easy to see that our methods here can be generalized to other similar equi-join queries.

%\DeclareMathSizes{5}{5}{5}{5}
\newcommand{\ms}{\tiny}

\subsection{Query Processing for top-k join}\label{sec:join-token}

\begin{algorithm}[h!]

 \SOne{$\pk_\p$, $\tk$}
 \STwo{$\pk_\p$, $\sk_\p$}

	\ServerOne{ 
	
	{Parse $\tk$, let $\JC = (\ER_1.t_1, \ER_2.t_2)$ and $\Score = \ER_1.t_3 + \ER_2.t_4$}
	
		\ForEach{$E(o^{1}_{i}) \in \ER_1$, $E(o^2_{j}) \in \ER_2$ in random order}
		{
		  Let $E(o^{\ms 1}_{i}) = \big(E(x_{i1}), ...E(x_{im_1})\big)$ 
		  and $E(o^2_{j})= \big(E(x_{j1}), ...E(x_{jm_2})\big)$\;
		  Compute $\Enc(b_{ij}) \larr \EHL(x_{it_1})\bitminus\EHL(x_{jt_2})$,
		  where $E(x_{it_1})$, $E(x_{jt_2})$ are the $t_1$-th, 
		  $t_2$-th attributes in $E(o^1_i)$, $E(o^2_j)$\;
		  
		   \tcc{evaluate $s_{ij} = b_{ij}(x_{it_3} + x_{jt_4})$}
		  Send $\Enc(b_{ij})$ to $S_2$ and receive $\Etwo{t_{ij}}$ from $S_2$\;
		  Compute  $\Score$ as: 
		  $s_{ij} \larr \Etwo{b_{ij}}^{\En{x_{it_{\ms 3}}}\En{x_{jt_4}}}$\;
		   run $\En{s_{ij}}\larr \RecoverEnc(S_{ij}, \pk_\p, \sk_\p)$.

		  Combine rest of the attributes for $E(o_{ij})$ as follows:
		  $x_l \larr\Etwo{b_{ij}}^{\En{x_{l}}}$, 
		  where $\En{x_{l}}$ $\in$ $\{\Enc(x_{i1}) ... \Enc(x_{jm_2})\}$
		  in $E(O^1_{i})$, $E(O^2_{j})$\;
		  Run $\En{x_l}\larr \RecoverEnc(x_l, \pk_\p, \sk_\p)$.
		}
	}
	 \ServerTwo{
		{For each received $t_{ij}$, decrypts it. If it is $0$, 
		       then compute $b_{ij}\larr\Etwo{1}$. 
		       Otherwise, $b_{ij}\larr\Etwo{0}$. Sends $b_{ij}$ to $S_1$.}
	}
	 \ServerOne{
	   Finally holds joined encrypted tuples 
	   $E(o_{ij}) = \En{s_{ij}}$, \{$\En{x_l}$\}, 
	   where $\En{s_{ij}}$ is the encrypted $\Score$, 
	   $\{\Enc(x_l)\}$ are the joined attributes from $\ER_1, \ER_2$.
	
	  Run $L \larr\SecFilter(\{E(o_{ij}), \pk_\p, \sk_\p\})$ and get the encrypted list $L$.
	
	  Run $\EncSort$ to conduct encrypted sort on encrypted $\Score$ $\Enc(s_{ij})$, 
	  and return top-$k$ encrypted items.
	}
\caption{$\SecJoin(\tk, \pk_\p, \sk_\p)$: $\secjoin$ with $\JC = (\ER_1.t_1, \ER_2.t_2)$ and $\Score = \ER_1.t_3 + \ER_2.t_4$} \label{alg:join}
\end{algorithm}

In this section, we introduce the secure top-$k$ join operator $\secjoin$. 
We first introduce some notation that we use in the query processing
algorithm.  For a receiving token $t_Q$ that is described  in Section~\ref{sec:join-token},
let the join condition be $\JC \overset{\textbf{def}}{:=} \big(\ER_1.t_1$= $\ER_2.t_2\big)$, and the score function $\Score =
\ER_1.t_3 + \ER_4.t_4$.  Moreover, for each $E(o^1_{i})\in \ER_1$,
let $E(x_{it_1})$ and $E(x_{it_3})$ be the $t_1$-th and
$t_3$-th encrypted attribute. Similarly, let $E(x_{jt_2})$ and
$E(x_{jt_4})$ be the $t_2$-th and $t_4$-th encrypted attribute for each $E(o^2_{j})$ in 
$\ER_2$.
In addition, let $\vE(X)$ be a vector of encryptions, i.e. $\vE(X) = \angles{\En{x_1},..., \En{x_s}}$, and
let $\vE(R) =\angles{(\En{r_1},..., \En{r_s})}$, where $R\in\Z_N^s$ with each $r_i\random\Z_N$ . 
Denote the randomization function
$\Rand$ as below:
\begin{align*}
	\Rand(\vE(X), \vE(R)) & = (\En{x_1}\cdot\En{r_1}, ..., \En{x_n}\cdot\En{r_n})\\
		           & = (\En{x_1+ r_1}, ..., \En{x_n+ r_n})
\end{align*}
. This function is similar to $\Rand$ in Algorithm~\ref{protocol:dep} and
is used to homomorphically blind the original value.

\begin{algorithm}[h!]
% \Def{\SecFilter $\big(\{E(o_i)\}\big)$}{
\small{
   \SOne{$\{E(o_i)\}$, $\pk_\p$}
   \STwo{$\pk_\p$, $\sk_\p$}

 \ServerOne{
       Let $E(o_i) = \big(\En{s_i}, \vE(X_i)\big)$ where $\vE(X_i) = \angles{\En{x_{i1}}, ...,\En{x_{is}}}$\;
       Generate a key pair $(\pk_s,\sk_s)$\;
       \ForEach{$E(o_i)$}{
	       Generate random $r_i\in \Z_N^*$, and $R_i \in \Z_N^m$\;
	       $\En{s_i'}\larr \En{s_i}^{r_i}$ 
	            and $\vE(X_i')\larr \Rand(\vE(X_i), \vE(R_i))$\label{st.1}\;\label{op:rand}
           Set $E(o'_i) = \big(\En{s_i'}, \vE(X_i')\big)$\;
		}
        Compute the following:
               $\En[\pk_s]{r^{\ms -1}_1}, \En[\pk_s]{R_1}$, $\dots$,
               $\En[\pk_s]{r^{\ms -1}_n}, \En[\pk_s]{R_n}$\;
        Generate random permutation $\pi$, permute $E(o'_{\pi(i)})$, 
            $\En[\pk_s]{r^{-1}_{\pi(i)}}$, and $\En[\pk_s]{R_{\pi(i)}}$\;
        Sends $E(o'_{\pi(i)})$, 
            $\En[\pk_s]{r^{-1}_{\pi(i)}}, \En[\pk_s]{R_{\pi(i)}}$, and $\pk_s$ to $S_2$\;\label{op:per}
   }

 \ServerTwo{
	Receiving the list $E(o'_{\pi(i)})$ and $\En[\pk_s]{r_{\pi(i)}}, \En[\pk_s]{R_{\pi(i)}}$\;
	\ForEach{$E(o'_{\pi(i)})\in\big(\En{s_{\pi(i)}'}, \vE(X_{\pi(i)}')\big)$}{
        decrypt  $b \larr \En{s_{\pi(i)}'}$\;
	    \If{$b = 0$}{
			Remove this entry $E(T'_{\pi(i)})$ and  
            $\En[\pk_s]{r^{-1}_{\pi(i)}}, \En[\pk_s]{R_{\pi(i)}}$
		}
	}
	\ForEach{remaining items}{ %\xnote{to do !!!!}
	      Generate random $\gamma_i\in \Z^*_N$, and $\Gamma_i \in \Z_N^m$\;
		  $\En{\srand_i}\larr\En{s'_{\pi(i)}}^{\gamma_i}$ and 
		  $\vE(\Xrand_i)\larr \Rand\big(\vE(X'_{\pi(i)}), \vE(\Gamma_i)\big)$\;\label{op:re-rand}
          Set $E(\orand_i) = \big(\En{\srand_i}, \vE(\Xrand_i)\big)$\;	   

		 \tcc{\sql{evaluate} $\rrand_i = r^{-1}_{\pi(i)}\gamma^{-1}_i$}

          compute the following using $\pk_s$:
                $\En[\pk_s]{\rrand_i}\larr\En[\pk_s]{r^{-1}_{\pi(i)}}^{\gamma^{-1}_i}$\;                
        \tcc{\sql{evaluate} $\Rrand_i = R_{\pi(i)}+\Gamma^{-1}_i$}

        $\En[\pk_s]{\Rrand_i}\larr$$\En[\pk_s]{R_{\pi(i)}}\cdot\En[\pk_s]{\Gamma_i}$\label{op:add-rand}
	} %end of for
    Sends the $E(\orand_i)$ and $\En[\pk_s]{\rrand_i}, \En[\pk_s]{\Rrand_i}$ to $S_1$
  }  
 \ServerOne{
	\ForEach{$E(\orand_i)= (\En{\srand}, \vE(\Xrand_i))$ and $\En[\pk_s]{\rrand_i}, \En[\pk_s]{\Rrand_i}$}{
		use $\sk_s$ to decrypt $\En{\rrand_i}$ as $\rrand_i$, $\En{\Rrand_i}$ as $\Rrand_i$\;
		\tcc{homomorphically de-blind}
	     compute $\En{\sund_i}\larr\En{\srand_i}^{\rrand_i}$ and 
	     $\vE(\Xund_i)\larr\Rand(\vE(\Xrand_i), \vE(-\Rrand_i))$\;
        Set $E(o'_i) = (\En{\sund_i}, \vE(\Xund_i))$\;\label{op:de-rand}
	} %end of for
	\tcc{Suppose there're $l$ tuples left}
	Output the list $E(o'_1)$ ... $E(o'_{l})$.
 }
}
\caption{$\SecFilter\big(\{E(o_i)\}, \pk_\p, \sk_\p\big)$} \label{alg:sec-filter}
\end{algorithm}

In general, the procedure for query processing includes the following steps:
\begin{itemize}
\item Perform the join on $\ER_1$ and $\ER_2$.
	\begin{itemize}
     \item Receiving the token, $S_1$ runs the protocol with $S_2$ to generate all possible joined 
           tuples from two relations and homomorphically computes the encrypted scores.
     \item After getting all the joined tuples, 
     	$S_1$ runs $\SecFilter(\{E(o_i)\}, \pk_\p, \sk_\p)$ (see Algorithm~\ref{alg:sec-filter}),
	  	which is a protocol with $S_2$ to eliminate the tuples 
        that do not meet the join condition. 
        $S_1$ and $S_2$ then runs the protocol $\SecJoin$.
        $S_1$ finally produce the encrypted join tuples together with their scores.
	\end{itemize}
     \item $\EncSort$: after securely joining all the databases, 
     	$S_1$ then runs the encrypted sorting protocol to get the top-$k$ results.
\end{itemize}

The main $\secjoin$ is fully described in Algorithm~\ref{alg:join}. 
As mentioned earlier, since all the attributes are encrypted, we cannot simply
use the traditional join strategy. The merge-sort or hash based join
cannot be applied here since all the tuples have been encrypted 
by a probabilistic encryption.
Our idea for $S_1$ to securely produce the joined result is as follows: $S_1$ first combines all
the tuples from two databases (say, using nested loop) by initiating the protocol $\SecJoin$.
After that, $S_1$ holds  all the combined tuples together with the scores. 
The joined tuple have $m_1 + m_2$ many attributes (or user selected attributes). Those tuples that
meet the equi-join condition $\JC$ are successfully joined together with the encrypted scores 
that satisfy the $\Score$ function. However, for those tuples that do not meet the 
$\JC$, their encrypted scores are homomorphically computed as $\En{0}$ and their joined attributes are all $\En{0}$
as well.
$S_1$ holds all the possible combined tuples.
Next, the $\SecFilter$  eliminates all of those tuples that do
not satisfy $\JC$. It is easy to see that similar techniques from $\SecDupElim$ can be applied here.
At the end of the protocol, both $S_1$ and $S_2$ only learn
the final number of the joined tuples that meet $\JC$. 

%=============================================================
%=============== Comments ====================================
%=============================================================

Below we describe the $\SecJoin$ and $\SecFilter$ protocols in detail.
Receiving the $\token$, $S_1$ first parses it as the join condition $\JC$
$=(\ER_1.t_1$, $\ER_2.t_2)$, and the score function $\Score = \ER_1.t_3 +
\ER_2.t_4$.  Then for each encrypted objects $E(o_{1i}) \in \ER_1$ and
$E(o_{2i}) \in \ER_2$ in random order $S_1$ computes $t_{ij}\larr
\big(\EHL(x_{it_1})\bitminus\EHL(x_{jt_2})\big)^{r_{ij}}$, where $x_{it_1}$ and
$x_{jt_2}$ are the value for the $t_1$th and $t_2$th attribute for $E(o_{1i})$
and $E(o_{2j})$ separately.  $r_{ij}$ is randomly generated value in $\Z_N^*$,
then $S_1$ sends $t_{ij}$ to $S_2$.  Having the decryption key, $S_2$ decrypts
it to $b_{ij}$, which indicates whether the encrypted value $x_{it_1}$ and
$x_{jt_2}$ are equal or not. If $b_{ij} = 0$, then we have $x_{it_1}=x_{jt_2}$
which meets the join condition $\JC$. Otherwise, $b_{ij}$ is a random value.
$S_2$ then encrypts the bit $b_ij$ using a double layered encryption and sends
it to $S_1$, where $b_{ij} = 0$ if $x_{it_1}\ne x_{jt_2}$ otherwise $b_{ij} =
1$. Receiving the encryption, $S_1$ computes
$S_{ij} \larr \Etwo{b_{ij}}^{\En{x_{1t_3}}\En{x_{2t_4}}}$, where
$x_{1t_3}$ is the $t_3$-th attribute for $E(o_{1i})$ and 
$x_{2t_4}$ is the $t_4$-th attribute for $E(o_{2j})$.
Finally, $S_1$ runs the $\StripEnc$ to get the normal encryption $\En{s_{ij}}$.
Based on the construction, 
{
\begin{align*}
\En{s_{ij}} \sim \Enc\big(b_{ij} (x_{1t_3}+ x_{2t_4})\big) \text{, where } b_{ij} =
\begin{cases} 
1 &\mbox{if } x_{1t_1} =  x_{2t_2}\\ 
0 & \mbox{otherwise }
\end{cases} 
%//& = \Etwo{\prod_{i \ne k} t_i\cdot \En{x_i}}
\end{align*}
}
Finally, after fully combining the encrypted tuples, $S_1$ holds the joined
encrypted tuple as well as the encrypted scores, i.e. $E(T) = (\En{s_{ij}}$,
$\En{x_{11}}...\En{x_{1m_1}}$, $\En{x_{21}}, ..., \En{x_{2m_2}})$.
During the execution above, nothing has been revealed to $S_1$, $S_2$ only
learns the number of tuples meets the join condition $\JC$ but does not which
pairs since the $S_1$ sends out the encrypted values in random order.  Also,
notice that $S_1$ can only select interested attributes from $\ER_1$ and
$\ER_2$ when combining the encrypted tuples. Here we describe the protocol in
general.

After $\SecJoin$, assume $S_1$ holds $n$ combined the tuples with each tuple
has $m$ combined attributes, then for each of tuple $E(T_i) = (\En{s_i},
\vE(X_i))$, where $\vE(X_i)$ is the combined encrypted attributes
$\vE(X_i) = \angles{\En{x_{i1}}, ... \En{x_{im}}}$.  Next, $S_1$ tries to
blind encryptions in order to prevent $S_2$ from knowing the actual value.  
For each $E(T_i)$, $S_1$ generates random $r_i\in \Z_N^*$ and $R_i \in {\Z_N}^m$, and
blinds the encryption by computing following: $\En{s_i'}\larr
\En{s_i}^{r_i}$ and $\vE(X_i')\larr \Rand(\vE(X_i), \vE(R_i))$.  
Then $S_1$ sets $E(T'_i) = \big(\En{s_i'}, \Enc(X_i')\big)$.  
Furthermore, $S_1$ generates a new key pair for the 
paillier encryption scheme $(\pk_s,\sk_s)$ and encrypts the following: 
$L = \En{r^{-1}_1}_{\pk_s}, \En{R_1}_{\pk_s}, \dots, \En{r^{-1}_n}_{\pk_s}, \En{R_n}_{\pk_s}$, 
where each $r_i^{-1}$ is the inverse of $r_i$ in the group $\Z_N$.  
$S_1$ needs to encrypt the randomnesses in order to recover the original values, and we will explain this later.  
Moreover, $S_1$ generates a random permutation $\pi$, then permutes $E(T_{\pi(i)})$ and
$\En{r^{-1}_{\pi(i)}}_{\pk_s}, \En{R_{\pi(i)}}_{\pk_s}$ for $i = [1, n]$.
$S_1$ sends the permuted encryptions to $S_2$.

$S_2$ receives all the encryptions. For each received $E(T_{\pi(i)}') =
(\Enc(s'_{\pi(i)}), \vE(X'_{\pi(i)}))$, $S_2$ decrypts $\Enc(s_{\pi(i)}')$, if
$s_{\pi(i)}'$ is $0$ then $S_2$ removes tuple $E(T_{\pi(i)}')$ and corresponding
$\En{r^{-1}_{\pi(i)}}_{\pk_s}$, $\En{R_{\pi(i)}}_{\pk_s}$.
For the remaining tuples $E(T_{\pi(i)})$,
$S_2$ generates random $r'_i\in \Z^*_N$, and $R'_i \in \Z_N^m$, and compute the following
$\En{\srand_i}\larr\En{s'_{\pi(i)}}^{r'_i}$,
$\Enc(\Xrand_i)\larr \Rand\big(\vE(X'_{\pi(i)}), \vE(R'_i)\big)$ (see Algorithm~\ref{alg:join} line~\ref{op:re-rand}).
Then set $E(\Trand_i) = \big(\En{\srand_i}, \Enc(\Xrand_i)\big)$
Note that, this step prevents the $S_1$ from knowing which tuples have been removed.
Also, in order to let $S_1$ recover the original values,
$S_2$ encrypts and compute the following using $\pk_s$,
$\En{\rrand_i}\larr\En{r^{-1}_{\pi(i)}}_{\pk_s}^{{r'_i}^{-1}}$ 
and 
$\En{\Rrand_i}\larr$$\En{R'_{\pi(i)}}_{\pk_s}\cdot\En{R'_i}_{\pk_s}$.
Finally, $S_1$ sends the $E(\Trand_i)$ and $\En{\rrand_i}, \En{\Rrand_i}$ to $S_1$.
Assuming there're $n'$ joined tuples left. On the other side, $S_1$ receives the encrypted
tuples, for each $E(\Trand_i)= \En{\srand}, \Enc(\Xrand_i)$ and $\En{\rrand_i}, \En{\Rrand_i}$,
$S_1$ recovers the original values by computing the following:
compute $\En{s'_i}\larr\En{\srand_i}^{\rrand_i}$ and $\vE(X'_i)\larr\Rand(\vE(\Xrand_i),\vE(-\Rrand_i))$.
Notice that, for the remaining encrypted tuples and their encrypted scores,
 $S_1$ can successfully recover the original value, we show below that 
the encrypted scores $\En{\sund_j}$ is indeed some permuted $\En{s_{\pi(i)}}$: 
%=============
%=============

\newcommand{\explain}[1]{{(\small\text{#1})}}
\begin{align*}
\En{\sund_{j}} & \sim \En{\rrand_j\cdot\srand_j} 
		& \explain{see Alg.~\ref{alg:join} line~\ref{op:de-rand}}\\
& \sim \En{r_{\pi(j)}^{-1} r_j'^{-1} \cdot \srand_j} 
		& \explain{see Alg.~\ref{alg:join} line~\ref{op:add-rand}}\\
& \sim \En{r_{\pi(j)}^{-1}r_j'^{-1} \cdot s'_{\pi(j)} \cdot r'_j} 
		& \explain{see Alg.~\ref{alg:join} line~\ref{op:re-rand}}\\
& \sim \En{r_{\pi(j)}^{-1}r_j'^{-1} \cdot s_{\pi(j)} r_{\pi(j)} r'_j} 
		& \explain{see Alg.~\ref{alg:join} line~\ref{op:rand},\ref{op:per}}\\
& \sim \En{s_{\pi(j)}} & \ \
\end{align*}

If we don't want to leak the number of tuples that meet $ $, we can use a similar technique
from $\SecDedup$, that is, $S_2$ generates some random tuples and large enough random scores for 
the tuples to not satisfy $\JC$. In this way, nothing else has been leaked to the servers.
It is worth  noting that the technique sketched above not only can be used for top-$k$ join, but
for any equality join can be applied here.

%=============%=============

\subsubsection{Performance Evaluation}
We conduct the experiments under the same environment as in Section~\ref{experiment}. 
We use synthetic datasets to evaluate our sec-join operator $\secjoin$: we uniformly generate $R_1$ with
$5K$ tuples and $10$ attributes, and $R_2$ with $10K$ tuples and $15$ attributes. 
Since the server runs the \emph{oblivious join} that we discuss before over the encrypted databases, 
the performance of the $\secjoin$ does not depend on the parameter $k$. We test the effect of the joined 
attributes in the experiments. We vary the total number of the attribute $m$ joined together from two tables.
Figure~\ref{fig:join} shows performance when $m$ ranges from $5$ to $20$.

\begin{figure}[th!bp]
\centering
 	\includegraphics[width=0.6\textwidth]{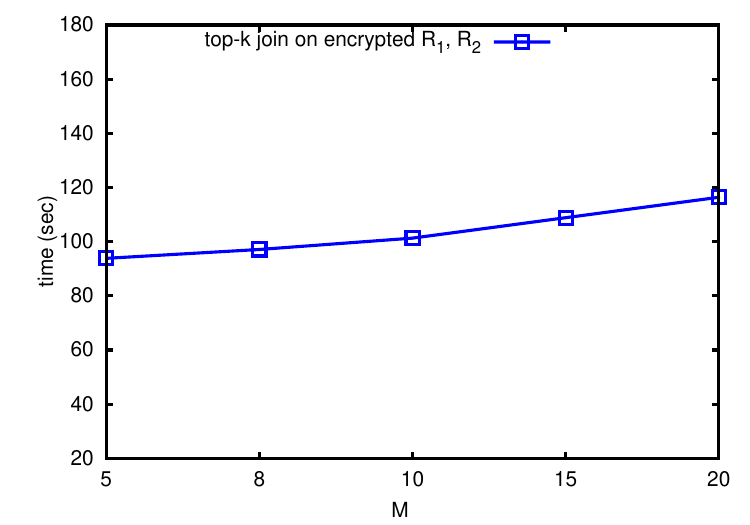}
    \caption{Top-$k$ join: $\secjoin$}\label{fig:join}
\end{figure}

Our operator $\secjoin$ is generically designed for joining any attributes between two relations. 
In practice, one would be only interested in joining two tables using primary-key-to-foreign-key join
or foreign-key-to-primary-key join. Our methods can be easily generalized to those joins. In addition,
one can also pre-sort the attributes to be ranked and save computations in the $\secjoin$ processing. 
We leave this as the future work of this thesis.

\subsection{Related works on Secure Join}
Many works have proposed for executing equi-joins over
encrypted data.
One recent work~\cite{DBLP:journals/tods/PangD14} proposed a 
privacy-preserving join on encrypted data. 
Their work mainly designed for the private join operation, 
therefore cannot support the top-$k$ join. 
In addition, in~\cite{DBLP:journals/tods/PangD14}, although 
the actual values for the joined records are not revealed, 
the server learns some equality pattern on the attributes 
if records are successfully joined.
In addition, \cite{DBLP:journals/tods/PangD14} uses 
bilinear pairing during their query processing, thus it 
might cause high computation overhead for large datasets. 
CryptDB~\cite{DBLP:conf/sosp/PopaRZB11} 
is a well-known system for processing queries on encrypted data. 
MONOMI~\cite{DBLP:journals/pvldb/TuKMZ13} is based on CryptDB with a 
special focus on efficient analytical query processing.
\cite{DBLP:conf/dbsec/KerschbaumHGKSST13} adapts the deterministic 
proxy re-encryption to provide the data confidentiality. 
The approaches using deterministic encryption directly leak the 
duplicates and, as a result, the equality information are leaked
to the adversarial servers.
\cite{DBLP:conf/sigmod/WongKCLY14} propose a secure query system 
SDB that protects the data confidentiality by decomposing the 
sensitive data into shares and can perform secure joins on 
shares of the data. However, it is unclear whether the system can 
perform top-$k$ queries over the shares of the data. 
Other solutions such as Order-preserving encryption 
(OPE)~\cite{BCN11, AKSX04} can also be adapted to secure 
top-$k$ join, however, it is commonly not considered very secure 
on protecting the ranks of the score as the adversarial server 
directly learns the order of the attributes.

\section{Top-k Query Processing Conclusion}
This paper proposes the first complete scheme that executes 
top-$k$ ranking queries over encrypted databases in the cloud. 
First, we describe a secure probabilistic data structure called 
encrypted hash list ($\EHL$) that allows a cloud server to 
homomorphically check equality between two objects without 
learning anything about the original objects.
Then, by adapting the well-known NRA algorithm, we propose 
a number of secure protocols that allow efficient top-$k$ 
queries execution over encrypted data.
The protocols proposed can securely compute the best/worst 
ranking scores and de-duplication of the replicated objects. 
Moreover, the protocols in this paper are stand-alone which 
means the protocols can be used for other applications 
besides the secure top-$k$ query problem.
We also provide a clean and formal security analysis of 
our proposed methods where we explicitly state the leakage 
of various schemes.
The scheme has been formally proved to be secure under the 
CQA security definition and it is experimentally evaluated 
using real-world datasets which show the scheme is 
efficient and practical.

\bibliographystyle{abbrv}
\bibliography{xianrui}

\end{document}